\newcommand{\mc}[1]{\mathcal{#1}}
\newcommand{\End}{\textnormal{End}}
\newcommand{\vac}{|\vec{0}\rangle }
\newcommand{\bvac}{\langle \vec{0}|}
\newcommand{\cre}[1]{a^\dagger_{#1}}
\newcommand{\creta}[2]{a^\dagger_{#1}(\eta_{#2})}
\LetLtxMacro{\ORIGselectlanguage}{\selectlanguage}
\DeclareRobustCommand{\selectlanguage}[1]{%
  \@ifundefined{alias@\string#1}
    {\ORIGselectlanguage{#1}}
    {\begingroup\edef\x{\endgroup
      \noexpand\ORIGselectlanguage{\@nameuse{alias@#1}}}\x}%
}
\newcommand{\definelanguagealias}[2]{%
  \@namedef{alias@#1}{#2}%
}
\newtheorem{theorem}{Theorem}
\newtheorem{proposition}{Proposition}
\newtheorem{lemma}{Lemma}
\newtheorem{corollary}{Corollary}
\newtheorem{remark}[theorem]{Remark}
\newtheorem{protocol}[theorem]{Protocol}
\definecolor{blueviolet}{rgb}{0.54,0.17,0.89}
\definecolor{mycitecolor}{rgb}{0,0.7,0.1}
\begin{document}
\title{Advancing Quantum Networking: Some Tools and Protocols for Ideal and Noisy Photonic Systems}

\author{Jason Saied}\thanks{jason.saied@nasa.gov}
\affiliation{Quantum Artificial Intelligence Laboratory, NASA Ames Research Center, Moffett Field, CA 94035, USA}

\author{Jeffrey Marshall}
\affiliation{Quantum Artificial Intelligence Laboratory, NASA Ames Research Center, Moffett Field, CA 94035, USA}
\affiliation{USRA Research Institute for Advanced Computer Science, Mountain View, CA 94043, USA}

\author{Namit Anand}
\affiliation{Quantum Artificial Intelligence Laboratory, NASA Ames Research Center, Moffett Field, CA 94035, USA}
\affiliation{KBR, Inc., 601 Jefferson St., Houston, TX 77002, USA}

\author{Shon Grabbe}
\affiliation{Quantum Artificial Intelligence Laboratory, NASA Ames Research Center, Moffett Field, CA 94035, USA}

\author{Eleanor G. Rieffel}
\affiliation{Quantum Artificial Intelligence Laboratory, NASA Ames Research Center, Moffett Field, CA 94035, USA}

\date{\today}
\begin{abstract}
Quantum networking at many scales will be critical to future quantum technologies and experiments on quantum systems. Photonic links enable quantum networking. They will connect co-located quantum processors to enable large-scale quantum computers, provide links between distant quantum computers to support distributed, delegated, and blind quantum computing, and will link distant nodes in space enabling new tests of fundamental physics. Here, we discuss recent work advancing photonic tools and protocols that support quantum networking. We provide analytical results and numerics for the effect of distinguishability errors on 
key photonic circuits; we considered a variety of error models and developed new metrics for benchmarking the quality of generated photonic states. We review a distillation protocol by one of the authors that mitigates distinguishability errors. We also review recent results by a subset of the authors on the efficient simulation of photonic circuits via approximation by coherent states. We study some interactions between the theory of universal sets, unitary t-designs, and photonics: while many of the results we state in this direction may be known to experts, we aim to bring them to the attention of the broader quantum information science community and to phrase them in ways that are more familiar to this community.  We prove, translating a result from representation theory, that there are no non-universal infinite closed $2$-designs in $U(\mc{V})$ when $\dim V \geq 2$. As a consequence, we observe that linear optical unitaries form a $1$-design but not a 2-design. Finally, we apply a result of Oszmaniec and Zimbor\'{a}s to prove that augmenting the linear optical unitaries with any nontrivial SNAP gate is sufficient to achieve universality. 
\end{abstract}


\maketitle
\tableofcontents

\section{Introduction}

Quantum networking at many scales will be critical to future quantum technologies and experiments on quantum systems. The properties of light mean that quantum network links will be photonic, whether they connect co-located quantum processors to enable large-scale quantum computers; distant quantum computers to support distributed, delegated, and blind quantum computing; or distant nodes in space enabling new tests of fundamental physics. Here, we discuss recent results, tools and protocols for advancing robust photonic systems that support quantum networking. 

These results and tools were developed in the course of examining use cases for quantum networks, and in evaluating the resources that would be required to implement quantum networks that could support these use cases. The use cases we considered included photonic architectures for quantum computing, distributed quantum computing including extending results on distributed algorithms for distributed data \cite{kerger2023mind}, design of Wigner friend based experiments for future tests of fundamentals of quantum theory \cite{wiseman_thoughtful_2022}, and delegated quantum computing protocols including blind quantum computation. Our results and tools include techniques for the simulation of photonic information processing systems, particularly simulations under realistic errors, techniques for mitigating distinguishability errors, the effect of distinguishability errors on 
fusion \cite{browne_resource-efficient_2005} and $n$-GHZ state generation protocols \cite{bartolucci2021creation}, 
and advances in theory for benchmarking photonic quantum systems. 
Some of our results may be known to experts, but have yet to appear in the quantum information sciences literature. 
Part of the aim of this paper is to bring such results to the attention of the broader quantum information science community and to phrase them in ways that are accessible to this community. 

The key contributions of this paper include:
\begin{itemize}
\item Overviews of Fock state methods, and a coherent state method developed by a subset of the current authors \cite{marshall_simulation_2023}, for simulation of photonic quantum systems particularly under noise. 
These are given in Secs.~\ref{sec:loss}, \ref{sec:distinguishability}, and \ref{sec:coherent}. 
\item In Sec.~\ref{sec:distillation}, an overview of a protocol by one of the current authors for distilling a set of more indistinguishable photons from a larger set of less distinguishable photons \cite{marshall_distillation_2022}, reducing distinguishability errors at the outset.
\item Evaluation of the effect of distinguishability errors on circuits for Type I and II fusion \cite{browne_resource-efficient_2005}, generalized $n$-GHZ state measurements, and the generation of $n$-GHZ entangled states \cite{bartolucci2021creation}. 
The analytical results are given in Secs.~\ref{sec:fusion_dist_statements} and \ref{sec:dist ghz}. 
For the case of Bell pairs (GHZ states with $n=2$), we have provided detailed numerics in Section~\ref{sec:numerics}; these numerics considered a variety of error models and developed new metrics for benchmarking the quality of generated photonic states. 
\item A proof, translating a result from representation theory \cite{katz2004larsen}, that there are no non-universal infinite closed $2$-designs in $U(d)$ for $d\geq 2$. As a corollary, we observe that the linear optical unitaries form a $1$-design but not a $2$-design. Finally, we apply a result of Oszmaniec and Zimbor\'{a}s \cite{oszmaniec2017universal} to prove that augmenting the linear optical unitaries with any nontrivial SNAP gate is sufficient to achieve universality. 
These results are in Sec.~\ref{sec:designs supersection}. 
\item In Sec.~\ref{sec:discussion}, a brief overview of two future directions related to this work. First we discuss the applicability to larger systems of our analysis techniques and results regarding distinguishability errors. We propose a low-order approximation of the effect of distinguishability errors, as elaborated upon in Remark~\ref{rem:stabilizer sims}. We then consider the problem of how many SNAP gates are required to promote linear optics to an approximate $2$-design and provide some numerical results for small system sizes. 
\end{itemize}

\section{Distinguishability and loss errors}\label{sec:loss dist}
Loss and distinguishability are two of the most dominant contributions to noise and errors in optical settings, both of which are known to reduce the classical complexity of computational tasks such as Boson sampling \cite{BS-loss-aaronson, BS-distinguishable-renema}. As such, it is important to be able simulate the effects of each in software. 

Our starting point is Fock basis simulation. 
Generally we will be interested in simulating an $m$ mode system with a fixed number $n$ of (for now) identical photons. 
As such the Hilbert space can be represented via the second quantization as $\mathcal{H}=\mathrm{span}\{|n_1, \dots, n_m\rangle \}_{\sum_i n_i = n}$, 
with dimension $d_{n,m}:=\binom{n+m-1}{n}$. 
The initial state is typically composed of $n$ single photons in $m\ge n$ modes, $|1\rangle^{\otimes n}|0\rangle^{\otimes (m-n)}$. 
Evolution occurs via a linear optical network, which maps creation operators\footnote{$a^\dag |n\rangle = \sqrt{n+1}|n+1\rangle$.} to a linear combination
\begin{align}
    U a_i^\dag U^\dag = \sum_{j=1}^m u_{ji} a_j^\dag,
\end{align}
where $u$ is an $m\times m$ unitary matrix, called the transfer matrix. It is known that any linear optical transformation can be implemented, to arbitrary accuracy, by a sequence of $O(m^2)$ beamsplitters\footnote{In fact,  any \textit{single} non-trivial beamsplitter (with not all entries real) can generate all of linear optics \cite{boulandGenerationUniversalLinear2014} (with $O(m^2)$ of these still required).} \cite{reckExperimentalRealizationAny1994}.

The full state output of such a network can be computed in time  $O(n d_{n,m})$, and space $O(d_{n,m})$ \cite{heurtelStrongSimulationLinear2023, marshall_simulation_2023}, by multiplying the evolved creation operators:
\begin{align}
    U|1\rangle^{\otimes n}|0\rangle^{\otimes(m-n)} = U \prod_{i=1}^n a_i^\dag U^\dag |\vec{0}\rangle = \prod_{i=1}^n \left(\sum_{j=1}^m u_{ji}a_j^\dag \right) |\vec{0}\rangle.
\end{align}
In the first step we write $|1\rangle = a^\dag |0\rangle$, and use that any linear optical unitary acting on the vacuum is the identity, i.e. $U^\dag |\vec{0}\rangle = |\vec{0}\rangle$. In the second step we insert resolutions of the identity $I = U^\dag U$ after each creation operator.

In this work we will assume measurements are all of the photon-number-resolving (PNR) type, which, in the absence of error, counts the number of photons in each measured mode. That is, it is equivalent to performing a measurement in the Fock basis.

In Secs. \ref{sec:loss} and \ref{sec:distinguishability}, we will discuss how to modify such a simulation to include losses and distinguishability errors.
This is followed by Sec.~\ref{sec:distillation}, in which we review work of Marshall \cite{marshall_distillation_2022} that gives a protocol for distilling less distinguishable photons. 
This protocol was discovered using the simulation techniques of Sec.~\ref{sec:distinguishability}. 
Then in Sec.~\ref{sec:distinguishability impact}, we consider the effect that distinguishability has on circuits for fusion, GHZ state generation, and related operations. 
The first several subsections give analytical results; Sec.~\ref{sec:numerics} gives numerics for Bell state generation in the presence of distinguishability, using the techniques of Sec.~\ref{sec:distinguishability}.

\subsection{Simulation of Loss}\label{sec:loss}
Loss has several physical mechanisms, such as absorption in optical fiber and components, imperfect photon sources, and in photo-detection \cite{photonic-quantum-information}. 
Loss is a non-unitary process, resulting in a mixed state over different lost photon number sectors \cite{oszmaniec_classical_2018}. We can write its action on a matrix element as
\begin{align}
    \Lambda_\eta(|n_1\rangle \langle n_2|) = \eta^{\frac{n_1+n_2}{2}}\sum_{k=0}^{\min\{n_1,n_2\}} \sqrt{\binom{n_1}{k}\binom{n_2}{k}}\left(\frac{1-\eta}{\eta}\right)^k |n_1-k\rangle \langle n_2 - k|.
    \label{eq:loss_channel}
\end{align}
Here $1-\eta$ is the single photon loss probability. 

It can readily be observed that the process Eq.~\eqref{eq:loss_channel} is equivalent to applying a beamsplitter between the mode of interest and a fictitious mode containing 0 photons, where the transmisivity (probability for photon to transition) is $1-\eta$ \cite{oszmaniecRandomBosonicStates2016}. Consider a beamsplitter that acts on modes $a^\dag, b^\dag$ as $a^\dag \rightarrow (\sin \theta a^\dag + \cos \theta b^\dag).$ Then we can see
\begin{align}
    |n,0\rangle = \frac{1}{\sqrt{n!}}(a^\dag)^n|0,0\rangle \rightarrow \frac{1}{\sqrt{n!}}(\sin \theta a^\dag + \cos \theta b^\dag)^n|0,0\rangle = \sum_{k=0}^n \sqrt{\binom{n}{k}} \sin^{n-k}(\theta) \cos^{k}(\theta)|n-k, k\rangle.
    \label{eq:loss-beamsplitter}
\end{align}
Upon tracing out the fictitious mode we get the same channel, with $\eta = \sin^{2}\theta$.

In practice this is simulated by sampling and keeping in memory only a single pure state. Consider for simplicity, an arbitrary $n$ photon pure state of $m$ modes, $|\psi_n\rangle$. We append a vacuum mode and apply a beamsplitter between it and the lossy mode of interest, which splits the state into different photon number sectors:
\begin{align}\label{eq:mixed loss}
    |\psi_n\rangle \rightarrow \sum_{k=0}^n c_k |\psi_{n-k}\rangle|k\rangle \rightarrow \sum_k |c_k|^2 |\psi_{n-k}\rangle \langle \psi_{n-k}|.
\end{align}
Here the $c_k$ are complex amplitudes ($\sum_k |c_k|^2=1$) that depend on $\theta$ of Eq.~\eqref{eq:loss-beamsplitter}, and $|\psi_{n-k}\rangle$ is notation to indicate it is \textit{some} $n-k$ photon state. After this step we can probabilistically sample $|\psi_{n-k}\rangle$ from the distribution $\{|c_k|^2\}_k$ (tracing out the fictitious mode). 

Another useful observation is that symmetric loss commutes through linear optics \cite{oszmaniec_classical_2018}. For example, $\Lambda_\eta^{(1)}\Lambda_\eta^{(2)}B_{12} = B_{12}\Lambda_\eta^{(1)}\Lambda_\eta^{(2)}$, where $B_{12}$ is a beamsplitter between modes 1,2, and $\Lambda_\eta^{(i)}$ is a loss channel on mode $i$. Coupling this with fact\footnote{We see this by applying Eq.~\eqref{eq:loss_channel} twice with different parameters: Consider the term that has $p$ photons subtracted after both channels. The coefficient is $(\eta_1\eta_2)^{(n_1+n_2)/2}\sum_{k=0}^p \sqrt{\binom{n_1}{k}\binom{n_2}{k}\binom{n_1-k}{p}\binom{n_2-k}{p}}\left(\frac{1-\eta_1}{\eta_1}\right)^k\left(\frac{1-\eta_2}{\eta_2}\right)^{p-k}\eta_2^{-k} = \dots = (\eta_1\eta_2)^{(n_1+n_2)/2} \sqrt{\binom{n_1}{p}\binom{n_2}{p}}\left(\frac{1-\eta_1\eta_2}{\eta_1\eta_2}\right)^p$, i.e. the same as a single loss channel with parameter $\eta=\eta_1\eta_2$.} $\Lambda_{\eta_1}\Lambda_{\eta_2}=\Lambda_{\eta_1\eta_2}$, certain circuits with symmetry can be simplified dramatically by commuting the losses. 
For an example, see Fig.~\ref{fig:loss-commute}, where all loss can be commuted to the beginning of the circuit. 
In such a case, following \eqref{eq:mixed loss}, we may simply sample input states with fewer photons rather than actually applying a loss channel. 
Further, if the probabilities $|c_k|^2$ of \eqref{eq:mixed loss} are known explicitly, we may often perform calculations for each possible input $\ketbra{\psi_{n-k}}$ separately, then use linearity to calculate the desired quantities. (See Sec.~\ref{sec:numerics} for this technique applied to the case of distinguishability.)

In a low-order error model, where error rates are low and therefore multiple nearby errors are assumed to be rare, 
photon loss is generally heralded. Thus in the calculations below we will generally ignore loss; however, for a more comprehensive treatment involving higher-order error terms, it will be essential to take loss into account. 

\begin{figure}
    \centering
    \includegraphics[width=0.75\columnwidth]{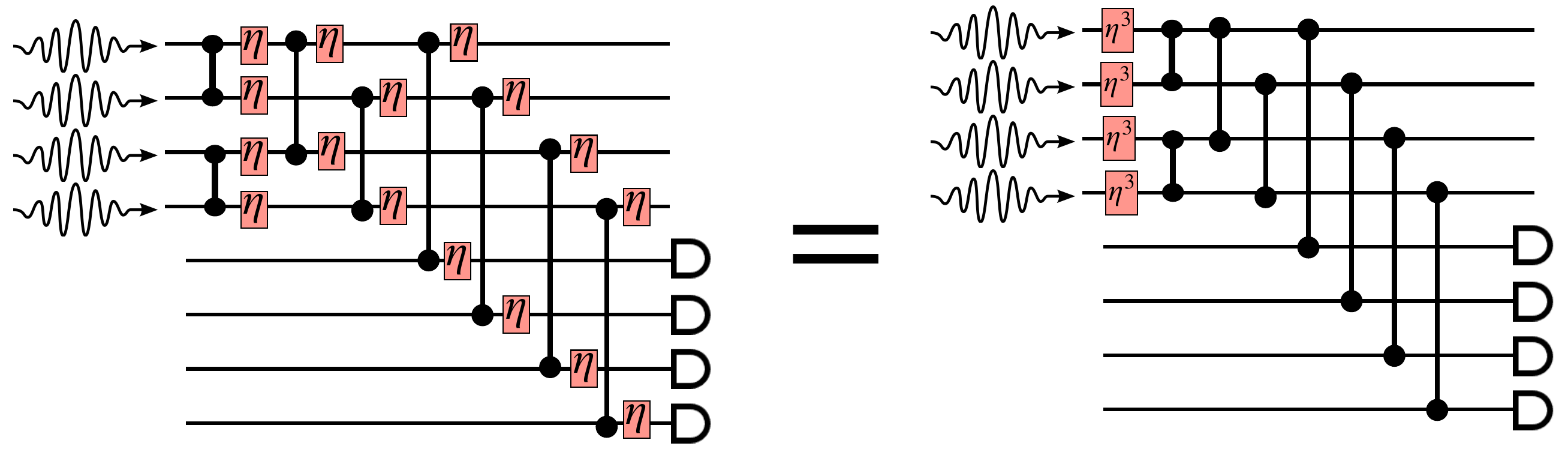}
    \caption{Example of circuit with symmetric losses that commute to the start of the circuit. Instead of applying 16 independent loss channels, it is equivalent to applying 4 loss channels with a modified parameter $\eta \rightarrow \eta^3$. In this diagram, the two qubit `gates' are beamsplitters, and the boxes with enclosed $\eta$ represent a loss channel with loss rate $1-\eta$.}
    \label{fig:loss-commute}
\end{figure}

\subsection{Simulation of Distinguishability}\label{sec:distinguishability}
Distinguishability typically arises from single photons generated by different sources, or by dispersion e.g., by travelling in fiber \cite{fanEffectDispersionIndistinguishability2021}. As the name suggests, (partial) photon distinguishability is characterized by two photons whose internal states are not identical, leading to non-ideal overlap $|\langle \psi|\phi\rangle|^2<1$. For example, two states with spectra sharply peaked at different wavelengths will be almost perfectly distinguishable. On the other hand, truly identical photons are perfectly indistinguishable, and no experiment could tell them apart.

The degree of distinguishability is typically measured by a Hong-Ou-Mandel (HOM) type experiment. Current `HOM dip' experiments shows the achievable visibility\footnote{The visibility between two pure states is the overlap squared, $V:=|\langle \phi|\psi\rangle|^2$.} between independent sources is in the regime of 90-99\%, depending on the source \cite{wangResearchHongOuMandelInterference2019, ollivierHongOuMandelInterferenceImperfect2021}. 

To begin a more detailed discussion, let's pick a simple model where two independent photon sources each output single photons in deterministic state $|\psi_0\rangle, |\psi_1\rangle$. Without loss of generality we can write $|\psi_1\rangle = c_0|\psi_0\rangle + c_1|\psi_0^\perp\rangle$, where $|c_0|^2+|c_1|^2=1$. Typically we are interested in the case where $\epsilon:=1-|c_0|^2$ is `small'. We can now represent this a la the second quantization as $|\psi_1\rangle=c_0|1,0\rangle + c_1|0,1\rangle = (c_0 a_0^\dag + c_1a_1^\dag)|0,0\rangle$. As an illustration of the effect this has, consider performing the HOM experiment with these states, where the 50:50 beamsplitter takes convention $a^\dag \rightarrow (a^\dag + b^\dag)/\sqrt{2}, b^\dag \rightarrow (a^\dag - b^\dag)/\sqrt{2}$. Notice that we now have four modes to consider; two spatial modes (e.g. the optical fibres) which we will label $a,b$, and the two internal modes, labeled $0,1$. The four relevant creation operators are therefore $a_0^\dag, a_1^\dag, b_0^\dag, b_1^\dag$. 

The two photon initial state evolves as
\begin{align}
    a_0^\dag (c_0 b_0^\dag + c_1 b_1^\dag) \rightarrow (a_0^\dag + b_0^\dag)(c_0 (a_0^\dag - b_0^\dag) + c_1 (a_1^\dag - b_1^\dag))/2 = \frac{c_0}{2} ((a_0^\dag)^2 - (b_0^\dag)^2) + \frac{c_1}{2}(a_0^\dag a_1^\dag - a_0^\dag b_1^\dag + b_0^\dag a_1^\dag - b_0^\dag b_1^\dag).
\end{align}
For $c_1=0$ we get the regular HOM result with photon bunching in either spatial mode. In the general case however, we see we now detect coincidence events (observing photons in both the $a$ and $b$ spatial modes at the same time), with probability $|c_1|^2/2$.

Under the reasonable assumption that no transformation of interest (linear optics and PNRD) will make two internal modes less distinguishable\footnote{For example, consider applying some (possibly non-unitary) transformation $\mathcal{E}$ on $|\psi_0^\perp\rangle$. The statement in the main text is equivalent to $\langle \psi_0| \mathcal{E}(|\psi_0^\perp\rangle \langle \psi_0^\perp|)|\psi_0\rangle=|\langle \psi_0|\psi_0^\perp\rangle|^2=0.$} (e.g. transforming $|\psi_0^\perp\rangle$ to $|\psi_0\rangle$), statistics generated by the above model are identical to those from a mixed state picture, where the state with the `error' is
\begin{align}\label{eq:dist error model}
    \rho(\epsilon) = (1-\epsilon) |\psi_0\rangle \langle \psi_0| + \epsilon |\psi_0^{\perp} \rangle \langle \psi_0^{\perp} |.
\end{align}
This is motivated by the fact that in linear optics, orthogonal internal states will never interfere with each other and so the relative magnitude of each `sector' remains unchanged. This can also be straightforwardly generalized to handling different numbers of `error mode', e.g. $\sum_i \epsilon_i |\psi_i\rangle \langle \psi_i|$, where $\langle \psi_i|\psi_j\rangle = \delta_{ij}$.  Two common models of multi-photon distinguishability are known as the \textit{Orthogonal Bad Bits} (OBB) model and the \textit{Random Source} model \cite{sparrowQuantumInterferenceUniversal2017}. 
We discuss various error models in more detail in Sec.~\ref{sec:numerics} below: namely, we consider the OBB model and two others that we call the Same Bad Bits (SBB) and Orthogonal Bad Pairs (OBP) models. 

\begin{figure}
    \centering
    \includegraphics[width=0.6\columnwidth]{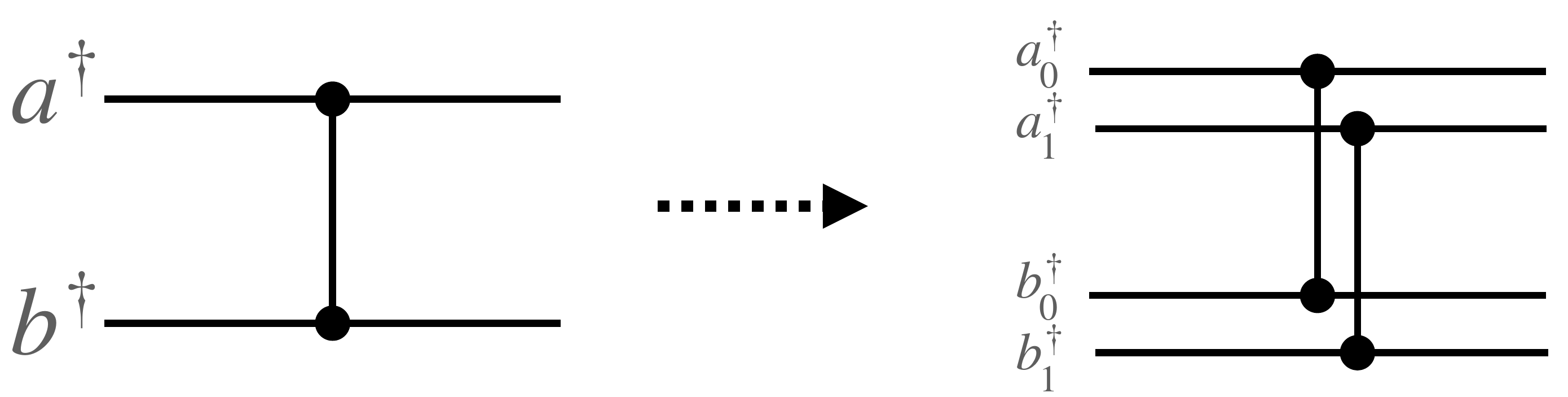}
    \caption{Simulating photons with two internal modes. By `copying' the circuit, additional internal (noise) modes can be simulated. As each circuit copy is independent, these simulations can be carried out separately. }
    \label{fig:circ_copy}
\end{figure}

Under these same assumptions, we can decompose the Hilbert space as a direct sum of non-interacting pieces, $\mathcal{H} = \oplus_{i=0}^k \mathcal{H}_i$, where there are $k$ different error types. In particular, $\mathcal{H}_i$ is the Hilbert space for photons of `type' $i$ (and we let $i=0$ index the `ideal' photon subspace). Operations in this framework are of the form $M = \oplus_i m_i$, which for state agnostic operations \cite{gonzalez-andradePolarizationWavelengthagnosticNanophotonic2019} reduces to $m_i=m,\forall i$. In this case, a simulation with $k$ errors can be visualized by copying the circuit of interest $k$ times, i.e., each physical rail can host $k+1$ internal modes and represented in the basis $\{|n_0, \dots, n_k\rangle\}$. A simple example with a single beamsplitter and one error mode is shown in Fig.~\ref{fig:circ_copy}.

To handle PNR measurements, we perform a POVM over \textit{all} modes, i.e., $\{|n_0, \dots, n_k\rangle \langle n_0, \dots n_k|\}$, for a single PNRD. However, we must in general do further post-processing, since the internal mode index sampled is typically not visible to the experimenter\footnote{To be clear, in principle, an internal mode index \textit{could} be observable (e.g. related to the frequency of the photon), however standard PNRDs or click/no-click detectors, as we assume in this work, do not resolve this information.} (e.g., in a single photon experiment, we would not learn which $i=0, \dots, k$ was detected). Indeed, we'll assume only the total number of photons $N=\sum_{i=0}^k n_i$ is accessible. Post-selection on a particular pattern can therefore have many associated `microscopic' configurations that must be traced out, in general leading to mixedness.

In terms of simulation, we can treat each `error subspace' entirely separately, and simulate independently the photons in each class. For example, if there is one error injected along some optical fibre, we will simulate separately the $n-1$ `ideal' photons, and then the 1 photon error state. At the measurement step we must perform post-processing to determine which configurations on each state can yield a desired measurement outcome, as discussed above.

So far we have mostly been considering the case where distinguishability between photons is introduced at the source(s). In LOQC however, it is typical that certain photons may propagate several `clock cycles', whereas others are introduced later. Via dispersion, this can introduce distinguishability, even if the different photon sources are perfect \cite{fanEffectDispersionIndistinguishability2021}. As such it is important to also consider introducing photon distinguishability during the execution of a simulation, for example transferring photons from one internal mode to another.

\subsection{Distinguishability distillation}\label{sec:distillation}

 It is highly desirable to have a heralded source of highly pure and indistinguishable photons. There are two main methods to mitigate distinguishabiilty effects; engineering the source, and unheralded spectral filtering. The first is technically challenging, often requiring physics insights, and the latter is often not acceptable for use in LOQC related tasks. Photon distillation \cite{sparrowQuantumInterferenceUniversal2017,marshall_distillation_2022} is a potential solution to this problem, in that it can be used to generate (in principle) heralded single photons with arbitrary indistinguishability.
 Moreover, it can be implemented with standard linear optics and PNRD. There is of course a cost to this, and that is in the number of photons that must be utilized in order to distill a single purer one.
 
 The most efficient scheme currently known is that of Ref.~\citenum{marshall_distillation_2022} which requires $O((\epsilon/\epsilon')^2)$ photons to distill a state from distinguishability $\epsilon \rightarrow \epsilon'$, where $\rho(\epsilon) = (1-\epsilon)\rho_0 + \epsilon \rho_0^\perp$, with $\rho_0$ the ideal single photon state. The proposed scheme is shown in Fig.~\ref{fig:distillation}, which distills one photon from three photons upon valid post-selection, on average requiring three iterations to run, consuming $\sim$9 photons in total.
 
 The intuition behind the scheme is fairly simple; the rate at which successful post-selections occur depends strongly on the initial state. If three identical photons are present, they are post-selected on at a rate $1/3$. If however there is an error, such as $|\psi_0, \psi_0^\perp, \psi_0\rangle$, then this occurs only at rate $1/9$. This means that errors are naturally filtered out, and reduced in magnitude by about a factor of 3 per iteration, $\epsilon \rightarrow \epsilon/3$.
 
 There is an additional consideration, and that is the robustness of the scheme to other errors. Whilst it is shown in Ref. \citenum{marshall_distillation_2022} there is some built-in robustness to certain detection errors and control errors, there is no real protection from photon losses, which are typically the largest error source. Although this does not particularly pose a threat to reducing the quality of the output photons (since losses can typically be heralded), it does make the scheme less efficient, i.e., in practical settings with losses the circuit will need to be run more times to achieve success. There is therefore a cost-benefit analysis to consider; the benefit of the added number of distillation iterations to reducing distinguishability, Vs the added cost of longer circuits being more likely to fail due to photon losses.

There are several questions one can consider based on this work, such as incorporation to resource state generation circuits (i.e., circuits that are naturally tolerant to such errors), the cost of distillation in the presence of realistic errors (losses, multi-photon etc.), and generalizations of the scheme to higher photon numbers.

 \begin{figure}
     \centering
     \includegraphics[width=0.4\columnwidth]{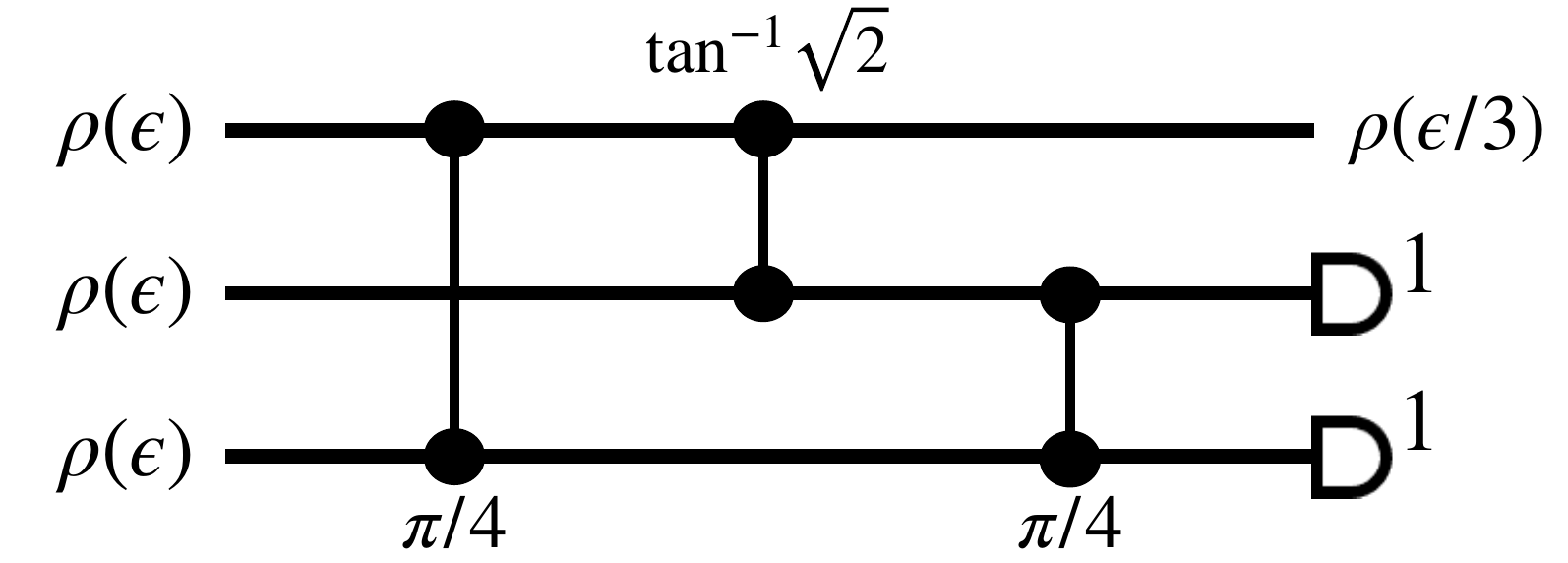}
     \caption{3 photon distillation scheme of Ref.  \citenum{marshall_distillation_2022}. As indicated, the output is post-selected upon measurement result $(1,1)$. The angles represent the transmission angle in the beamsplitters (see Ref. \citenum{marshall_distillation_2022} for a description).}
     \label{fig:distillation}
 \end{figure}

\subsection{Impact of errors on notable circuits}\label{sec:distinguishability impact}

In this section, we will be concerned with operations on photonic dual-rail states and their interactions with distinguishability errors. A dual-rail qubit has state space spanned by $\ket{\mathbf{0}} = \ket{1,0}, \ket{\mathbf{1}} = \ket{0,1}$, a subspace of the Fock space of $1$ photon in $2$ modes. We write $\ket{\pm} = \frac{1}{\sqrt{2}}\ket{1,0}\pm\ket{0,1}$. 
We will use the standard notation $X,Y,Z$ for the dual-rail single-qubit Pauli operators; these are linear optical unitaries, where $Z$ applies a phase to the second mode, $X$ swaps the two modes, and $Y=iXZ$. 

Since linear optics is not universal, one must obtain universality via other means (see the discussions in Sec.~\ref{sec:designs supersection}). 
Browne and Rudolph \cite{browne_resource-efficient_2005} introduced Type I and Type II \emph{fusion} operations, nondeterministic post-selection operations depicted in Figure~\ref{fig:fusion}, for this purpose. 
There are many variants of fusion, notably boosted fusion (see e.g. Refs.~\citenum{ewert_3_2014, bartolucci2021creation}), which increases the success rate at the cost of additional resources. 
In the present work, we will not consider these generalizations, instead analyzing the standard fusion operations and the generalization of Type II fusion to an $n$-GHZ state projection.

\subsubsection{Distinguishability error analysis for (generalized) fusion}\label{sec:fusion_dist_statements}

In this section, we give a detailed study of Type I and II fusion operations and their generalization to GHZ state analyzers in the presence of distinguishability. 
We begin by analyzing the circuit in Figure~\ref{fig:fusion}(a), 
state its consequences for Type I fusion, 
then proceed to the GHZ state analyzer introduced in Refs. \citenum{gimeno2016towards, bartolucci2021creation} as part of a GHZ state generation protocol. 
This GHZ state analyzer is given in Figure~\ref{fig:ghz_analyzer}. 
These results are mostly straightforward calculations, with the most difficult part being the notation. 
We choose to state these results here in detail, with all signs worked out, so that they can be easily found in the literature. 
The proofs are sketched in Appendix~\ref{sec:rsg proofs}. 

Following the model of distinguishability presented in \eqref{eq:dist error model} and used in Sec.~\ref{sec:distinguishability}, we assume that the $i$th photon, $i\geq 1$, has internal state 
    $\rho_i(\epsilon_i) = (1-\epsilon_i) |\psi_0\rangle \langle \psi_0| + \epsilon_i \ketbra{\psi_i}$,
where $\braket{\psi_0}{\psi_i}=0$ for $i\geq 1$, 
the ideal state $\ket{\psi_0}$ is shared by all photons, and $\epsilon$ and $\ket{\psi_i}$ may vary. 
Using this mixed state framework, it suffices to consider photons that are either \emph{ideal} (in state $\ket{\psi_0}$) or \emph{fully distinguishable} from the ideal state (in some state $\ket{\psi_i}$ orthogonal to $\ket{\psi_0}$), then take linear combinations. 
As a consequence, if we begin with such photons and apply linear optical circuits (without measurement), we may assume their internal states are pure states.

 \begin{figure}
     \centering
     \includegraphics[width=0.8\columnwidth]{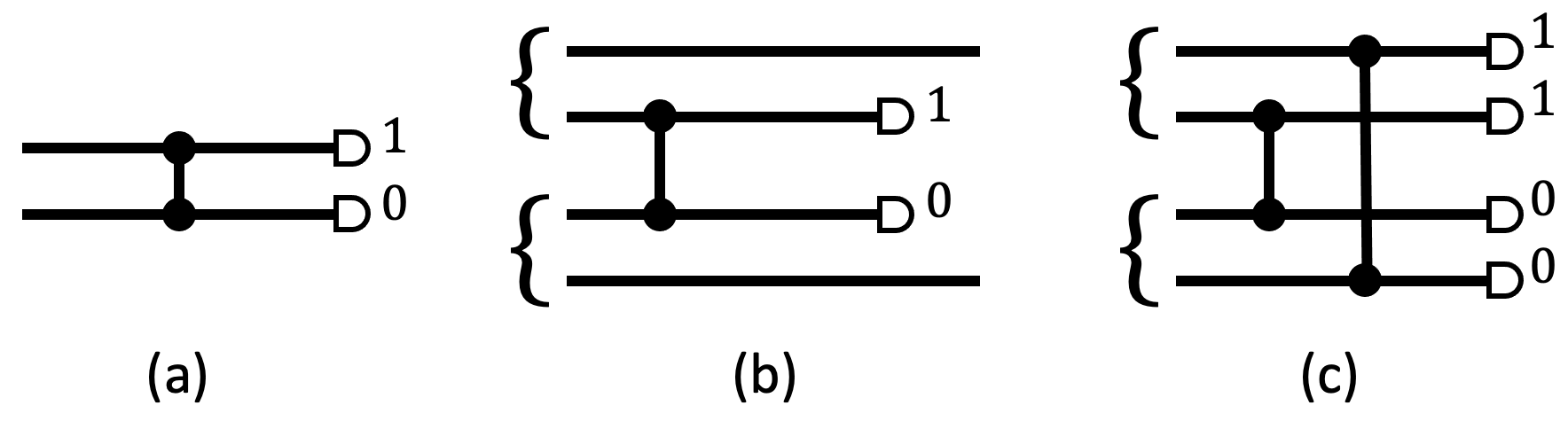}
     \caption{(a) Generalized $X$ measurement. A beamsplitter with transfer matrix $H$ is applied to two modes (generally part of a larger circuit, not pictured). PNRD is then performed on both modes, and we post-select upon measuring exactly $1$ photon total. The figure shows the example with measurement pattern $(1,0)$; pattern $(0,1)$ is also permissible. 
     Often, future steps in a computation will be altered depending on the pattern obtained here. 
     (b) Type $I$ fusion. Dual-rail states are input in each pair of modes (indicated by curly braces). We then apply the generalized $X$ measurement circuit on the two middle modes. Upon successful post-selection, the two unmeasured modes are treated as a single dual-rail qubit. 
     (c) Type $II$ fusion. We input $2$ dual-rail states, then perform $2$ generalized $X$ measurements. 
     If successful, this circuit is equivalent to measuring the pair of commuting two-qubit dual-rail observables $X\otimes X$ and $Z\otimes Z$, with the eigenvalues determined by the obtained measurement patterns. 
     This is discussed in greater generality below.
     }
     \label{fig:fusion}
 \end{figure}

To begin, we consider the circuit in Figure~\ref{fig:fusion}(a), in which two modes undergo a Hadamard beamsplitter and are then measured, and we post-select for measurement of exactly one photon. 
We call this a \emph{generalized $X$ measurement}, 
as the same operation applied to a dual-rail qubit is precisely a measurement in the $X$ basis. 
This operation \emph{heralds success} if one of the desired patterns, $(1,0)$ or $(0,1)$, is measured (projecting to $|+\rangle$ or $|-\rangle$ for a dual-rail qubit). 

\begin{lemma}\label{lemma:x measurement}
Consider a state $\ket{\phi}$ on $M\geq 2$ modes, where two modes are given as input to a generalized $X$ measurement. Suppose that this generalized $X$ measurement heralds success. 
Only the terms of $\ket{\phi}$ involving exactly one photon in the input modes will affect the output state. 
\end{lemma}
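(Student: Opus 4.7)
The plan is a direct appeal to photon number conservation by the beamsplitter combined with the definition of the post-selection. First, decompose the input as $\ket{\phi} = \sum_{k\geq 0} \ket{\phi_k}$, where $\ket{\phi_k}$ is the component with exactly $k$ photons supported in the two modes that feed the generalized $X$ measurement. If distinguishability is being tracked as in Fig.~\ref{fig:circ_copy}, then each of those two spatial modes is really a bundle of $k+1$ internal copies, and the count defining $\ket{\phi_k}$ is the total over all internal copies, since PNRD does not resolve the internal mode index.

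The key step is the observation that the Hadamard beamsplitter $H$ is linear optical and acts only within each internal-mode pair, mapping $a_i^\dag, b_i^\dag$ to linear combinations of $a_i^\dag, b_i^\dag$. In particular, it commutes with the total number operator on the two input modes, so $H\ket{\phi_k}$ still contains exactly $k$ photons in those modes. Successful heralding requires the PNRD outcome $(1,0)$ or $(0,1)$, i.e., the total photon number in the two measured modes equals $1$. Therefore the POVM element corresponding to success is supported on the one-photon sector of those modes, and it annihilates $H\ket{\phi_k}$ for every $k\neq 1$. Only the $k=1$ contribution survives, which is exactly the statement of the lemma.

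I do not anticipate a real obstacle; the only thing to be careful about is bookkeeping for internal modes in the distinguishability framework, to ensure that ``exactly one photon in the input modes'' is interpreted as the sum $\sum_i (n_{a_i}+n_{b_i}) = 1$ over all internal indices. Once that convention is fixed, the argument reduces to two facts already established in Secs.~\ref{sec:loss dist}--\ref{sec:distinguishability}: linear optical unitaries preserve total photon number within the set of spatial modes they act on, and PNRDs (possibly followed by post-processing over unresolved internal indices) project onto fixed total-photon-number subspaces of the measured modes.
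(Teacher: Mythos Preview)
Your argument is correct and follows the same route as the paper: the paper's proof is a single sentence stating that the lemma is immediate from the fact that success is heralded if and only if the measurement pattern $(1,0)$ or $(0,1)$ is obtained. Your version simply makes explicit the underlying step (the Hadamard beamsplitter preserves the total photon number on the two input modes, so the success POVM annihilates every sector with $k\neq 1$) and adds the internal-mode bookkeeping; this is a faithful expansion of the paper's one-line justification rather than a different approach.
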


Given the assumptions of Lemma~\ref{lemma:x measurement}, 
it suffices to consider only those terms of $\ket{\phi}$ with state $\ket{1,0}$ or $\ket{0,1}$ in the input modes (which we index by $0$ and $1$ for convenience). 
In principle, there may be many such terms, of the form $\cre{0}(\xi)\vac$ or $\cre{1}(\xi)\vac$ with various possible internal states $\ket{\xi}$. 
(Here $\cre{i}(\xi)$ is the creation operator for a photon in mode $i$ with internal state $\ket{\xi}$.) 
We will focus on the special case in which there is at most one such term for each of $\cre{0}, \cre{1}$. 
By using linearity and appropriate simplifications, this will be sufficient for our purposes. 

\begin{proposition}\label{prop:x measurement}
With the assumptions of Lemma~\ref{lemma:x measurement}, 
further assume as above that the relevant terms of $\ket{\phi}$ (as in the Lemma) are expressed as $c_{10}\ket{\phi_{10}}\cre{0}(\xi)\vac + c_{01}\ket{\phi_{01}}\cre{1}(\xi')\vac$, 
where $\ket{\phi_{10}}, \ket{\phi_{01}}$ describe the state on the non-input modes, the input modes are labeled by $0$ and $1$, and $\ket{\xi}, \ket{\xi'}$ are the internal states of the appropriate photons (pure, for convenience). 
Success is heralded with probability $|c_{10}|^2 + |c_{01}|^2$.
The output state is (up to normalization)
\begin{align}\label{eq:dist x measurement general}
    \frac{1}{2}\left( |c_{10}|^2\ketbra{\phi_{10}} + |c_{01}|^2\ketbra{\phi_{01}}\right)
    \pm \textnormal{Re}
    \left( \braket{\xi'}{\xi}c_{10}\overline{c_{01}}\ketbra{\phi_{10}}{\phi_{01}}
    \right),
\end{align}
where $\textnormal{Re}(\rho) = \frac{1}{2}(\rho + \rho^\dagger)$ and the sign is $+1$ if $(1,0)$ is measured and $-1$ if $(0,1)$ is measured. 
In particular, if the photons corresponding to the terms $\ket{1,0}$ and $\ket{0,1}$ are mutually indistinguishable, then the output state is 
the pure state $\frac{1}{\sqrt{2}}\left(c_{10}\ket{\phi_{10}} \pm c_{01}\ket{\phi_{01}} \right)$. 
If the photons are mutually fully distinguishable (and both $c_{10}, c_{01}\neq 0$), then the output state is (up to normalization) the mixed state
    $\frac{1}{2}\left( |c_{10}|^2\ketbra{\phi_{10}} + |c_{01}|^2\ketbra{\phi_{01}}  \right).$
\end{proposition}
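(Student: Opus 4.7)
The plan is to apply Lemma~\ref{lemma:x measurement} to reduce to the two-term input, then propagate this reduced state through the Hadamard beamsplitter and the PNR measurements by explicit calculation. First I would use the Lemma to discard all terms of $\ket{\phi}$ except the two given, writing the relevant part of the pre-measurement state as $c_{10}\ket{\phi_{10}}\cre{0}(\xi)\vac + c_{01}\ket{\phi_{01}}\cre{1}(\xi')\vac$. Then I would apply the Hadamard beamsplitter via the transformations $\cre{0}(\xi)\to \tfrac{1}{\sqrt{2}}(\cre{0}(\xi)+\cre{1}(\xi))$ and $\cre{1}(\xi')\to \tfrac{1}{\sqrt{2}}(\cre{0}(\xi')-\cre{1}(\xi'))$, noting that the internal labels $\xi,\xi'$ are preserved because linear optics acts trivially on the internal degree of freedom. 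Grouping by the spatial creation operator yields a pure state of the form $\tfrac{1}{\sqrt{2}}\ket{\phi_{10}}c_{10}\cre{0}(\xi)\vac+\tfrac{1}{\sqrt{2}}\ket{\phi_{01}}c_{01}\cre{0}(\xi')\vac + (\text{analogous }\cre{1}\text{ terms with a sign flip on }c_{01})$.

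Next I would implement the PNR measurement. Since the detector is blind to the internal mode, I would model the $(1,0)$ outcome by summing over Kraus operators indexed by an orthonormal internal basis $\{\ket{\mu}\}$, applying $\bvac a_0(\mu)$ and then tensoring back the post-measurement vacuum on the input modes. For each $\mu$ this projects the post-beamsplitter state onto $\tfrac{1}{\sqrt{2}}\bigl(c_{10}\braket{\mu}{\xi}\ket{\phi_{10}} + c_{01}\braket{\mu}{\xi'}\ket{\phi_{01}}\bigr)$; forming the outer product, summing over $\mu$, and using $\sum_\mu |\braket{\mu}{\xi}|^2=\sum_\mu|\braket{\mu}{\xi'}|^2=1$ along with $\sum_\mu \braket{\xi'}{\mu}\braket{\mu}{\xi}=\braket{\xi'}{\xi}$ collapses the sum to exactly the expression~\eqref{eq:dist x measurement general} with the $+$ sign. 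For the $(0,1)$ pattern the same argument applies but with a relative minus sign inherited from the Hadamard convention, producing the $-$ sign in front of the Re term. The success probability is then obtained by tracing each output and adding the two; the cross terms involving $\braket{\xi'}{\xi}$ carry opposite signs and cancel, leaving $|c_{10}|^2+|c_{01}|^2$.

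The two limiting cases follow immediately. In the indistinguishable regime $\braket{\xi'}{\xi}=1$, the output is exactly $\tfrac{1}{2}(c_{10}\ket{\phi_{10}}\pm c_{01}\ket{\phi_{01}})(\bar c_{10}\bra{\phi_{10}}\pm\bar c_{01}\bra{\phi_{01}})$, which one verifies by expanding and recognizing the Hermitian symmetrization on the cross terms. In the fully distinguishable regime $\braket{\xi'}{\xi}=0$, the Re term vanishes and only the diagonal mixture survives.

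The main obstacle is not conceptual but clerical: keeping track of signs coming from the beamsplitter convention across both measurement patterns, and being careful that the trace-out of the internal degree of freedom is performed at the density-matrix level rather than the state-vector level (since $\ket{\xi}$ and $\ket{\xi'}$ can be non-orthogonal, the post-measurement state is genuinely mixed in the intermediate regime and it would be tempting but wrong to write it as a pure state). Organizing the calculation so that $\braket{\xi'}{\xi}$ appears only through the single identity $\sum_\mu \braket{\xi'}{\mu}\braket{\mu}{\xi}=\braket{\xi'}{\xi}$ makes the bookkeeping manageable and makes the limiting cases transparent.
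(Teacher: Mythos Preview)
Your proposal is correct and follows essentially the same route as the paper: reduce via Lemma~\ref{lemma:x measurement}, propagate through the Hadamard beamsplitter, then trace out the internal degree of freedom at the density-matrix level. The only cosmetic difference is that you implement the trace-out via an explicit Kraus sum over an orthonormal internal basis $\{\ket{\mu}\}$, whereas the paper writes the post-beamsplitter state in the logical--internal basis and takes a partial trace; these are the same computation.
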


In words, we note that an ideal generalized $X$ measurement leads to a superposition of the $\ket{\phi_{10}}$ and $\ket{\phi_{01}}$ terms, and distinguishability instead leads to a mixture of the two. 

This can be applied to Type I fusion, which is simply a generalized $X$ measurement on a pair of modes, each making up half of a dual-rail qubit (Fig.~\ref{fig:fusion}b): 
\begin{proposition}\label{prop:type i}
Consider a tensor product state $\ket{\phi^{(L)}}\otimes \ket{\phi^{(R)}}$, where two modes of each (corresponding to a pair of dual-rail qubits) are given as input to Type I fusion. 
Assume that the fusion (equivalently, the generalized $X$ measurement) heralds success, and further that the four input modes do in fact correspond to a pair of dual-rail qubits. 
Similar to above, the terms affecting a successful Type I fusion have the form (with modes appropriately permuted)
    $c_{10}^{(L)}c_{10}^{(R)}\ket{\phi_{10}^{(L)}}\ket{\phi_{10}^{(R)}}\ket{1,0,1,0} + c_{01}^{(L)}c_{01}^{(R)}\ket{\phi_{01}^{(L)}}\ket{\phi_{01}^{(R)}}\ket{0,1,0,1},$
where we again assume that the internal states of the input photons are pure states. 
If the relevant input photons are all ideal (mutually indistinguishable), then the output state is (up to normalization)
\begin{align}
    c_{10}^{(L)}c_{10}^{(R)}\ket{\phi_{10}^{(L)}}\ket{\phi_{10}^{(R)}}\ket{1,0} \pm c_{01}^{(L)}c_{01}^{(R)}\ket{\phi_{01}^{(L)}}\ket{\phi_{01}^{(R)}}\ket{0,1},
\end{align} 
where the sign is $+1$ for measurement pattern $(1,0)$ and $-1$ for $(0,1)$. 
If the photons are fully distinguishable, then for either measurement outcome, the output state is the even mixture of the two unentangled states 
\begin{align}
    c_{10}^{(L)}c_{10}^{(R)} \ket{\phi_{10}^{(L)}}\ket{\phi_{10}^{(R)}}\ket{1,0},\, c_{01}^{(L)}c_{01}^{(R)}\ket{\phi_{01}^{(L)}}\ket{\phi_{01}^{(R)}}\ket{0,1,0,1}.
\end{align}
\end{proposition}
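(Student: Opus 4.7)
The plan is to reduce the claim to Proposition~\ref{prop:x measurement} by enumerating the input terms compatible with a successful generalized $X$ measurement on the two middle modes. First I would apply Lemma~\ref{lemma:x measurement} to restrict attention to those terms of $\ket{\phi^{(L)}}\otimes\ket{\phi^{(R)}}$ that carry exactly one photon in the fusion input pair. Using the dual-rail hypothesis, each factor decomposes on its two dual-rail modes as $c_{10}\ket{\phi_{10}}\ket{1,0} + c_{01}\ket{\phi_{01}}\ket{0,1}$, with any other spectator modes absorbed into $\ket{\phi_{10}}$ and $\ket{\phi_{01}}$.

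Expanding the tensor product and counting photons on the middle pair, only two of the four cross terms contribute a single middle photon: the $(L,R)=(\ket{1,0},\ket{1,0})$ term, giving the four-mode configuration $\ket{1,0,1,0}$ with coefficient $c_{10}^{(L)}c_{10}^{(R)}$, and the $(L,R)=(\ket{0,1},\ket{0,1})$ term, giving $\ket{0,1,0,1}$ with coefficient $c_{01}^{(L)}c_{01}^{(R)}$. These are precisely the two surviving terms asserted in the proposition. I would then invoke Proposition~\ref{prop:x measurement}, taking the two middle modes as its input modes and letting the outer modes together with all spectator data play the role of its ``non-input'' components. The ideal and fully distinguishable limits of that proposition immediately produce, respectively, the coherent superposition and the even mixture asserted in the claim.

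The only real obstacle is notational bookkeeping. One must fix the identification between the two middle modes and the labels ``$0$''/``$1$'' of Proposition~\ref{prop:x measurement}, and then check which sign $\pm$ inherited from that proposition corresponds to which measurement pattern $(1,0)$ or $(0,1)$ in the physical circuit. A small subtlety is that the middle-pair photon in the $\ket{1,0,1,0}$ surviving term lies in the second middle mode (so that term plays the ``$c_{01}$'' role of Proposition~\ref{prop:x measurement}), while the $\ket{0,1,0,1}$ term plays the ``$c_{10}$'' role. Once these roles are matched, the sign convention in the claim follows directly, possibly after absorbing a global minus sign in one of the two measurement patterns, and no further calculation is required.
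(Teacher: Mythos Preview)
Your proposal is correct and matches the paper's approach exactly: the paper's proof is the single sentence ``Proposition~\ref{prop:type i} is a direct application of Proposition~\ref{prop:x measurement} to the Type I fusion circuit in Figure~\ref{fig:fusion}(b).'' Your discussion of the bookkeeping---identifying which surviving term plays the $c_{10}$ versus $c_{01}$ role in Proposition~\ref{prop:x measurement} and reconciling the sign (up to an irrelevant global phase in the $(0,1)$ outcome)---is precisely what that direct application entails, just spelled out in more detail than the paper chose to include.
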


We now apply Proposition~\ref{prop:x measurement} to the study of the generalized GHZ state analyzer in Figure~\ref{fig:ghz_analyzer}. 
The $n$-GHZ analyzer involves Hadamard beamsplitters on the mode pairs $\mc{P} = \{(1,2), (3,4), \dots, (2n-3, 2n-2), (2n-1, 0)\}$. (Note that the Hadamard is not symmetric between the two modes: our convention is that the second mode in the pair receives the nontrivial phase. In particular, note that with this convention it is mode $0$, \emph{not} $2n-1$, that receives the phase.) This is followed with post-selection as described in the figure. 

 \begin{figure}
     \centering
     \includegraphics[width=0.25\columnwidth]{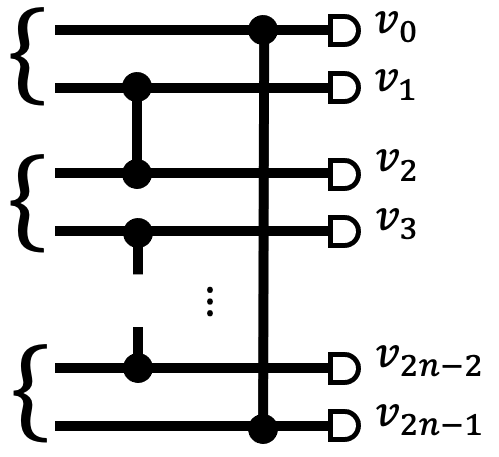}
     \caption{The $n$-GHZ state analyzer: Type II fusion is the special case with $n=2$. 
     We apply a Hadamard beamsplitter for each $p\in\mc{P}$, then post-select for patterns $m = (m_0, \dots, m_{2n-1})$ with all $m_{2i+1} + m_{2i+2} = 1$ (including $m_{0} + m_{2n-1} = 1$, due to the convention $m_{2n} = m_0$). 
     This operation may be viewed as $n$ generalized $X$ measurements between the modes of $n$ dual-rail input states, where the $n$-GHZ state analyzer is considered successful if and only if all the generalized $X$ measurements are.
     }
     \label{fig:ghz_analyzer}
 \end{figure}

We begin with the following lemma, which is immediate from the definition of the $n$-GHZ state analyzer. 
We use the notation $\vec{m} = (m_0, \dots, m_{2n-1})$, $\ket{\vec{m}} = \ket{m_0, \dots, m_{2n-1}}$. 
For convenience, we use the convention $m_{2n} = m_0$ (to match with the mode pairs $\mc{P}$ above). 
We will make frequent reference to the $2n$-tuples 
$(1,0)^n = (1,0,1,0, \dots, 1,0), (0,1)^n = (0,1,0,1,\dots, 0,1)$. 

In the following lemma, the internal states of the input photons are suppressed from the notation. 
These states do not affect the content or the proof. 

\begin{lemma}\label{lemma:ghz filtering}
Suppose the final $2n$ modes of the state $\ket{\phi}$ 
are given as input to the $n$-GHZ state analyzer, and that the analyzer heralds success. 
Given this successful heralding, only terms of the following form may affect the output state: 
\begin{align}\label{ghzfusion:projected}
    \sum_{\vec{m}\in\mathcal{S}} c_{\vec{m}}\ket{\phi_{\vec{m}}}\ket{\vec{m}},
\end{align}
where $\mathcal{S} = \{\vec{m} : c_{\vec{m}}\neq 0, m_{2i+1} + m_{2i + 2} = 1\textnormal{ for all }i\}$. 
If, for all $\vec{m}\in\mathcal{S}$ and all $0\leq i <n$, we have $(m_{2i}, m_{2i+1})\not\in\{(1,1), (0,0)\}$, then only the two terms $\vec{m}=(1,0)^n, (0,1)^n$ may affect the measurement (and only if both have nonzero coefficient). 
\end{lemma}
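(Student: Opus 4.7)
The plan is to decompose the $n$-GHZ state analyzer into its $n$ constituent generalized $X$ measurements, one per mode pair in $\mc{P}$, and apply Lemma~\ref{lemma:x measurement} to each in turn.

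For the first claim, I would observe that by definition the analyzer succeeds if and only if every constituent generalized $X$ measurement succeeds. Invoking Lemma~\ref{lemma:x measurement} on the pair $(2i+1, 2i+2)$ (interpreted cyclically, so that the pair indexed by $i = n-1$ is $(2n-1,0)$), only those terms of $\ket{\phi}$ carrying exactly one photon in those two modes may affect the output. Intersecting the $n$ constraints forces $m_{2i+1} + m_{2i+2} = 1$ for every $0 \leq i < n$, which is precisely the defining condition of $\mathcal{S}$ (and in particular forces $m_j \in \{0,1\}$ for all $j$). Thus only terms of the form appearing in~\eqref{ghzfusion:projected} can contribute.

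For the second claim, I would reason combinatorially. Under the additional hypothesis, each surviving $\vec{m}\in\mathcal{S}$ satisfies $(m_{2i}, m_{2i+1}) \not\in \{(1,1), (0,0)\}$, and since all $m_j \in \{0,1\}$ this is equivalent to $m_{2i} + m_{2i+1} = 1$ for all $0 \leq i < n$. Combining with the $\mathcal{S}$-constraint $m_{2i+1} + m_{2i+2} = 1$ gives $m_{2i+2} = m_{2i}$ for every $i$ (indices modulo $2n$). Iterating yields $m_0 = m_2 = \cdots = m_{2n-2}$ and $m_1 = m_3 = \cdots = m_{2n-1}$. Finally the wrap-around constraint $m_{2n-1} + m_0 = 1$ reduces to $m_0 + m_1 = 1$, so either $(m_0, m_1) = (1,0)$ (giving $\vec{m} = (1,0)^n$) or $(m_0, m_1) = (0,1)$ (giving $\vec{m} = (0,1)^n$), and no other $\vec{m}\in\mathcal{S}$ can survive.

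The main obstacle is simply keeping track of the cyclic index convention $m_{2n} = m_0$ and the wrap-around pair $(2n-1, 0)$; once that bookkeeping is fixed, the argument is an elementary propagation of parity constraints around the cycle.
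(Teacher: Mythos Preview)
Your proposal is correct and follows essentially the same approach as the paper, which states only that the lemma ``is a direct consequence of the success conditions for the $n$-GHZ state analyzer.'' You have simply spelled out those details: applying Lemma~\ref{lemma:x measurement} to each disjoint pair in $\mc{P}$ for the first claim, and propagating the resulting parity constraints around the cycle for the second.
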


In other words, many patterns that are not legitimate dual-rail qubit patterns, such as those with two photons in a single mode, are filtered out. 
If we are able to rule out the additional patterns $(0,0)$ and $(1,1)$, which is natural in the $n$-GHZ state generation below, then we in fact filter out all but two terms. 
Once we make this reduction, we will want to more carefully consider the internal states of each photon. Assume that the photon in mode $2i+1$ of $(0,1)^n$ (respectively mode $2i+2$ of $(1,0)^n$) has internal state $\ket{\xi_i}$ (respectively $\ket{\xi_i'}$). We then express the relevant terms as
\begin{align}\label{eq:reduced form}
    c_{01}\ket{\phi_{01}}\ket{\xi_0, \dots, \xi_{n-1}}\ket{0,1}^{\otimes n} + c_{10}\ket{\phi_{10}}\ket{\xi_0', \dots, \xi_{n-1}'}\ket{1,0}^{\otimes n},
\end{align}
where $c_{01} = c_{(0,1)^n}$, etc. 
Note, as discussed before Proposition~\ref{prop:x measurement}, that it is an \emph{assumption} that we can express the relevant internal states as pure states. This assumption will be natural in practice. 

\begin{theorem}\label{thm:ghz analyzer}
Suppose the final $2n$ modes of a state 
are given as input to the $n$-GHZ state analyzer, and that the analyzer heralds success. 
As above, assume that for all $\vec{m}$ with coefficient $c_{\vec{m}}\neq 0$ and all $0\leq i <n$, we have $(m_{2i}, m_{2i+1})\not\in\{(1,1), (0,0)\}$, 
and that we can express the relevant terms of the input state in the form \eqref{eq:reduced form}. 
We have the following: 
\begin{enumerate}
\item  The $n$-GHZ state analyzer projects onto the $+1$ eigenspaces of the dual-rail qubit observables $Z_j Z_{j+1}$ for $0\leq j\leq n-1$. 
\item Let $s_{\textnormal{odd}}$ be the number of photons measured by the $n$-GHZ state analyzer in odd-indexed modes $1, \dots, 2n-1$. Up to normalization, the output of the $n$-GHZ state analyzer is
\begin{align}
    \dfrac{1}{2}\left( |c_{10}|^2\ketbra{\phi_{10}} + |c_{01}|^2\ketbra{\phi_{01}}\right)
     + (-1)^{s_\textnormal{odd}} \textnormal{Re}\left( \left(\prod_i \braket{\xi_i}{\xi_i'}\right) \ketbra{\phi_{10}}{\phi_{01}}
     \right).
\end{align}
\item Suppose that all $\braket{\xi_i}{\xi_i'} = 1$. 
Then the $n$-GHZ state analyzer measures the dual-rail qubit observable $X_0\cdots X_{n-1}$ with eigenvalue $(-1)^{s_\textnormal{odd}}$ and yields (up to normalization) the output state 
\begin{align}\label{eq:ghz obtained}
    \frac{1}{\sqrt{2}}\left(c_{10}\ket{\phi_{10}} +(-1)^{s_{\textnormal{odd}}} c_{01}\ket{\phi_{01}}\right).
\end{align}

\item Suppose that for some $i$, $\braket{\xi_i}{\xi_i'} = 0$. 
Then the measurement value of the observable $X_0\cdots X_{n-1}$ is fully randomized. The GHZ analyzer outputs, up to normalization, the mixed state
\begin{align}
    \frac{1}{2}\left(|c_{10}|^2\ketbra{\phi_{10}} + |c_{01}|^2\ketbra{\phi_{01}}\right).
\end{align}
\end{enumerate}
\end{theorem}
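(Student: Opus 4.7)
The plan is to use Lemma~\ref{lemma:ghz filtering} together with the stated assumption on $(m_{2i}, m_{2i+1})$ to reduce the input to the two-term form \eqref{eq:reduced form}, then apply the $n$ (mutually commuting) Hadamard beamsplitters and project onto a valid measurement pattern, and finally trace out the internal states of the measured photons to obtain the output density matrix. Claim~1 then follows immediately: the only surviving input-mode Fock patterns are $(1,0)^n$ and $(0,1)^n$, which in dual-rail encoding are $\ket{\mathbf{0}}^{\otimes n}$ and $\ket{\mathbf{1}}^{\otimes n}$, spanning the common $+1$ eigenspace of all $Z_j Z_{j+1}$.

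For Claim~2, write the reduced state as $c_{01}\ket{\phi_{01}}A_{01}\vac + c_{10}\ket{\phi_{10}}A_{10}\vac$ with $A_{01} = \prod_{k=0}^{n-1}\cre{2k+1}(\xi_k)$ and $A_{10} = \prod_{k=0}^{n-1}\cre{2k+2 \bmod 2n}(\xi_k')$; each factor lives in exactly one Hadamard pair. Under the Hadamards, each $A_{01}$ factor becomes $\frac{1}{\sqrt 2}(\cre{2k+1}(\xi_k) + \cre{2k+2 \bmod 2n}(\xi_k))$ and each $A_{10}$ factor becomes $\frac{1}{\sqrt 2}(\cre{2k+1}(\xi_k') - \cre{2k+2 \bmod 2n}(\xi_k'))$, where the minus signs come from the convention that the second mode of each pair is phased. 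Projecting onto a valid pattern $\vec m$, with $s_k := m_{2k+1}\in\{0,1\}$, selects one operator from each pair; the minus signs in $A_{10}$ accumulate to give an overall relative sign between the two branches of the form $(-1)^{s_\textnormal{odd}}$ (modulo a global $(-1)^n$ absorbed into normalization). Finally, the output density matrix is obtained by tracing out the internal states of the measured photons: because $A_{01}$ and $A_{10}$ leave photons in identical modes but with internal states $\xi_k$ versus $\xi_k'$, the diagonal blocks acquire factor $1$ while the cross term picks up the factor $\prod_i \braket{\xi_i}{\xi_i'}$. Combining these produces the formula of Claim~2.

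Claims~3 and 4 then follow as direct corollaries of Claim~2. When all $\braket{\xi_i}{\xi_i'}=1$ (Claim~3), the density matrix of Claim~2 has rank one and factors as the outer product of $\frac{1}{\sqrt 2}(c_{10}\ket{\phi_{10}} + (-1)^{s_\textnormal{odd}} c_{01}\ket{\phi_{01}})$; this is an eigenvector of $X_0\cdots X_{n-1}$ with eigenvalue $(-1)^{s_\textnormal{odd}}$, since $X_0\cdots X_{n-1}$ swaps $\ket{\mathbf{0}}^{\otimes n}$ and $\ket{\mathbf{1}}^{\otimes n}$ in the pre-measurement state and hence the roles of $\ket{\phi_{10}}$ and $\ket{\phi_{01}}$. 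When some $\braket{\xi_i}{\xi_i'}=0$ (Claim~4), the cross term vanishes and only the diagonal mixture of $\ket{\phi_{10}}$ and $\ket{\phi_{01}}$ remains; with no coherence between these two branches, any subsequent $X_0\cdots X_{n-1}$ measurement is uniformly random.

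The main obstacle is the bookkeeping: carefully handling the cyclic identification of mode $2n$ with mode $0$ (so that, as emphasized in the paper, the pair $(2n-1,0)$ has mode $0$, rather than $2n-1$, receiving the phase), accumulating the minus signs from each Hadamard correctly over all $n$ pairs, and compiling the internal-state inner products into a single factor $\prod_i \braket{\xi_i}{\xi_i'}$ on the cross term. An alternative but equivalent strategy would iterate Proposition~\ref{prop:x measurement} one pair at a time, propagating the density matrix through each generalized $X$ measurement; the same sign count emerges from that route as well.
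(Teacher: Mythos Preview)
Your argument is correct and essentially matches the paper's; in fact, the paper's own proof sketch is even briefer than yours, simply noting that the analyzer is a concatenation of $n$ generalized $X$ measurements and that the result follows by iteratively applying Proposition~\ref{prop:x measurement} and tracking the signs---precisely the alternative route you describe in your final sentence. Your primary ``all-at-once'' computation is just the unrolled version of that iteration. One minor quibble: a factor such as $(-1)^n$ cannot be ``absorbed into normalization''; rather, it is an overall phase on the post-measurement pure state, which disappears at the density-matrix level and hence does not affect the formula in Claim~2.
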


\subsubsection{Distinguishability error analysis for GHZ state generation}\label{sec:dist ghz}

We now analyze a family of protocols for the generation of $n$-GHZ states from single photons. The protocol in the case $n=3$ is due to Ref. \citenum{varnava2008good}, and the generalization was given in Ref. \citenum{gimeno2016towards}. 
We elaborate upon the treatment in Ref. \citenum{bartolucci2021creation}. 
We also note that the $n=2$ (Bell pair) case was analyzed in the presence of distinguishability errors by Sparrow \cite{sparrowQuantumInterferenceUniversal2017}. 

\begin{figure}
 \centering
 \includegraphics[width=0.4\columnwidth]{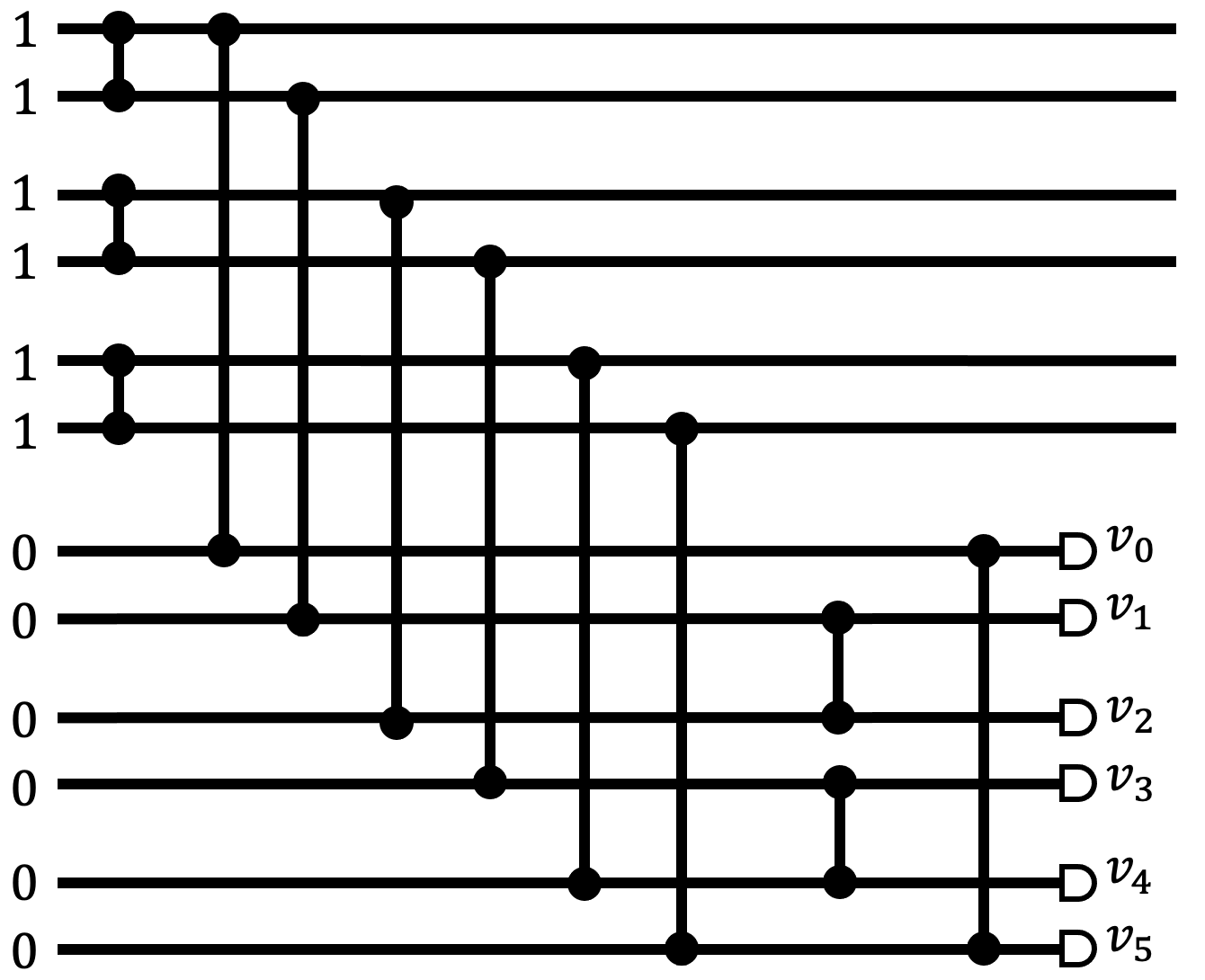}
 \caption{Preparation of an $n$-GHZ state for $n=3$. The circuit is divided into two halves, each with $6=2n$ modes. 
 There are $3$ main steps. 
 First, in the top half of the circuit, we input a photon in each mode, then perform an $H$ beamsplitter between each pair of modes. 
 In the ideal case, this prepares a $2$-photon N00N state in each of the $n=3$ mode pairs. 
 Second, for all $0\leq i <6 (=2n)$, we perform an $H$ beamsplitter between the $i$th mode in the top half and the $i$th mode in the bottom half. 
 Finally, we perform a $3$-GHZ state analyzer on the bottom half (see Figure~\ref{fig:ghz_analyzer}), corresponding to $3$ generalized $X$ measurements as pictured. 
 We say the measurement has succeeded if and only if each generalized $X$ measurement succeeds; in other words, we post-select for a measurement pattern $v = (v_0, \dots, v_5)$ with $v_1 + v_2 = v_3 + v_4 = v_0 + v_5 = 1$. 
 }
 \label{fig:3ghz}
\end{figure}

The $n$-GHZ protocol is as follows. See Figure~\ref{fig:3ghz} for the case $n=3$. We fix an integer $n\geq 3$ and consider a linear optical circuit involving $2n$ photons in $4n$ modes, indexed $0, 1, \dots, 4n-1$. 
We will only require $50:50$ beamsplitters (we use the Hadamard transfer matrix $H$, although other versions would be sufficient) and PNRD. 
\begin{protocol}\label{protocol:ghz}
\begin{enumerate}
    \item Input the Fock state $\ket{1}^{\otimes 2n} \otimes \ket{0}^{\otimes 2n}$. \label{ghz:input}
    \item Apply $H$ on each pair $(0,1), (2,3), \dots, (2n-2, 2n-1)$. \label{ghz:noon}
    \item Apply $H$ on each pair $(0, 2n), (1, 2n+1), \dots, (2n-1, 4n-1)$. \label{ghz:splitter}
    \item Apply the $n$-GHZ state analyzer, including post-selection, on modes $(2n, \dots, 4n-1)$. 
    Proceed if and only if the analyzer heralds success. 
    \item If the GHZ analyzer reports a measurement of $X_{2n}\cdots X_{4n-1} = -1$, then apply a phase of $-1$ to mode $1$. \label{ghz:correct}
\end{enumerate}
\end{protocol}
We have
\begin{lemma}\label{lemma:ghz gen}
Suppose the $n$-GHZ state generation protocol heralds success.  
If all input photons have ideal internal state $\eta_0 = \ketbra{\psi_0}$, the output state is 
$\ket{B_+}= \frac{1}{\sqrt{2}}(\ket{1,0}^{\otimes n} + \ket{0,1}^{\otimes n})$, the desired ideal $n$-GHZ state (with all photons in the ideal internal state). 
\end{lemma}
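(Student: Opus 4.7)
My plan is to propagate the initial Fock state through the five steps of Protocol~\ref{protocol:ghz}. The key observation is that Steps~1--3 factor as a tensor product over the $n$ ``quadruples'' $Q_i = \{2i, 2i+1, 2n+2i, 2n+2i+1\}$, $i = 0, \ldots, n-1$, so I can compute the per-quadruple state once and then assemble the global state. Steps~4 and 5 reduce to an application of Theorem~\ref{thm:ghz analyzer}(3) together with a direct sign check against the Step~5 correction.

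\textbf{Per-quadruple calculation.} In Step~2, the Hadamard beamsplitter on the pair $(2i, 2i+1)$ acts on $\ket{1,1}$ as a Hong--Ou--Mandel two-photon interference, yielding the N00N state $\tfrac{1}{\sqrt{2}}(\ket{2,0} - \ket{0,2})$ by a direct creation-operator computation using the rule of Section~\ref{sec:loss dist}. Step~3 then applies $H$ on each pair $(j, 2n+j)$, splitting $(a^\dagger_j)^2$ acting on the vacuum into the usual beamsplitter expansion of a two-photon state. Multiplying through gives a six-term superposition on $Q_i$ whose only terms with exactly one photon in each of the top and bottom pairs are $\ket{1,0,1,0}$ (coefficient $+\tfrac{1}{2}$) and $\ket{0,1,0,1}$ (coefficient $-\tfrac{1}{2}$).

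\textbf{Filtering and reduction.} Write $(a_i, b_i)$ for the photon occupations of the bottom pair of $Q_i$. The analyzer-success constraint in Lemma~\ref{lemma:ghz filtering}, namely $m_{2i+1} + m_{2i+2} = 1$ with wrap-around $m_{2n} = m_0$, translates to $b_i + a_{i+1 \bmod n} = 1$ for every $i$. A short chain argument over the possible values $(a_i, b_i) \in \{(0,0),(1,0),(0,1),(2,0),(0,2)\}$ arising from the six-term expansion rules out every globally consistent configuration except ``all $Q_i$ in $\ket{1,0,1,0}$'' and ``all $Q_i$ in $\ket{0,1,0,1}$''. The surviving contribution therefore lies in the reduced form~\eqref{eq:reduced form} with $\ket{\phi_{10}} = \ket{1,0}^{\otimes n}$, $\ket{\phi_{01}} = \ket{0,1}^{\otimes n}$, coefficients $c_{10} = 2^{-n}$ and $c_{01} = (-1)^n\, 2^{-n}$, and all internal states ideal.

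\textbf{Analyzer and correction.} Theorem~\ref{thm:ghz analyzer}(3) now gives the output, up to normalization, as $c_{10}\ket{\phi_{10}} + (-1)^{s_\text{odd}} c_{01}\ket{\phi_{01}}$. Step~5 applies a $-1$ phase to mode~1---equivalently, $Z$ on the top dual-rail qubit comprising modes $0,1$---precisely when $s_\text{odd}$ is odd, flipping the relative sign between $\ket{1,0}^{\otimes n}$ and $\ket{0,1}^{\otimes n}$ in that branch. Collecting the three sources of sign (the $(-1)^n$ carried by $c_{01}$, the outcome-dependent $(-1)^{s_\text{odd}}$ from the theorem, and the conditional $Z$ from Step~5) shows that the relative phase reduces to $+1$ in every successful branch, yielding $\ket{B_+}$ as claimed. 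The main obstacle I anticipate is precisely this sign bookkeeping---the $-$ in the N00N state, the parity of $n$, the analyzer outcome, and the Step~5 correction all interact---and I would cross-check the general formula against explicit calculations at $n = 2$ and $n = 3$ before writing up.
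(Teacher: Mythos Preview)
Your proposal is correct and follows essentially the same route as the paper's sketch: compute the per-quadruple state $\ket{\beta}$ after Steps~2--3, use Lemma~\ref{lemma:ghz filtering} to reduce to the two surviving terms (your chain argument on the bottom-pair occupations $(a_i,b_i)$ spells out what the paper summarizes as ``the multi-photon terms are dropped''), then apply Theorem~\ref{thm:ghz analyzer}(3) and the Step~5 correction. Your instinct to cross-check the sign at $n=3$ is warranted: the Step~5 correction is keyed to $s_{\mathrm{odd}}$ and cancels only the $(-1)^{s_{\mathrm{odd}}}$ factor, not the $(-1)^n$ carried by $c_{01}$, so you should expect a residual parity-of-$n$ sign---at most a fixed extra $Z$ in the correction rule, not a flaw in your argument.
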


In Appendix~\ref{sec:rsg proofs}, we sketch the proof of Lemma~\ref{lemma:ghz gen} and the potential errors in the distinguishable case.

Distinguishability (or loss) in the input photons may result in output states that are not dual-rail qubit states, due to mode pairs containing no photons or more than $1$ photon. 
However, it is likely that these illegitimate patterns will be filtered out during later steps in the computation, as often occurs when applying an $n$-GHZ state analyzer (see Lemma~\ref{lemma:ghz filtering}). 
Thus we are often interested in the case where we consider only the \emph{dual rail projection}, meaning we project to the space involving only terms that can arise in a legitimate dual-rail state. 

\begin{remark}\label{rem:loss filtering}
Note that photon loss, sampled as in \eqref{eq:mixed loss}, will always be filtered out by the dual-rail projection, as loss reduces the total photon number. 
This is the motivation for considering only distinguishability errors here. 
For more complex circuits, higher-order error models, or protocols where we are particularly concerned with the heralding rates, it is more relevant to incorporate photon loss as well. 
\end{remark}

In the following result, we consider the performance of $n$-GHZ generation under low-order errors, where all but a few input photons have ideal internal state $\eta_0 = \ketbra{\psi_0}$. 
We first consider a first-order error in which exactly one input photon has internal state $\eta_1 = \ketbra{\psi_1}$ with $\Tr(\eta_0 \eta_1) = 0$. 
We further assume that errors in modes $2i, 2i+1$ are equally likely; 
thus, we will study the performance of the $n$-GHZ state generation protocol where the $i$th input pair is replaced with
\begin{align}\label{eq:first order input}
    \frac{1}{2}\left(\cre{2i}(\eta_0)\cre{2i+1}(\eta_1)\vac\bvac a_{2i+1}(\eta_1)a_{2i}(\eta_0) + \cre{2i}(\eta_1)\cre{2i+1}(\eta_0)\vac\bvac a_{2i+1}(\eta_0) a_{2i}(\eta_1)\right). 
\end{align}
This is an even mixture of the cases in which the photon in mode $2i$ (respectively $2i+1$) is distinguishable. 
We will also consider a second-order \emph{pair error}, in which the input photons in \emph{both} modes $2i$ and $2i+1$ are distinguishable with the same internal state $\eta_1 = \ketbra{\psi_1}$ with $\Tr(\eta_0 \eta_1) = 0$. 

We note that in Ref. \citenum{sparrowQuantumInterferenceUniversal2017}, the output state of the Bell state generator was very nicely calculated by entirely tracing out the internal states of all photons. 
Here we choose not to trace out the internal degrees of freedom entirely, instead giving a framework in which distinguishability errors on the input photons may be mapped to a combination of Pauli and distinguishability errors on the output state. 
We briefly introduce some notation for this purpose. 
For an operator $A$, let $P_A(\gamma) = \frac{1}{2}(\gamma + A \gamma A^\dagger)$. 
Also let $D_i$ be an operator that acts on modes $2i, 2i+1$ by $\creta{2i}{0}\vac\mapsto \creta{2i}{1}\vac$, $\creta{2i+1}{0}\vac\mapsto \creta{2i+1}{1}\vac$. 
This operator is simply for convenience of notation, and allows us to write partially-distinguishable mixed states by starting with the ideal state $\ketbra{B_+}$ and applying appropriate operators. 

\begin{theorem}\label{thm:ghz gen}
Suppose the $n$-GHZ state generation protocol heralds success with measurement pattern $\vec{m} = (m_0, \dots, m_{2n-1})$. 
Assume that all input photons except for those in input modes $2i, 2i+1$ are ideal, with internal state $\eta_0 = \ketbra{\psi_0}$. 
\begin{enumerate}
    \item Suppose that input photons $2i$, $2i+1$ are in state \eqref{eq:first order input} above. 
    After dual-rail projection, the output state is
    \begin{align}
    \frac{1}{2}P_{X_i}\left( P_{Z_i}(\ketbra{B_+}) + D_i(\ketbra{B_+})  \right).
    \end{align}
    \item Suppose we have a pair error on pair $i$, so that input photons $2i, 2i+1$ are in the same internal state $\eta_1$, fully distinguishable from the ideal internal state $\eta_0$. 
    Then the output state is $P_{Z_i}(D_i(\ketbra{B_+}))$, 
    with no dual-rail projection required. 
    
\end{enumerate}
\end{theorem}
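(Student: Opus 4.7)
The plan is to leverage the mixed-state framework of Sec.~\ref{sec:distinguishability}: photons in the ideal internal state $\ket{\psi_0}$ and in the bad state $\ket{\psi_1}$ inhabit orthogonal subspaces $\mc{H}_0,\mc{H}_1$, linear optics acts identically on each, and orthogonal internal modes do not interfere. This lets me propagate the ideal and bad photons in parallel and reduce each claim, by linearity, to an analysis of single pure-state summands of the input mixture \eqref{eq:first order input}.

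For part 1, I would fix the summand where the bad photon occupies mode $2i$; the other summand is obtained by swapping $2i\leftrightarrow 2i+1$, and averaging the two produces the outer $P_{X_i}$ in the claim. Step~\ref{ghz:noon} turns every pair $j\neq i$ into an ideal N00N state in internal mode $0$ (as in Lemma~\ref{lemma:ghz gen}); pair $i$ instead carries one photon in each internal mode, and since these do not interfere, the Hadamard produces a tensor product of the form $\tfrac{1}{2}(\creta{2i}{0}+\creta{2i+1}{0})(\creta{2i}{1}-\creta{2i+1}{1})\vac$, two independent single-photon superpositions. Step~\ref{ghz:splitter} distributes each creation operator between the top and bottom halves of its pair, spreading the lone bad photon into an equal superposition over the four modes around pair $i$.

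Now apply the GHZ analyzer: Lemma~\ref{lemma:ghz filtering} restricts to patterns with exactly one photon per analyzer pair, and Theorem~\ref{thm:ghz analyzer}(3)--(4) expresses the surviving terms using $\ket{\phi_{10}},\ket{\phi_{01}}$ and the internal-state overlaps $\braket{\xi_j}{\xi_j'}$. After the dual-rail projection (see Remark~\ref{rem:loss filtering}), only two classes of terms survive for this summand: either the bad photon remains on the output in pair $i$, which after collection is exactly $D_i(\ketbra{B_+})$; or the bad photon is consumed by the analyzer, in which case its orthogonality with the ideal mode triggers Theorem~\ref{thm:ghz analyzer}(4) at site $i$, fully randomizing the measured eigenvalue of $X_0\cdots X_{n-1}$ and thus inducing a $Z_i$ uncertainty in step~\ref{ghz:correct}, i.e.\ $P_{Z_i}(\ketbra{B_+})$. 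Weighting these with $\tfrac12$ each and symmetrizing over the two summands yields $\tfrac{1}{2}P_{X_i}(P_{Z_i}(\ketbra{B_+})+D_i(\ketbra{B_+}))$. For part 2, the two bad photons share the same internal state and so interfere with each other exactly like ideal photons; step~\ref{ghz:noon} therefore produces a genuine N00N state in pair $i$, merely shifted to internal mode $1$. The rest of the circuit proceeds structurally as in the ideal case, except that the mode-$1$ pair cannot interfere with the mode-$0$ neighbors in the analyzer, so Theorem~\ref{thm:ghz analyzer}(4) randomizes the $X$ measurement at site $i$ and gives the $P_{Z_i}$. The intact N00N structure guarantees exactly one photon per output pair, so no dual-rail projection is needed, and the surviving state is $P_{Z_i}(D_i(\ketbra{B_+}))$.

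The main obstacle is the bookkeeping in part 1: tracking precisely which terms of the state after step~\ref{ghz:splitter} survive both the analyzer post-selection and the dual-rail projection, and matching them to the compact $P/D$ description. In particular one must verify that when the bad photon is routed into the analyzer it contributes \emph{only} to the $Z_i$-randomization class and not to spurious non-dual-rail patterns; this requires careful use of the pairing $\mc{P}$ (the analyzer's Hadamards mix modes from \emph{adjacent} pairs of step~\ref{ghz:noon}) and of the sign convention for the correction in step~\ref{ghz:correct}. Once that reduction is done, the translation into the Pauli/distinguishability operators is immediate.
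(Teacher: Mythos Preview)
Your overall strategy matches the paper's: decompose the input mixture by linearity, propagate the bad photon using orthogonality of internal modes, invoke Lemma~\ref{lemma:ghz filtering} and Theorem~\ref{thm:ghz analyzer} for the analyzer, and assemble into the $P_{X_i}/P_{Z_i}/D_i$ form. Part~2 is handled correctly and agrees with the paper's sketch.

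For Part~1, however, your attribution of the $P_{X_i}$ is not right. If you process a single input summand as you describe, Step~\ref{ghz:noon} produces a product $\tfrac12(\creta{2i}{0}+\creta{2i+1}{0})(\creta{2i}{1}-\creta{2i+1}{1})\vac$; after Step~\ref{ghz:splitter} and dual-rail projection, the surviving terms have output qubit $i$ and analyzer qubit $i$ each in a \emph{definite} state $\ket{+}$ or $\ket{-}$, uncorrelated with one another. In that form the per-summand outputs are \emph{not} $P_{Z_i}(\ketbra{B_+})$ or $D_i(\ketbra{B_+})$: those require qubit $i$ to be $Z$-correlated with the rest, whereas here it sits in $\ket{\pm}$. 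Moreover, swapping input modes $2i\leftrightarrow 2i+1$ propagates through the Hadamard of Step~\ref{ghz:noon} as a $Z$, not an $X$, so ``averaging the two summands gives $P_{X_i}$'' does not hold by that symmetry argument.

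The paper's proof fixes exactly this point by a change of basis in the mixture after Step~\ref{ghz:noon}: the even mixture of the two input summands is rewritten as the even mixture of $(\creta{0}{0}\creta{0}{1}-\creta{1}{0}\creta{1}{1})\vac$ and $(\creta{0}{0}\creta{1}{1}-\creta{0}{1}\creta{1}{0})\vac$. After Step~\ref{ghz:splitter} and the no-interference splitting, this yields the four terms \eqref{dist_ghz:0}--\eqref{dist_ghz:3}, each carrying genuine Bell-type correlations $(\ket{1010}-\ket{0101})$ or $(\ket{1001}-\ket{0110})$ between output and analyzer on pair $i$. Now Theorem~\ref{thm:ghz analyzer} applies cleanly: the correlated pair \eqref{dist_ghz:0} gives $P_{Z_i}(\ketbra{B_+})$, the anti-correlated pair \eqref{dist_ghz:2} gives its $X_i$-conjugate, and similarly for \eqref{dist_ghz:1}, \eqref{dist_ghz:3} with $D_i$. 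The $P_{X_i}$ therefore comes from averaging over the correlated/anti-correlated split in this rotated basis, not from the original input-summand split. Your plan can be repaired by inserting this basis change; without it, the intermediate identifications you state do not hold.
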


\begin{remark}\label{rem:stabilizer sims}
We note that these results suggest a paradigm for computing the effect of low-order distinguishability errors on much larger circuits. 
Suppose we have a large circuit whose subroutines are $n$-GHZ state generation, single-qubit Clifford operations, Type I fusions, and $k$-GHZ state analyzers (for varying $n,k$). 
Also suppose that the distinguishability error rates are sufficiently low that low-order error approximations are reasonable. 
Theorem~\ref{thm:ghz gen} shows that, under these circumstances, erroneous $n$-GHZ state generation may be simulated by taking the ideal state and probabilistically applying Pauli errors and an operator $D_i$ that converts ideal photons to fully distinguishable ones. 
This sampled state may then be fed into the later unitary operations, fusion gates, etc.; it is straightforward to see how Pauli errors propagate through the circuit, as with standard Pauli frame calculations, and the results of Sec.~\ref{sec:fusion_dist_statements} show how distinguishability errors interact with each operation. 
This implies the possibility of a generalized stabilizer-type simulation, in which one tracks both the stabilizers of the simulated state and, for each photon, a minimal amount of information about its internal state. 
This, of course, will be simplest if one uses an appropriately simple distinguishability error model, such as those discussed in Sec.~\ref{sec:numerics} below. 
Further, the assumption that the errors are low-order is essential: otherwise, multiple interacting errors could "cancel," resulting in nontrivial contributions from 
terms that are normally "filtered out" by subsequent operations (as discussed in Lemma~\ref{lemma:ghz filtering} and below). 
\end{remark}

\subsubsection{Numerical distinguishability error analysis for Bell state generation}\label{sec:numerics}
In the previous section, we characterized the behavior of the $n$-GHZ state generation protocol, Protocol~\ref{protocol:ghz}, under low-order distinguishability errors. 
We now use our simulation tools to numerically study the performance of the Bell state generation protocol (the special case $n=2$) under all orders of distinguishability errors, using various performance metrics and error models. 
We note that Sparrow \cite{sparrowQuantumInterferenceUniversal2017} has explicitly calculated the output state of this protocol (after dropping non-dual-rail terms and tracing out all internal degrees of freedom) as a function of each input photon's internal state; further, Shaw et al. \cite{shaw2023errors} have numerically compared the fidelity of this protocol with other BSG protocols. 
Here, our goal is to demonstrate the utility of our simulation paradigm for analyzing such protocols, and to examine the interactions between different error models and performance metrics. 
These techniques can be extended to larger protocols (especially via the simulation paradigm discussed in Sec.~\ref{sec:coherent} below) and can easily incorporate additional types of errors. 
(Note that, as discussed in Remark~\ref{rem:loss filtering}, we neglect loss errors in this section as they will be filtered out by all the metrics we consider here.)

We consider three different models for distinguishability errors. 
In the first two, as in \eqref{eq:dist error model} above, the $i$th photon's internal state is independent and is modeled as $\rho(\epsilon) = (1-\epsilon)\ketbra{\psi_0} + \epsilon\ketbra{\psi_i}$, where $\braket{\psi_0}{\psi_i} = 0$ for $i>0$. 
In the Orthogonal Bad Bits (OBB) model \cite{sparrowQuantumInterferenceUniversal2017}, we have $\braket{\psi_i}{\psi_j}=0$ for $i\neq j$. 
In the Same Bad Bits (SBB) model, we have only one error state: $\ket{\psi_1}=\ket{\psi_2}=\dots$. 
Finally, in the Orthogonal Bad Pairs (OBP) model (corresponding to the pair errors studied in Theorem~\ref{thm:ghz gen}), 
pairs of mutually indistinguishable photons are emitted. Each pair is independent of the others, with a unique error state. In other words, the $i$th \emph{pair} has internal state of the form $(1-\epsilon')\ket{\psi_0\psi_0}\bra{\psi_0\psi_0} + \epsilon' \ket{\psi_i\psi_i}\bra{\psi_i\psi_i}$, where $\braket{\psi_i}{\psi_j}=0$ for $i\neq j$. 
These mutually indistinguishable pairs are routed into adjacent modes $(0,1)$ or $(2,3)$ of the BSG circuit. 
In order to compare to the other models, we will take $\epsilon' = 1-(1-\epsilon)^2 = \epsilon(2-\epsilon)$, so that in all three models, the probability of an ideal pair is $(1-\epsilon)^2$. 

We also consider four different metrics for benchmarking the quality of the four-mode state $\eta = \sum_j p_j \ketbra{\eta_j}$ constructed by the BSG protocol (ideally a two-qubit dual-rail state). 
The first is the fidelity with the desired ideal Bell pair $\ket{B_{++}}$. (Here we use the notation that $\ket{B_{s_1 s_2}}$ is the ideal Bell state with $(X\otimes X) \ket{B_{s_1 s_2}} = s_1 \ket{B_{s_1 s_2}}$, $(Z\otimes Z) \ket{B_{s_1 s_2}} = s_2 \ket{B_{s_1 s_2}}$.) 
The second is the \emph{post-selected fidelity}: we drop all terms in the output state that do not correspond to a dual-rail state, renormalize, then calculate the fidelity with the ideal state $\ket{B_{++}}$. This is motivated by the fact that fusion and similar circuits tend to filter out non-dual-rail terms such as $\ket{11}, \ket{20}$, etc. (recall Lemma~\ref{lemma:ghz filtering}). 
For the final two metrics, we take this observation to its logical conclusion, asking the following question: if the state $\eta$ is used in a fusion-based circuit, what is the expected \emph{number of stabilizer errors} on the resulting state? 
To quantify this, we perform \emph{post-processing fusions} on the state $\eta$: we fuse the first two modes (first qubit) of $\eta$ with an ideal Bell pair, and the final two modes (second qubit) of $\eta$ with a distinct ideal Bell pair, post-selecting for successful heralding. 
(This is done without additional error, as we are attempting to benchmark the output of the BSG, not the quality of these subsequent fusion operations.) 
Successful post-processing fusions may be viewed as mapping the four-mode state $\eta$ to the four-mode state on the unmeasured output modes; in the absence of distinguishable photons input to the post-processing fusions, this map is in fact a non-destructive Bell measurement. 
Further, since the surviving qubits come from the ideal Bell pairs, the output is always a dual-rail qubit state. 
Then it makes sense to take this state, measure its fidelity with each of the four Bell states, and use this to calculate the expected number of errors in the stabilizers $\{X\otimes X, Z\otimes Z\}$: 
\begin{align}
    E(NS) = 0 \bra{B_{++}}\eta\ket{B_{++}} + 1 (\bra{B_{+-}}\eta\ket{B_{+-}} + \bra{B_{-+}}\eta\ket{B_{-+}}) + 2\bra{B_{--}}\eta\ket{B_{--}}. 
\end{align}
For comparison with the previous two fidelity metrics, we also calculate the \emph{fusion fidelity}, the fidelity of the state obtained after the post-processing fusions. 

Now, let $\rho$ be the input state to the BSG protocol with a chosen error model. Recalling the discussion above, we will break $\rho$ into a sum 
$\rho = \sum_{c\in\mathcal{C}} P(c) \rho_c$, 
where each photon in $\rho_c$ has some internal state $\ket{\psi_i}$, either ideal or fully distinguishable according to the appropriate error model, and 
$\mathcal{C}$ is the set indexing all such states. 
Now let $X$ be a random variable on the space of four-mode photonic density matrices of the appropriate form: we assume $X$ is obtained by applying some linear optical circuit, measuring a subset of modes and post-selecting for a subset of \emph{heralding} outcomes, applying linear optical corrections as needed, then applying some random variable $X'$ to the resulting output state. 
We let $H_X$ be the event that the heralding conditions for $X$ are met. 
We may calculate the expected value of $X$ given successful heralding as follows: 
\begin{align}
    E(X|H_X) = \dfrac{\sum_{c\in\mc{C}} P(c) P(H_X|c) E(X|c\cap H_X)}{\sum_{c\in\mc{C}} P(c) P(H_X|c)}.
\end{align}
Note that for $X$ corresponding to the post-selected fidelity and number of stabilizer errors, the heralding probabilities $P(H_X|c)$ take into account the reduced probabilities due to the post-selection and post-processing fusions respectively. 
In Appendix~\ref{app:numerics}, we explicitly calculate this expected value, for each of the error models and random variables given above. 
To do this, we consider each $c\in\mc{C}$ and directly compute the appropriate heralding probabilities $P(H_X|c)$ and expected values $E(X|c\cap H_X)$ corresponding to the input state $\rho_c$. 
Note that the only appearance of the distinguishability error rate $\epsilon$ is in $P(c)$, which may be explicitly computed for each $c\in\mc{C}$ (depending on the choice of error model). 
Thus we do not need to sample different values of $\epsilon$, rather directly computing $E(X|H_X)$ as a function of $\epsilon$. 
We refer to Appendix~\ref{app:numerics} for the exact results and Figure~\ref{fig:bsg_numerics} for comparisons. 

 \begin{figure}
     \centering
     \includegraphics[width=0.9\columnwidth]{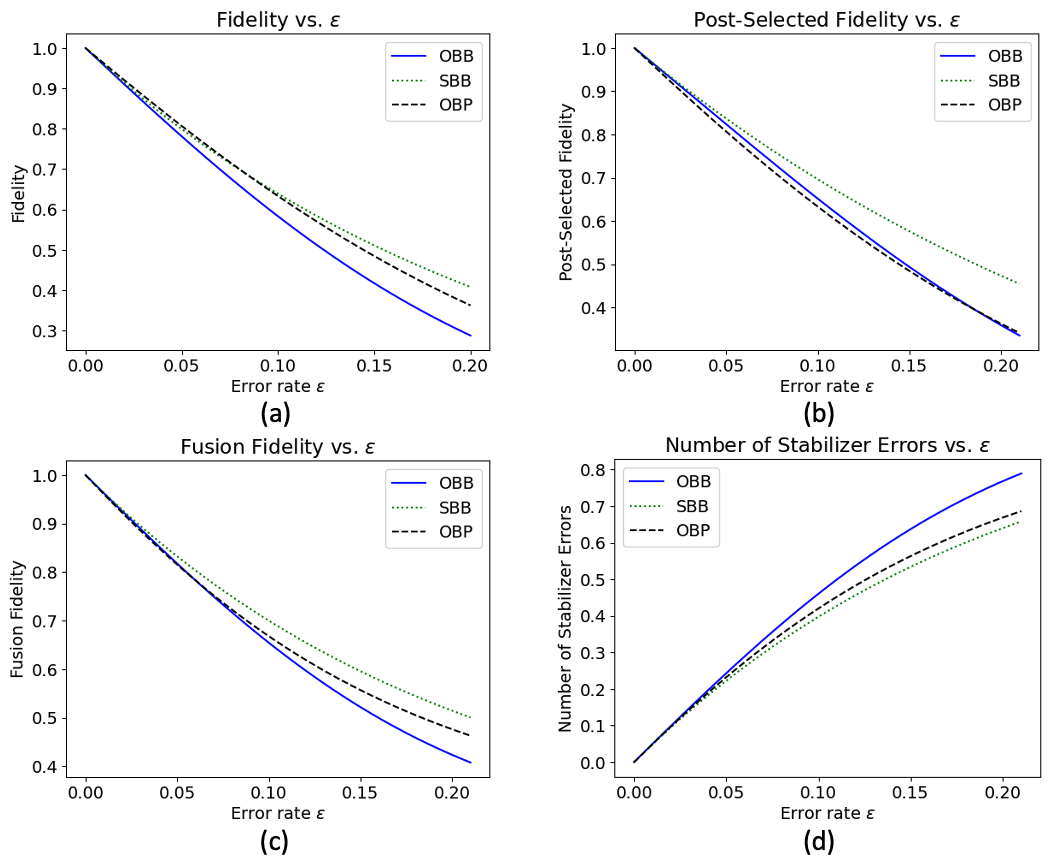}
     \caption{Plot of the BSG circuit's performance under various metrics, comparing 3 different error models.  
     }
     \label{fig:bsg_numerics}
 \end{figure}

We make several observations regarding the numerics in Figure~\ref{fig:bsg_numerics}. 
First, we see that by all metrics, OBB errors lead to worse state generation than SBB errors. 
This fits the intuition that more distinguishability, even between photons that are already non-ideal, reduces the overall performance of photonic circuits. 
The comparison between the OBP and other models, however, is less clear-cut. 
The OBP model leads to higher standard fidelity than both the OBB and SBB models for small $\epsilon$, $0< \epsilon \lessapprox 0.083$. 
However, the OBP model in fact exhibits the \emph{lowest} post-selected fidelity for reasonable values of $\epsilon$, $0 < \epsilon \lessapprox  0.186$; for larger $\epsilon$, OBB is worse than OBP. 
This difference in performance is because the OBP model does not lead to non-dual-rail terms and thus gives the same values for the standard and post-selected fidelity (compare the appropriate functions in Appendix~\ref{app:numerics}).  
Thus, post-selection improves the fidelity in the OBB and SBB models but does not affect the OBP model. 
For the fusion fidelity, we see similar behavior as for the post-selected fidelity, but the intersection between the OBB and OBP curves occurs much earlier, $\epsilon\approx 0.060$. 
Finally, we note that the OBB model leads to more expected stabilizer errors than the OBP model for all $\epsilon$. 
Thus we see that error models and performance metrics can interact nontrivially, and it is important to consider a variety of metrics when comparing the performance of two or more linear optical protocols.

\section{Coherent-rank framework} \label{sec:coherent}

We summarize the coherent-rank framework introduced in Ref.~\citenum{marshall_simulation_2023}. There are two canonical bases of interest for the separable, infinite-dimensional Hilbert space of a simple harmonic oscillator, \(\mathcal{H}_{\infty} \cong L^{2}(\mathbb{R})\): the Fock basis, generated by the eigenstates of the number operator, \(\mathbb{B} = \{ | n \rangle \}_{n=0}^{\infty}\) where \(\hat{n} | n \rangle = n | n \rangle\). And the coherent state basis, \(\widetilde{\mathbb{B}} = \{ | \alpha \rangle \}_{\alpha \in \mathbb{C}}\), which are eigenstates of the annihilation operator, \(\hat{a} | \alpha \rangle = \alpha | \alpha \rangle\). Recall that coherent states are defined via the action of the displacement operator on the vacuum state $|\alpha\rangle:=\hat{D}(\alpha)|0\rangle = e^{-|\alpha|^{2} / 2} \sum_{n=0}^{\infty} \frac{\alpha^{n}}{\sqrt{n !}}|n\rangle$, where \(\hat{D}(\alpha)=e^{\alpha \hat{a}^{\dagger}-\bar{\alpha} \hat{a}}=e^{-|\alpha|^{2} / 2} e^{\alpha \hat{a}^{\dagger}} e^{-\bar{\alpha} \hat{a}}\) is the displacement unitary with \(\alpha \in \mathbb{C}\).

The two bases interact in an interesting way: apart from the vacuum state \(| 0 \rangle\), no other Fock state is a coherent state and vice-versa. As a result, every coherent state \(| \alpha \rangle, \alpha \neq 0\) is a superposition of Fock states and every Fock state \(| n \rangle, n>0\) is a superposition of coherent states. Given two bases in a Hilbert space, it is natural to seek a unitary (or isometry) that  connects the two bases. However in this case, no such transformation exists. This is because the coherent states form an uncountable, overcomplete basis for \(\mathcal{H}_{\infty}\), with the following resolution of the identity,
$\frac{1}{\pi} \int_{\alpha \in \mathbb{C}} d \alpha|\alpha\rangle\langle\alpha|=\mathbb{I}$.
In contrast, the Fock states form a countable orthonormal basis with,
$\sum\limits_{n=0}^{\infty} | n \rangle \langle  n | = \mathbb{I}$.
The inner product between two coherent states is
$\langle\alpha | \beta\rangle=e^{-\frac{1}{2}|\alpha-\beta|^{2}} e^{-i \operatorname{Im}(\alpha \bar{\beta})}$.
This relation makes it evident that if two coherent states are ``far away'' from each other, namely, \(\left| \alpha - \beta \right| \gg 1\), then they are approximately orthogonal.

The above implies a Fock basis representation of a state, such as the maximum $N$-photon single-mode state $|\psi\rangle=\sum_{n=0}^N c_n |n\rangle$, can be specified exactly over the uncountable degrees of freedom: $|\psi\rangle = \frac{1}{\pi} \int d\alpha \langle \alpha | \psi\rangle |\alpha\rangle $. One observation of Ref.~\citenum{marshall_simulation_2023} is that one can also \textit{approximately} represent such a state, using only $N+1$ coherent states:
\begin{align}
    |\psi\rangle \approx |\tilde{\psi}\rangle= \frac{1}{\sqrt{\mathcal{N}}} \sum_{k=0}^N c_k |\epsilon e^{2\pi i k/(N+1)}\rangle,\; c_k = \frac{e^{\epsilon^2/2}}{N+1} \sum_{n=0}^N \sqrt{n!} \frac{a_n}{\epsilon^n} e^{-2\pi i n k / (N+1)}.
    \label{eq:fock-coherent-decomp}
\end{align}
This `Fourier' decomposition is accurate to fidelity $|\langle \psi | \tilde{\psi}\rangle|^2 = 1 - O(\epsilon^{2(n+1)} / (N+1)!)$. $\mathcal{N}$ is for normalization.
In the case of a Fock basis state $|\psi\rangle = |N\rangle$, this simplifies to $|N\rangle \approx |\tilde{N}\rangle = \frac{1}{\sqrt{\mathcal{N}}} \sum_{k=0}^N e^{-2\pi i k N / (N+1)}|\epsilon e^{2\pi i k / (N+1)}\rangle$.

We will see this representation turns out to provide a more concise description in certain cases. Let us first set up some notation. We say that a state of the form $\sum_{i=1}^k c_i |\vec{\alpha}_i\rangle$ is of `coherent state rank' $k$, where $|\vec{\alpha}_i\rangle = |\alpha_{i}^{(1)}, \dots, \alpha_i^{(m)} \rangle$ is an $m$-mode coherent state. We will say an $N$ photon single-mode state has `approximate coherent rank' $N+1$, by Eq.~\eqref{eq:fock-coherent-decomp}.
It is easy to show that a linear optical (LO) unitary $U$, with transfer matrix $u$, maps a multi-mode coherent state to another multi-mode coherent state, as $U|\vec{\alpha}\rangle = |u \vec{\alpha}\rangle$, where the notation means to perform matrix-vector multiplication. This means that LO unitaries are coherent state rank preserving. 

In a LO simulation therefore, the classical complexity is governed entirely by the initial state. In the worst case (i.e., with the largest rank) for an $n$-photon simulation, the initial state has rank $k=2^n$, which is achieved for $|1\rangle^{\otimes n}|0\rangle^{\otimes(m-n)}$. This comes using Eq.~\eqref{eq:fock-coherent-decomp}, which represents $|1\rangle$ as an odd cat state, with `small' parameter $\epsilon$. The memory requirements are $O(m 2^n)$ for an $m$-mode $n$-photon simulation, and the time required to update the state under a LO unitary is $O(m^2 2^n)$. Note that this is (in general) a significant improvement over representing an arbitrary LO evolution in the Fock basis directly, which scales with the dimension $d_{n,m}=\binom{n+m-1}{n}$.

Computing transition amplitudes $\langle \vec{n}_2|U|\vec{n}_1\rangle = \overline{\langle \vec{n}_1|U|\vec{n}_2\rangle}$ costs $O(m^2 k_{min})$, where $k_{min}$ is the minimum of the approximate coherent ranks of $|\vec{n}_{1,2}\rangle$. Note that, whilst this is similar in time to permanent based methods, it is (in the worst case), exponentially worse in memory as we store the full state here. This however can have its benefits, as the coherent rank framework is a more general one, not restricted to only computing transition amplitudes under LO. For example, any operator written as a sum of products of creation and annihilation operators $\sum_i b_i \prod_{j=1}^{l_i} (a^\dag)^{n_{i,j}}(a)^{m_{i,j}}$ can be implemented, with a multiplicative overhead $O(p+1)$, where $p \ge \sum_{j=1}^{l_i} n_{i,j}, \forall i$ (that is, $p$ is the maximum number of creation operators in any single term in the sum). In particular, application of such an operator increases the states rank from $k \rightarrow k(p+1)$ (e.g.~applying $a^\dag$ will double the rank).

One use of this framework goes beyond simulation. We can show an example of this by deriving a general (exact) equation for computing LO transitions (which is equivalent to permanent based expressions \cite{aaronson_computational_2010, chabaud_quantum-inspired_2022}). Let us take two $n$ photon states $|\vec{n}_1\rangle=|n_{1,1}, \dots, n_{1,m}\rangle, |\vec{n}_2\rangle= |n_{2,1}, \dots, n_{2,m}\rangle$, with ranks $k_{1,2}$ respectively (without loss of generality, let $k_1\le k_2$). Taking the tensor product of each $|n_{1,i}\rangle$, written as a rank $n_{1,i}+1$ in the coherent basis yields
\begin{align}
    |\vec{n}_1\rangle = \lim_{\epsilon \rightarrow 0} \frac{1}{\epsilon^n} \prod_{j=1}^m \frac{\sqrt{n_{1,j}!}}{n_{1,j}+1} \sum_{i=1}^{k_1} p_i |\epsilon \vec{P}_i\rangle,
\end{align}
where $p_i$ is some phase factor $e^{i\phi_i}$, and $\vec{P_j}$ a vector of phase factors\footnote{We leave these unspecified for now as they depend on the state, but mention here they are easy to compute. For the all $|1\rangle$ state for example, $\{\vec{P}_i\}_i$ represent all bitstrings taking values $\pm 1$.}. Applying a LO unitary on $|\vec{n}_1\rangle$ is equivalent to multiplying $\vec{P}_i \rightarrow u \vec{P}_i = \vec{P}_i'$, where $u$ is the transfer matrix. Note that $\vec{P}_i'$ is no longer a vector of phases. The final overlap becomes
\begin{align}
    \langle \vec{n}_2|U|\vec{n}_1\rangle = \lim_{\epsilon \rightarrow 0} \frac{1}{\epsilon^n} \prod_{j=1}^m \left( \frac{\sqrt{n_j!}}{n_j+1} \right) \sum_{i=1}^{k_1} p_i \langle \vec{n}_2|\epsilon \vec{P}'_i\rangle = \prod_{j=1}^m \left( \frac{1}{n_{1,j}+1} \frac{\sqrt{n_{1,j}!}}{\sqrt{n_{2,j}!}} \right) \sum_{i=1}^{k_1} p_i  \prod_{l=1}^m ({P}_{i,l}')^{n_{2,l}},
    \label{eq:transition-amp}
\end{align}
where ${P}'_{j,l}$ is the $l$'th component of $\vec{P}'_j$. Notice the final term is entirely independent of $\epsilon$. This therefore provides an $O(m^2 k_1)$ algorithm form computing LO transition amplitudes (the $m^2$ scaling comes from computing each $\vec{P}_i'$ by matrix-vector multiplication). We further comment that this can be done without explicitly storing the full state as this can be computed term by term, and so the memory requirements are just $O(m^2)$ for storing the transfer matrix. That is, we can recover similar scaling in time and memory as permanent based methods, directly from calculations using the coherent rank framework.

As a last observation on this point, noting that $\langle \vec{1}|U|\vec{1}\rangle = \mathrm{perm}[u]$, one can show via Eq.~\eqref{eq:transition-amp} and results from Ref. \citenum{marshall_simulation_2023} that an alternative formula for the permanent is
\begin{align}
    \mathrm{perm}[A] = \frac{1}{2^n} \sum_{i=1}^{2^n} \prod_{j=1}^n b_{i,j} b_{i,j}'
    \label{eq:perm}
\end{align}
where $A$ is an arbitrary (not necessarily unitary) $n\times n$ complex matrix, and the sum is over all $\pm 1$ bitstrings $\{\vec{b}_i\}_i$, and $\vec{b}_i' = A \vec{b}_i$. The time complexity\footnote{Since for each binary vector $\vec{b}_i$, there is the conjugate $-\vec{b}_i$, we only need to compute $A\vec{b}_i$ for half of the total terms.} is $O(n^2 2^{n-1}$). In fact, Eq.~\eqref{eq:perm} is a known relation called Glynn's formula \cite{glynn_permanent_2010}, and has also been derived previously using the coherent state representation \cite{chabaud_quantum-inspired_2022}.

Finally we comment on performing probabilistic measurement sampling in the coherent rank framework. Whilst we saw above the cost to compute a transition probability $P_{\vec{n}}=|\langle \vec{n}|U|\vec{n}_0\rangle|^2$ is $O(m^2 k_0)$, this does not imply sampling from the distribution has the same cost (that is, outputting a Fock basis measurement sample $\vec{n}$ with probability $P_{\vec{n}}$). The standard method to do this would be by conditional probability sampling, outputting random measurement results for each mode at a time, and re-normalizing the state.
The cost for producing an $n$ photon measurement result is $O(nm \mathcal{N})$, where $\mathcal{N}$ is the cost of computing the norm of a general (unnormalized) state in the coherent basis representation $\sum_{i=1}^k c_i |\vec{\alpha}_i\rangle$. Due to the non-orthogonality of coherent states, this takes time $O(m k^2)$.  In the worst case therefore, the cost to sample from an $n$ photon LO evolved state is $O(nm^2 4^n)$, though some improvements can be found in specific cases, see Ref. \citenum{marshall_simulation_2023} for a more detailed discussion.

At this point one can make an interesting observation, for the same general argument above appears to hold for stabilizer rank simulations, in the qubit setting \cite{bravyi_simulation_2019}. In particular, stabilizer states are also non-orthogonal, and so naively computing the norm of such a `state' scales as $r^2$, where $r$ is the stabilizer rank. However, it is possible to compute such quantities in a time $O(r)$, and therefore produce probabilistic measurement results `efficiently' (i.e. in a time linear in the rank). This works by computing overlaps with random stabilizer states $|s\rangle$, the mean of which, $\mathbb{E}_{s}|\langle s | \psi\rangle|^2$, is proportional to the norm squared. Since stabilizer states form a 2-design, the convergence is `fast', and the algorithm scales linearly in $r$ (to a fixed accuracy). In Ref. \citenum{marshall_simulation_2023} it is shown that sampling with respect to random coherent states produces the correct mean, but the convergence is slow, since coherent states only form a 1-design \cite{blume-kohout_curious_2014}.

Let us however consider a more general set-up, and assume we want to estimate the norm of some (unnormalized) state of $n$ photons, $|\psi\rangle$. We can attempt to compute its norm by random sampling over unitaries $V$, from \textit{some} distribution.
In particular, we wish to compute $\mathbb{E}_V |\langle \vec{n}|V^\dag |\psi \rangle|^2$ where $|\vec{n}\rangle$ is some $n$ photon state in the Fock basis, which the specific form will turn out to be unimportant.
Indeed, if the distribution over which the unitaries are drawn form a 1-design, then one has $\mathbb{E}_V |\langle \vec{n}|V^\dag |\psi \rangle|^2 = \|\psi\|^2/d_{n,m}$,
using the 1-design result $\int d\mu_V V^\dag A V = \mathrm{Tr}(A) \mathbb{I}/d$, where $d$ is the dimension \cite{miszczak_symbolic_2017}. Notice this is (up to the dimension factor), the exact quantity of interest; the norm squared\footnote{We use notation $\|\psi\|^2 = \langle \psi|\psi\rangle$.} of `state' $|\psi\rangle$.
If the unitaries over which we were sampling also turned out to be a 2-design, one can additionally show that the variance of sampling the random variable $d |\langle \vec{n}|V^\dag|\psi\rangle|^2$ is $\sigma^2 = \frac{d-1}{d+1} \|\psi\|^4$ (here we drop the $n,m$ dependence on the dimension). This implies, in order to achieve a sample standard deviation to accuracy a fraction of the norm squared, $\sigma_T = \epsilon \| \psi \|^2$, requires $T = \frac{d-1}{\epsilon^2(d+1)}$ samples. Crucially, this quantity does not depend on the norm itself, nor scale prohibitively in the dimension. Assuming each overlap can be computed in a time linear in the states coherent rank $k$, the norm could therefore be estimated in a time that also scales linearly, in $O(k \epsilon^{-2})$ time, overcoming the non-orthogonality issue discussed above.

As we will see below, whilst LO unitaries do form a 1-design (hence the norm can be computed by sampling as above), they do not form a 2-design. This means, unfortunately for us, the fast convergence to the mean is not guaranteed. Whilst in principle one could supplement the LO unitaries with photon-number conserving non-linear operations (such as SNAP operations \cite{krastanov_universal_2015}), their implementation in the coherent rank framework is prohibitively costly (i.e., it would quickly surpass the $O(k)$ saving from just directly computing the norm). In particular, each SNAP gate increases the coherent rank by a factor of $n+1$ \cite{marshall_simulation_2023}.

\section{Unitary Designs, Universality, and Linear Optics}\label{sec:designs supersection}
In the last decade, unitary designs have emerged as a powerful tool in quantum information theory with a multitude of applications \cite{dankert2009exact,ambainis_quantum_2007,scott_optimizing_2008,roy_unitary_2009,roberts_chaos_2017,mele_introduction_2023}. Originally discovered in the context of randomized benchmarking \cite{dankert2009exact}, they have found their way into the daily toolkit of quantum information and computation. Some key applications include classical shadow tomography \cite{huang_predicting_2020}, quantum information scrambling \cite{roberts_chaos_2017}, quantum chaos \cite{leone_isospectral_2020}, decoupling methods \cite{szehr2013decoupling}, exponential speedups in query complexity \cite{brandao_exponential_2013}, optimal quantum process tomography schemes \cite{scott_optimizing_2008}, quantum analogues of universal hash functions \cite{cleve_near-linear_2016}, among others \cite{mele_introduction_2023}.

In this section, we begin with a brief overview of the theory of unitary designs (Sec.~\ref{sec:design overview}), followed by more exposition on the group $U_{n,m}$ of linear optical unitaries (Sec.~\ref{sec:LOUs as group}) and the SNAP gates (Sec.~\ref{sec:snap}). A more detailed exposition can be found in an upcoming work \cite{Anand2024_qudit_designs}. We then give several results related to unitary $t$-designs, universal gate sets, and linear optics. 
Namely, in Sec.~\ref{sec:continuous} we discuss a general result that any infinite closed subgroup of $U(d)$ for $d\geq 2$ is universal. 
This result is known in the mathematics literature \cite{katz2004larsen} without the terminology of $t$-designs, so we give an exposition in Appendix~\ref{app:continuous} for the sake of completeness. 
In Sec.~\ref{sec:design results}, we then 
show that the linear optical unitaries form a $1$-design but not a $2$-design, and that any $2$-design containing $U_{n,m}$ is in fact universal. 
(The latter being a direct corollary of the result of Sec.~\ref{sec:continuous}.) 
We then recall a result \cite{oszmaniec2017universal} identifying when extensions of $U_{n,m}$ are universal and apply it to prove that $U_{n,m}$, augmented with any single nontrivial SNAP gate, is universal. 

\subsection{Unitary designs}\label{sec:design overview}
Unitary designs mimic uniformly distributed unitaries. To determine how "uniformly distributed" an ensemble of unitaries/states is, we \textit{define} as "maximally uniform" the distribution of unitaries/states that are distributed according to the Haar measure (on the unitary group/state space, respectively). Then, a distance from this ensemble provides a quantitative metric of uniformity. Many of the technical results here are commonplace in the theory of Haar integrals and unitary designs, see e.g. Ref. \citenum{mele_introduction_2023} for a wonderful introduction to Haar measure and related tools in quantum information theory.

Recall that the Haar measure \(\mu\) is the unique group-invariant, normalized measure on \(U(d)\). That is, $\mu(U(d)) = 1$, and for all $V\in U(d)$ and $S\subseteq U(d)$, $\mu (VS) = \mu(SV) = \mu(S)$. Suppose we are given an ensemble of unitaries, \(\mathcal{E} = \{ p_{j}, U_{j} \}_{j}\), where \(p_{j} \geq 0, \sum_{j}^{} p_{j} = 1\) is a probability distribution. For our purposes, let us assume all the unitaries are equally likely, i.e., \(p_{j} = \frac{1}{\left| \mathcal{E} \right|}\), where \(\left| \mathcal{E} \right|\) is the cardinality of the ensemble. We will primarily focus on discrete ensembles, although the results can be easily generalized to continuous ones (by replacing \(p_{j}\)'s with an appropriately normalized measure).
Define the \(t\)-fold \emph{twirl} of the ensemble \(\mathcal{E}\) as the quantum channel $\Phi_{\mathcal{E}}^{(t)} (\cdot) \equiv \sum\limits_{j}^{} p_{j} U_{j}^{\dagger \otimes t} (\cdot) U^{\otimes t}.$ 
The ensemble \(\mathcal{E}\) is a unitary \(t\)-design if and only if the action of the twirling channel \(\Phi_{\mathcal{E}}^{(t)}\) is identical to that corresponding to the Haar ensemble, \(\Phi^{(t)}_{\mathrm{Haar}}\), for all linear operators. Namely,
\begin{align}
\Phi_{\mathcal{E}}^{(t)}(X) = \Phi_{\mathrm{Haar}}^{(t)}(X) ~~\forall X \in \mathcal{L}(\mathcal{H}^{\otimes t}),
\label{eq:twirling-defn-unitary-design}
\end{align}
where \(\Phi^{(t)}_{\mathrm{Haar}} (\cdot) := \int_{\mathrm{Haar}} dU U^{\dagger \otimes t} (\cdot) U^{\otimes t}\) and the integral is over the Haar measure on the unitary group. 
If one is interested in Haar averages over functions that are homogenous polynomials of degree \textit{at most} \(t\) in the matrix elements of \(U\), and at most degree \(t\) in the complex conjugates (equivalently the matrix elements of \(U^\dagger\)), then one can replace such an average with averages over a unitary \(t\)-design. 
An equivalent criterion for \(t\)-designs is in terms of Pauli operators: an ensemble \(\mathcal{E}\) is a \(t\)-design if and only if 
\eqref{eq:twirling-defn-unitary-design} holds for all $X\in\mathcal{P}^{\otimes t}$, 
where \(\mathcal{P}\) is the Pauli group. This follows from the fact that the Pauli operators form a (Hilbert-Schmidt orthogonal) basis for the operator space, and the linearity of the twirling operation.

At this point it is worth mentioning that every unitary \(t\)-design is automatically a unitary \(t'\)-design for \(t' \leq t\) with \(t,t' \in \mathbb{N}\). Therefore, in many cases, one is only interested in the `maximal' order of \(t\)-design that an ensemble of unitaries can form, since all lower orders are automatically implied.

Let us consider some examples of unitary designs. A unitary \(1\)-design is simply an ensemble of unitaries that form an orthonormal basis for \(\mathcal{L}(\mathcal{H})\). For example, Pauli matrices \(\{ I, X, Y, Z \}\) and their tensor products form a \(1\)-design for \(n\) qubits. The Pauli group on \(n\) qubits is also a \(1\)-design; we note that the additional phases \(\{ \pm 1, \pm i \}\) required for the group structure are not necessary for the \(1\)-design condition. 
Similarly, the clock-and-shift matrices for qudits (along with their tensor products) serve as a \(1\)-design for multiqudits. 

The most commonly studied example of unitary \(2\)-design is the Clifford group \(Cl_{n}\), the normalizer of the Pauli group. 
The Clifford group on \(n\) qubits forms both a \(2\)- and a \(3\)-design \cite{zhu_multiqubit_2017, webb_clifford_2016, kueng_qubit_2015}. In fact, the (qubit) Clifford group is the only infinite family of group \(3\)-designs. \cite{bannai_unitary_2018} For qudit dimension $d>2$ which is a prime power, one can analogously define the Clifford group as the normalizer of the Heisenberg-Weyl group; there, the Clifford group forms (at most) a \(2\)-design. 

The search for unitary \(t\)-designs has mostly centered around "unitary \(t\)-groups," unitary \(t\)-designs  that also form a group. The representation theory of finite groups has allowed for the complete classification of such objects, see e.g. Ref. \citenum{bannai_unitary_2018}. The key idea is the 
representation-theoretic characterization of group $t$-designs given below
(see e.g., Theorem 3 of Ref. \citenum{gross_evenly_2007}, Proposition 3 of Ref. \citenum{zhu_clifford_2016}, or Proposition 34 of Ref. \citenum{mele_introduction_2023}). 
We now give this and several standard equivalent conditions, referring to Appendix~\ref{app:rep theory} and \ref{app:rep and design exposition} for the representation-theoretic terminology and more detailed discussion. 
This result also mentions a characterization in terms of the \emph{frame potential} $F^{(t)}_{\mc{E}}$ of an ensemble $\mathcal{E}$ of unitaries, which 
is given by $F^{(t)}_\mc{E} = \frac{1}{\left| \mathcal{E} \right|^{2}} \sum\limits_{U,V \in \mathcal{E}}^{} \left| \operatorname{Tr}\left[ U^{\dagger}V \right] \right|^{2t}$ in the discrete case and $F^{(t)}_\mc{E} = \frac{1}{\mu(\mc{E})^2} \int_{\mc{E}}\int_{\mc{E}} d\mu(U)d\mu(V)\left| \operatorname{Tr}\left[ U^{\dagger}V \right] \right|^{2t}$ in the continuous case (where $\mu$ is the Haar measure on $U(d)$). 
(See Ref.~\citenum{mele_introduction_2023} for a general treatment.) 
For $U(d)$ and $t\leq d$, we have
$F^{(t)}_{U(d)} = t!$ \cite{zhu_clifford_2016}.

\begin{proposition}
\label{prop:designs-equivalent-defns}
Let \(G \subseteq U(d)\) be a (finite or compact) subgroup of the unitary group. The following are equivalent:
\begin{enumerate}
    \item \(G\) is a unitary \(t\)-design.
    \item The \(t\)-copy diagonal action of \(G\), denoted as \(\tau^{t}(G)\), decomposes into the same number of irreps as the \(t\)-copy diagonal action of \(U(d)\). 
    \item \(\tau^{t}(G)\) has the same commutant as \(\tau^{t}(U(d))\).
    \item $F^{(t)}_{G} = F^{(t)}_{U(d)}$. 
    \label{cond:frame potential}
\end{enumerate}
\end{proposition}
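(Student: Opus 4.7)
The plan is to organize the proof around the twirl $\Phi_G^{(t)}$ as an orthogonal projection, from which all four conditions fall into place. First I would establish the key lemma: for any compact subgroup $G\subseteq U(d)$, viewing $\Phi_G^{(t)}$ as a linear map on $\mathcal{L}(\mathcal{H}^{\otimes t})$ with the Hilbert-Schmidt inner product, $\Phi_G^{(t)}$ is the orthogonal projection onto the commutant $\mathrm{Comm}(\tau^t(G))$. The three things to check are (i) left/right $G$-invariance of the Haar measure implies $\Phi_G^{(t)}(X)\in\mathrm{Comm}(\tau^t(G))$ for every $X$, so the image lies in the commutant; (ii) for $Y\in\mathrm{Comm}(\tau^t(G))$ the integrand is constant in $U$, so $\Phi_G^{(t)}(Y)=Y$; and (iii) $\Phi_G^{(t)}$ is self-adjoint with respect to the HS inner product, using $(U^{\otimes t})^\dagger=(U^\dagger)^{\otimes t}$ together with the substitution $U\mapsto U^\dagger$ (which preserves Haar measure). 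Together these three points identify $\Phi_G^{(t)}$ as the orthogonal projection onto $\mathrm{Comm}(\tau^t(G))$.

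Given this lemma, the equivalence $(1)\Leftrightarrow(3)$ is immediate: two orthogonal projections on the same ambient space are equal iff their images are equal, and the images are by construction the two commutants. Note also that $G\subseteq U(d)$ forces $\mathrm{Comm}(\tau^t(G))\supseteq\mathrm{Comm}(\tau^t(U(d)))$, so the equality in $(3)$ is the collapse of a generally strict inclusion. For $(3)\Leftrightarrow(4)$, I would compute the frame potential as an HS norm: since $\Phi_G^{(t)}$ is an orthogonal projection,
\begin{align}
F^{(t)}_G \;=\; \|\Phi_G^{(t)}\|_{\mathrm{HS}}^2 \;=\; \mathrm{rank}\bigl(\Phi_G^{(t)}\bigr) \;=\; \dim\mathrm{Comm}(\tau^t(G)),
\end{align}
where the first equality is the direct expansion $\|\Phi_G^{(t)}\|_{\mathrm{HS}}^2=\iint |\mathrm{Tr}(U^\dagger V)|^{2t}\,d\mu_G(U)\,d\mu_G(V)$ using $|\mathrm{Tr}(A)|^{2t}=\mathrm{Tr}(A^{\otimes t})\overline{\mathrm{Tr}(A^{\otimes t})}$. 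The same computation for $U(d)$ yields $F^{(t)}_{U(d)}=\dim\mathrm{Comm}(\tau^t(U(d)))$, and since the $G$-commutant contains the $U(d)$-commutant we always have $F^{(t)}_G\geq F^{(t)}_{U(d)}$, with equality iff the two commutants coincide.

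Finally, for $(3)\Leftrightarrow(2)$ I would invoke complete reducibility for compact groups: writing the isotypic decompositions $\tau^t(G)\cong\bigoplus_W W^{\oplus m_W}$ and $\tau^t(U(d))\cong\bigoplus_V V^{\oplus n_V}$, Schur's lemma gives $\dim\mathrm{Comm}(\tau^t(G))=\sum_W m_W^2$ and similarly for $U(d)$. Since the $G$-decomposition refines the $U(d)$-decomposition (each $V$ restricts to a sum of $G$-irreps, and distinct $V$'s may become $G$-isomorphic), equality of these sums is equivalent to each $U(d)$-irrep remaining irreducible on restriction to $G$ and to distinct $U(d)$-irreps remaining non-isomorphic as $G$-reps, i.e., to the two isotypic decompositions coinciding. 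This is exactly the content of $(2)$, interpreted in the isotypic sense.

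The main obstacle will be the self-adjointness/projection lemma: once it is cleanly in hand, the rest is bookkeeping. A secondary subtlety is the phrasing of $(2)$, which I would explicitly recast in terms of identical isotypic decompositions (or equivalently matching multiplicities), since merely equating the total number of irreducible summands does not by itself control the commutant dimension; I would flag this interpretation at the start of the $(3)\Leftrightarrow(2)$ step and refer to the representation-theory appendix for the underlying facts.
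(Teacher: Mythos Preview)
The paper does not supply its own proof of this proposition: it is stated as a standard result with citations to Gross et al., Zhu, and Mele, and Appendix~B.2 only glosses the representation-theoretic meaning without giving an argument. Your proposal is correct and is essentially the standard proof one finds in those references---organize everything around the fact that $\Phi_G^{(t)}$ is the orthogonal projection (in Hilbert--Schmidt) onto $\mathrm{Comm}(\tau^t(G))$, then read off each equivalence.

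One point worth keeping: your observation about the phrasing of (2) is well taken. The paper's own gloss in Appendix~B.2 interprets (2) as ``each $U(d)$-irrep appearing in $\mathcal{V}^{\otimes t}$ remains irreducible on restriction to $G$,'' which, taken literally, does not rule out the possibility that two non-isomorphic $U(d)$-irreps become isomorphic as $G$-representations while each remaining irreducible; in that scenario the total count $\sum m_W$ of irreducible summands would match but $\sum m_W^2$ would not. Your explicit recasting of (2) as ``identical isotypic decompositions'' (equivalently, equal multiplicity functions) is the precise statement needed for the chain of equivalences, and flagging this at the start of the $(3)\Leftrightarrow(2)$ step is the right move.
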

The ability to check if a subgroup of the unitary group forms a \(t\)-design then reduces to understanding its representation theory. 
In particular, writing $\mathcal{V} = \mathbb{C}^d$, $G$ is a $1$-design if and only if $\mathcal{V}$ is an irreducible $G$-representation, and $G$ is a $2$-design if and only if $\mathcal{V}\otimes \mathcal{V}$ has exactly $2$ irreducible $G$-subrepresentations 
(corresponding to the symmetric and anti-symmetric subspaces).

\subsection{The group of linear optical unitaries}\label{sec:LOUs as group}

As in Sec.~\ref{sec:loss dist}, we consider the Fock space $\mathcal{H} = \mc{H}_{n,m}$ of $n$ photons in $m$ modes, with dimension $d_{n,m} = \binom{n+m-1}{n}$ 
and Fock basis $\{ \ket{n_0, \dots, n_{m-1}}: \sum_i n_i = n, n_i\geq 0\}$. 
We may view $\mathcal{H}$ as the \emph{symmetric subspace} of $\left(\mathbb{C}^m\right)^{\otimes n}$, as follows. 
First, each photon has an $m$-dimensional state space $\mathbb{C}^m$ describing its position among the $m$ modes. 
The state space of $n$ such photons is contained in the appropriate tensor product, $\left(\mathbb{C}^m\right)^{\otimes n}$, 
with each copy of $\mathbb{C}^m$ indexing a different photon. 
Since photons are bosons, given a state of $n$ (ideal, indistinguishable) photons, one should not be able to determine which photon is which. 
Thus, only vectors that are symmetric with respect to permutations of the tensor factors are legitimate photonic states. 
We sometimes find it convenient, then, to view 
\begin{align}\label{eq:h symmetric}
    \mathcal{H} = \left( \left(\mathbb{C}^m\right)^{\otimes n}  \right)^{S_n}.
\end{align}
We also refer the reader to Ref. \citenum{harrow_church_2013} for an excellent review on the symmetric subspace and its applications in quantum information theory. 

\begin{remark}\label{rem:symm basis}
Let $v_0, \dots, v_{m-1}$ be a basis for the single-photon state space $\mathbb{C}^m$. 
From this perspective, the Fock basis vector $\ket{n_0, \dots, n_{m-1}}$ corresponds to the (normalized, unit vector) symmetrization of  $v_{m_1}\otimes\cdots\otimes v_{m_n},$  where $m_1, \dots, m_n$ is a weakly increasing list of integers in $[0,m-1]$ containing $n_0$ copies of $0$, $n_1$ copies of $1$, etc. 
In other words, whereas the Fock basis expression $\ket{n_0, \dots, n_{m-1}}$ tracks the number of photons in each mode, the above expression tracks, for each photon, which mode it is in (then symmetrizes so that the photons are indistinguishable). 
For example, in the case $n=m=2$, we have the correspondence
\begin{align*}
    \ket{2,0} = v_0\otimes v_0,\;
     \ket{0,2} = v_1\otimes v_1,\;
     \ket{1,1} = \frac{1}{\sqrt{2}}\left( v_0\otimes v_1 + v_1\otimes v_0  \right).
\end{align*}
\end{remark}
From this perspective, we may give a natural characterization of the group of linear optical unitaries as a subgroup of $U(\mathcal{H})$, the full unitary group of $\mathcal{H}$. 
We consider the unitary group $U(m)$, which naturally acts on the state space $\mathbb{C}^m$ of an individual photon. 
For $U\in U(m)$, $U^{\otimes n}$ acts on $\left(\mathbb{C}^m\right)^{\otimes n}$ and commutes with permutations of the tensor factors. Then by \eqref{eq:h symmetric}, we obtain an injective group homomorphism
\begin{align}\label{eq:LOU map}
    U(m)\ni U\mapsto U^{\otimes n}\in U(\mathcal{H}).
\end{align}
We let $U_{n,m}\cong U(m)$ denote the image of this homomorphism, the group of linear optical unitaries. 
Generally speaking, the results of this section are concerned with how $U_{n,m}$ sits inside $U(\mathcal{H})$. 
The cases $n=1$ and $m=1$ are trivial, as $U_{1,m} = U(m) = U(\mathcal{H})$ and $U_{n,1} = U(1)$. 
As these two cases exhibit rather different behavior from the general case, going forward we assume $n,m \geq 2$. 

\subsection{SNAP gates}\label{sec:snap}

We now briefly introduce another set of elements of $U(\mathcal{H})$, the SNAP (selective number-dependent arbitrary phase) gates. 
As we will see in Theorem~\ref{thm:zimboras}, linear optics assisted by nearly any other unitary is sufficient to generate a universal gate set in $U(\mathcal{H})$. 
In particular, we will show in Theorem~\ref{thm:snap} that linear optics with any nontrivial SNAP gate is sufficient for universality. 
We briefly remark on our choice of the SNAP gate as the "resourceful" gate on top of linear optical unitaries. For the single mode case this shows up naturally in cavity-QED systems where a cavity dispersively coupled to a qubit is used to perform quantum computation \cite{krastanov_universal_2015}. 
It was shown here that displacements on the infinite-dimensional Hilbert space of a single mode bosonic system along with SNAP gates is sufficient to perform universal quantum computation on any \(d\)-dimensional subspace of the bosonic mode. Namely, one can realize universal control of a single qudit via this approach. Our use of the SNAP gates is inspired by this result. 

For $0\leq k\leq n$, $0\leq s\leq m-1$, and $\theta\in\mathbb{R}$, define the SNAP gate
\begin{align}\label{def:snap}
    S_{k,s}(\theta) = \exp(i\theta \ketbra{k}_s),
\end{align}
where the subscript indicates that the action is on the $s$th mode. 
Note that the SNAP gates are diagonal in the Fock basis. 
Of course, if $\theta\equiv 0\mod 2\pi$, then $S_{k,s}(0)$ is the identity operator. 
Note that if $\theta \neq 0\mod 2\pi$, then $S_{k,s}(\theta)\not\in U_{n,m}$. 
This is quite intuitive using the symmetric subspace perspective above. 
For example, recalling the case $m=n=2$ from Remark~\ref{rem:symm basis}, we see that $S_{2,1}(\theta)$ has the following action: 
\begin{align*}
    v_0\otimes v_0 = \ket{2,0}&\mapsto e^{i\theta}\ket{2,0} = e^{i\theta} v_0\otimes v_0
    \\ v_1\otimes v_1 = \ket{0,2} &\mapsto \ket{0,2} = v_1\otimes v_1
    \\ \frac{1}{\sqrt{2}}\left( v_0\otimes v_1 + v_1\otimes v_0  \right) = \ket{1,1} &\mapsto \ket{1,1} =  \frac{1}{\sqrt{2}}\left( v_0\otimes v_1 + v_1\otimes v_0  \right).
\end{align*}
This clearly cannot arise from an action of the form $U\otimes U$. 

\subsection{Result: Continuous 2-designs}\label{sec:continuous}

In this section, we observe that there are no non-universal infinite closed (i.e. continuous) $2$-designs in $U(\mathcal{V})$ if $\dim\mathcal{V}\geq 2$. 
This is essentially Theorem 1.1.6(2) in Ref. \citenum{katz2004larsen}, and we closely follow the proof given there. 
This result was not stated or proved in the context of unitary $t$-design theory, 
so for the sake of accessibility we choose to restate it here and prove it in Appendix~\ref{app:continuous}. 
The main idea of the proof is as follows. First, due to Proposition~\ref{prop:designs-equivalent-defns}, $G\subseteq U(\mathcal{V})$ is a $2$-design if and only if $\mathcal{V}\otimes \mathcal{V}$ decomposes into exactly $2$ irreducible representations for $G$. 
Second, following Ref. \citenum{katz2004larsen}, one can prove that $\mathcal{V}\otimes \mathcal{V}^*$ decomposes into exactly $2$ irreducible representations for $G$ if and only if $SU(d)\subseteq G$. 
(Note that $\mathcal{V}\otimes \mathcal{V}^*\cong \mathfrak{gl}(\mathcal{V})$, which is very useful in the proof of this result.) 
To obtain the result, then, one needs only to show that $\mathcal{V}\otimes \mathcal{V}$ and $\mathcal{V}\otimes \mathcal{V}^*$ decompose into the same number of irreducible $G$-representations. 
This is a much more general fact that holds whenever the finite-dimensional representations of $G$ are known to be completely reducible. 

\begin{theorem}\label{thm:no 2 designs}
    Let $\mathcal{V}\cong\mathbb{C}^d$, with $d\geq 2$. 
    Let $G$ be an infinite closed subgroup of $U(d)=U(\mathcal{V})$. If $G$ is a unitary $2$-design, then $SU(d)\subseteq G$. In particular, any set of unitaries generating $G$ is a universal set. 
\end{theorem}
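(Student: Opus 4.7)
The plan is to execute the three-step strategy outlined just before the theorem. First I would translate the 2-design hypothesis into representation theory via Proposition~\ref{prop:designs-equivalent-defns}: $G$ is a $2$-design if and only if the commutant of $\tau^{2}(G)$ on $\mc{V}\otimes\mc{V}$ agrees with that of $\tau^{2}(U(d))$, a $2$-dimensional space spanned by the identity and the swap. Equivalently, $\dim \End_G(\mc{V}\otimes\mc{V}) = 2$.

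Next I would establish the bridge between $\mc{V}\otimes\mc{V}$ and $\mc{V}\otimes\mc{V}^{*}$ using characters. Since $G$ is a closed subgroup of $U(d)$ it is compact, so its finite-dimensional representations are completely reducible, and the standard intertwiner formula yields
\begin{align}
\dim \End_G(\mc{V}\otimes\mc{V}) = \int_G |\chi_\mc{V}(g)|^{4}\, d\mu(g) = \dim \End_G(\mc{V}\otimes\mc{V}^{*}),
\end{align}
because the character of $\mc{V}\otimes\mc{V}$ is $\chi_\mc{V}^{2}$ while that of $\mc{V}\otimes\mc{V}^{*}$ is $|\chi_\mc{V}|^{2}$, producing the same integrand in both dimension formulas. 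Hence the $2$-design condition is equivalent to $\mc{V}\otimes\mc{V}^{*}\cong \mathfrak{gl}(\mc{V})$ decomposing into exactly two non-isomorphic $G$-irreducibles, each of multiplicity one. Since $\mathbb{C}I$ is always a $G$-fixed line, its $G$-stable complement $\mathfrak{sl}(\mc{V})$ (under the adjoint action) must itself be $G$-irreducible.

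The endgame is a short Lie-algebra argument. Let $\mathfrak{g}\subseteq \mathfrak{u}(d)$ be the Lie algebra of $G$, with complexification $\mathfrak{g}_{\mathbb{C}}\subseteq \mathfrak{gl}(\mc{V})$. Because $\mathfrak{g}_{\mathbb{C}}$ is $G$-stable under $\mathrm{Ad}$, the intersection $\mathfrak{g}_{\mathbb{C}}\cap \mathfrak{sl}(\mc{V})$ is a $G$-subspace of the irreducible $\mathfrak{sl}(\mc{V})$, so it is either $0$ or all of $\mathfrak{sl}(\mc{V})$. The trivial case forces $\mathfrak{g}_{\mathbb{C}}\subseteq \mathbb{C}I$, so the identity component $G^{0}$ lies in the scalars; combined with $G$ being infinite and closed this places all of $G$ in $U(1)\cdot I$, which acts trivially by conjugation and contradicts $G$-irreducibility of $\mathfrak{sl}(\mc{V})$ when $d\geq 2$. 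Thus $\mathfrak{sl}(\mc{V})\subseteq \mathfrak{g}_{\mathbb{C}}$, and a brief skew-Hermitian check (writing a traceless skew-Hermitian $X$ as $A+iB$ with $A,B\in\mathfrak{g}\subseteq \mathfrak{u}(d)$ and comparing to $X^{*}=-X$ forces $B=0$) gives $\mathfrak{su}(d)\subseteq \mathfrak{g}$. Exponentiating and using connectedness of $SU(d)$ yields $SU(d)\subseteq G$, after which universality of any generating set follows because its products are dense in $G\supseteq SU(d)$, sufficient to approximate any unitary up to a global phase.

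I expect the main conceptual obstacle to be the passage from $\mc{V}\otimes \mc{V}$ to $\mc{V}\otimes \mc{V}^{*}$: these representations are not isomorphic in general, so one cannot simply match their decompositions. The character/commutant identity sidesteps this by comparing only dimensions of endomorphism spaces, which suffices once one notes that $\sum_i m_i^{2}=2$ forces exactly two distinct irreducible summands of multiplicity one. The remaining Lie-theoretic step---using irreducibility of $\mathfrak{sl}(\mc{V})$ to squeeze $\mathfrak{su}(d)$ inside $\mathfrak{g}$---is then a standard exercise.
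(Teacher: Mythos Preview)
Your strategy matches the paper's: translate the $2$-design hypothesis into $\dim\End_G(\mc{V}\otimes\mc{V})=2$, pass to $\mc{V}\otimes\mc{V}^{*}\cong\mathfrak{gl}(\mc{V})$ to conclude that $\mathfrak{sl}(\mc{V})$ is $G$-irreducible under $\mathrm{Ad}$, then run a Lie-algebra argument to force $\mathfrak{su}(d)\subseteq\mathfrak{g}$. The bridge step is handled slightly differently---the paper uses a direct chain of tensor isomorphisms of $G$-invariants, $\End_G(\mc{V}\otimes\mc{V})\cong((\mc{V}\otimes\mc{V}^{*})\otimes(\mc{V}\otimes\mc{V}^{*})^{*})^{G}\cong\End_G(\mc{V}\otimes\mc{V}^{*})$, rather than your character integral $\int_G|\chi_{\mc{V}}|^{4}$---but these are equivalent formulations of the same fact.

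The gap is in your endgame. From $\mathfrak{g}_{\mathbb{C}}\cap\mathfrak{sl}(\mc{V})=0$ you correctly get $G^{0}\subseteq U(1)\cdot I$, but the claim that ``$G$ infinite and closed'' then forces \emph{all} of $G$ into the scalars is false: an infinite closed subgroup of $U(d)$ need not be connected. Concretely, for $d=2$ let $G=U(1)\cdot Cl_{1}$ (scalars times the single-qubit Clifford group): this is infinite, closed, a $2$-design (global phases do not affect the twirl), and has $G^{0}=U(1)\cdot I$, yet it is one-dimensional and does not contain $SU(2)$. The paper's route at this step is different: it exhibits $[\mathfrak{g},\mathfrak{g}]$ as a nonzero $G$-submodule of $\mathfrak{sl}(\mc{V})$, arguing that $\mc{V}$ irreducible of dimension $>1$ makes $G$ non-abelian. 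You should scrutinize that argument too, since for disconnected $G$ the implication ``$G$ non-abelian $\Rightarrow[\mathfrak{g},\mathfrak{g}]\neq 0$'' is precisely the same delicate point (the same example has abelian Lie algebra $i\mathbb{R}I$). A clean fix for either version is to impose a hypothesis ruling out central $G^{0}$---for instance connectedness of $G$, or that $G^{0}$ already acts irreducibly on $\mc{V}$.
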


In the following section, we apply this result to observe that there are no non-universal extensions of linear optics to a $2$-design. 

\subsection{Results on linear optical t-designs and universality}\label{sec:design results}

Since $U_{n,m}$ is not universal in $U(\mathcal{H})$, we instead ask whether it is a $t$-design for sufficiently large $t$. 
We first observe that the linear optical unitaries form a $1$-design. This result is quite straightforward and is surely known; however, as we could not find an explicit statement in the literature, we provide the proof in Appendix~\ref{app:1 design}. 

\begin{proposition}\label{prop:1 design}
    Let $n,m\geq 2$. 
    The group $U_{n,m}$ of linear optical unitaries is a $1$-design in $U(\mathcal{H})$. 
\end{proposition}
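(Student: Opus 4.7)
The plan is to invoke the representation-theoretic criterion from Proposition~\ref{prop:designs-equivalent-defns}: a closed subgroup $G \subseteq U(\mathcal{H})$ is a $1$-design if and only if the defining representation $\mathcal{H}$ is irreducible under $G$. So the task reduces to proving that $\mathcal{H}$, viewed as a $U_{n,m}$-module via the homomorphism \eqref{eq:LOU map}, is irreducible. This is morally just the classical fact that $\mathrm{Sym}^n(\mathbb{C}^m)$ is the irreducible $U(m)$-representation of highest weight $(n,0,\dots,0)$, but it deserves a translation into the notation of the present paper.

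First, I would adopt the symmetric-subspace viewpoint from \eqref{eq:h symmetric} and Remark~\ref{rem:symm basis}, identifying $\mathcal{H}$ with $\mathrm{Sym}^n(\mathbb{C}^m) \subseteq (\mathbb{C}^m)^{\otimes n}$ and letting $U_{n,m}$ act diagonally as $U^{\otimes n}$. The cleanest route is then Schur--Weyl duality: the commuting actions of $U(m)$ and $S_n$ decompose $(\mathbb{C}^m)^{\otimes n}$ multiplicity-free as $\bigoplus_{\lambda} V_{\lambda}\otimes S_{\lambda}$, where $\lambda$ ranges over partitions of $n$ with at most $m$ parts. The symmetric subspace is exactly the isotypic component of the trivial $S_n$-representation, which occurs only for $\lambda=(n)$, giving $\mathcal{H}\cong V_{(n)}$ as a $U(m)$-module. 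Since $V_{(n)}$ is an irreducible $U(m)$-representation, irreducibility of $\mathcal{H}$ under $U_{n,m}$ follows immediately.

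A more hands-on alternative, perhaps preferable for readability, is to show directly that the commutant of $U_{n,m}$ in $\mathrm{End}(\mathcal{H})$ consists only of scalars (then invoke Schur's lemma). One way: by Schur--Weyl the commutant of $\{U^{\otimes n}\}$ on $(\mathbb{C}^m)^{\otimes n}$ is generated by the permutation action of $S_n$, which acts as the identity on $\mathrm{Sym}^n(\mathbb{C}^m)$; restricting, the commutant on $\mathcal{H}$ collapses to $\mathbb{C}\,\mathbb{I}$. A second, more elementary argument: use the fact that $\mathcal{H}$ is spanned by the "diagonal" vectors $\{v^{\otimes n} : v\in\mathbb{C}^m\}$ (a standard polarization identity for symmetric tensors), together with the observation that $U(m)$ acts transitively on unit vectors of $\mathbb{C}^m$, to show that the $U_{n,m}$-orbit of $v_0^{\otimes n}=\ket{n,0,\dots,0}$ has dense linear span; combined with unitary invariance one concludes that any nonzero closed invariant subspace must be all of $\mathcal{H}$.

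The main obstacle is not really conceptual but rather exegetical: the result amounts to repackaging a textbook fact about $U(m)$-representations in the $t$-design vocabulary, so the writeup should be careful to identify precisely which classical ingredient (Schur--Weyl, or the polarization of symmetric tensors, or the irreducibility of highest-weight modules) is being cited, and to make sure the hypothesis $n,m\geq 2$ is used only insofar as it rules out the trivial edge cases mentioned at the end of Sec.~\ref{sec:LOUs as group} (for $n=1$ or $m=1$ the statement is either vacuous or automatic). Everything else is bookkeeping.
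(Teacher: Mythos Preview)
Your proposal is correct and, like the paper, reduces the claim via Proposition~\ref{prop:designs-equivalent-defns} to the irreducibility of $\mathcal{H}$ as a $U(m)$-module; the routes to that irreducibility, however, differ. You invoke Schur--Weyl duality to identify $\mathrm{Sym}^n(\mathbb{C}^m)$ as the $S_n$-trivial isotypic component of $(\mathbb{C}^m)^{\otimes n}$, hence as the single irreducible $V_{(n)}$. The paper instead argues by highest-weight theory: it observes that the diagonal torus acts on Fock basis vectors by $\ket{n_1,\dots,n_m}\mapsto d_1^{n_1}\cdots d_m^{n_m}\ket{n_1,\dots,n_m}$, so $\mathcal{H}$ has highest weight $(n,0,\dots,0)$ and therefore contains $V_{(n,0,\dots,0)}$; the Weyl dimension formula then gives $\dim V_{(n,0,\dots,0)}=\binom{n+m-1}{n}=\dim\mathcal{H}$, forcing equality. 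The paper also sketches an elementary variant closer in spirit to your ``hands-on'' alternative, noting that the ladder operators $a_i^\dagger a_j\in\mathfrak{u}_{n,m}$ act transitively on the Fock basis (one-dimensional weight spaces), which yields irreducibility without appeal to Schur--Weyl or dimension formulas. Your Schur--Weyl argument is cleaner if one is willing to import that machinery wholesale; the paper's weight-plus-dimension argument stays within more elementary highest-weight bookkeeping and makes the Fock-basis structure explicit.
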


However, $U_{n,m}$ is not a $t$-design for $t\geq 2$. This may be proven directly using representation theory, but it is also a trivial consequence of Theorem~\ref{thm:no 2 designs}:

\begin{corollary}\label{cor:2 design}
Let $n,m\geq 2$. Let $G$ be a closed subgroup of $U(\mathcal{H})$ extending the linear optical unitaries, $U_{n,m}\subseteq G\subseteq U(\mathcal{H})$. 
If $G$ is a $2$-design in $U(\mathcal{H})$, then $G$ is universal. 
In particular, since $U_{n,m}$ is not universal, it is not a $t$-design for $t\geq 2$. 
\end{corollary}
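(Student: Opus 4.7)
The plan is to apply Theorem~\ref{thm:no 2 designs} directly, taking $\mathcal{V} = \mathcal{H}$ so that $d = d_{n,m} = \binom{n+m-1}{n}$. To invoke it, I first need to verify its three hypotheses. The dimension condition $d_{n,m}\geq 2$ holds for $n,m\geq 2$, since $d_{2,2}=3$ and $d_{n,m}$ is nondecreasing in both $n$ and $m$. Closedness of $G$ is given by assumption. For infiniteness, I would observe that $U_{n,m}$ is the image of $U(m)$ under the injective homomorphism \eqref{eq:LOU map}, so $U_{n,m}\cong U(m)$ is already an infinite (positive-dimensional Lie) subgroup of $G$, and hence $G$ is infinite.

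Granted these hypotheses, Theorem~\ref{thm:no 2 designs} immediately yields that if $G$ is a $2$-design then $SU(d_{n,m})\subseteq G$. Combining this with the global phases $e^{i\theta} I \in U_{n,m}\subseteq G$ (arising from scalar unitaries in $U(m)$ under the map \eqref{eq:LOU map}), we obtain $G = U(\mathcal{H})$, so any generating set of $G$ is universal. This establishes the first claim.

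For the second claim I would argue by contrapositive. The linear optical unitaries form a proper subgroup of $U(\mathcal{H})$ when $n,m\geq 2$: by real dimension counting, $\dim U_{n,m} = \dim U(m) = m^2 < d_{n,m}^2 = \dim U(\mathcal{H})$, since $d_{n,m}>m$ whenever $n\geq 2$ and $m\geq 2$. In particular $U_{n,m}$ is not universal, so by the first claim (applied with $G = U_{n,m}$) it cannot be a $2$-design. Since every $t$-design with $t\geq 2$ is automatically a $2$-design, $U_{n,m}$ is not a $t$-design for any $t\geq 2$.

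The substantive work is entirely contained in Theorem~\ref{thm:no 2 designs}; this statement reduces to checking that its hypotheses apply in the photonic setting and to separately ruling out universality of $U_{n,m}$ by a dimension count. I do not anticipate any real obstacle, which is consistent with the statement being labeled a corollary.
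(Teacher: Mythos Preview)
Your proposal is correct and follows precisely the approach the paper indicates: the paper simply remarks that the corollary ``is also a trivial consequence of Theorem~\ref{thm:no 2 designs},'' and you have supplied exactly the missing verifications (infiniteness of $G$ via $U_{n,m}\cong U(m)$, the dimension bound $d_{n,m}\geq 2$, and the dimension count $m^2<d_{n,m}^2$ showing $U_{n,m}$ is not universal). One very minor point: the scalar $e^{i\theta}I_m\in U(m)$ maps under \eqref{eq:LOU map} to $e^{in\theta}I_{\mathcal{H}}$ rather than $e^{i\theta}I_{\mathcal{H}}$, but since $\theta$ is arbitrary this still yields all of $U(1)$, and in any case Theorem~\ref{thm:no 2 designs} already asserts universality directly once $SU(d)\subseteq G$.
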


As a result of Corollary~\ref{cor:2 design}, if we want a $2$-design extending $U_{n,m}$, we are forced to work with a universal set. 
The natural question is whether we can at least obtain a nice generating set for $U(\mathcal{H})$. 
Oszmaniec and Zimbor\'{a}s \cite{oszmaniec2017universal} showed that linear optics augmented by nearly any additional gate is sufficient to densely generate $U(\mathcal{H})$. We further discuss this result in Appendix~\ref{app:zimboras}. 

\begin{theorem}[\cite{oszmaniec2017universal}]
\label{thm:zimboras}
Consider the group $U_{n,m}$ of linear optical unitaries for $n\geq 2$ photons in $m\geq 2$ modes. Let $V\in U(\mathcal{H})$ be any unitary gate with $V\not\in U_{n,m}$. 
   \begin{enumerate}
        \item For $m>2$, the group generated by $U_{n,m}$ and $V$ is universal. 
        \item For $m=2$, define 
            $\ket{\zeta} = \sum_{a=0}^n (-1)^a \ket{a, n-a}\otimes\ket{n-a, a}.$
        If 
        \begin{align}\label{eq:zeta comm}
            [V\otimes V, \ketbra{\zeta}] \neq 0,
        \end{align}
        then the group generated by $U_{n,m}$ and $V$ is universal. 
    \end{enumerate}
\end{theorem}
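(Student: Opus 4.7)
The plan is to analyze the closed subgroup $G\subseteq U(\mc{H})$ generated by $U_{n,m}$ and $V$ through its Lie algebra. By Cartan's closed subgroup theorem, $G$ is a closed Lie subgroup of $U(\mc{H})$, so it suffices to show that its Lie algebra $\mathfrak{g}$ is all of $\mathfrak{u}(\mc{H})$. Since $U_{n,m}$ is connected and contained in $G$, the subspace $\mathfrak{g}$ contains $\mathfrak{u}_{n,m}$ and is $\mathrm{Ad}(U_{n,m})$-invariant; and since $V\in G$, we also have $\mathrm{Ad}(V)(\mathfrak{g}) = \mathfrak{g}$.

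First I would decompose $\mathfrak{u}(\mc{H})$ as an $\mathrm{Ad}(U_{n,m})$-representation. Using the identification $\mc{H}\cong \mathrm{Sym}^n(\mathbb{C}^m)$ from Remark~\ref{rem:symm basis}, the complexified Lie algebra $\mathrm{End}(\mc{H})$ is isomorphic as a $U(m)$-representation to $\mathrm{Sym}^n(\mathbb{C}^m)\otimes \mathrm{Sym}^n(\mathbb{C}^m)^*$. Pieri's rule (or Littlewood--Richardson) gives an explicit decomposition into pairwise non-isomorphic irreducible summands, among which are the trivial line $\mathbb{C}\cdot I$ and the adjoint copy carrying $\mathfrak{u}_{n,m}$ itself. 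For $m\geq 3$ the decomposition will turn out to be rigid enough that no proper extension of $\mathfrak{u}_{n,m}$ is Lie-closed; for $m=2$ the decomposition accommodates one exceptional summand related to the distinguished vector $\ket{\zeta}$.

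Next I would classify all $\mathrm{Ad}(U_{n,m})$-invariant Lie subalgebras lying strictly between $\mathfrak{u}_{n,m}$ and $\mathfrak{u}(\mc{H})$. By $\mathrm{Ad}(U_{n,m})$-invariance, any such candidate $\mathfrak{h}$ is a direct sum of full irreducible summands from the decomposition; Lie-bracket closure, combined with equivariance and Schur's lemma, reduces to a finite list of coefficient computations between pairs of summands. For $m\geq 3$ one verifies that adjoining any single nontrivial summand to $\mathfrak{u}_{n,m}$ forces all the others to appear via iterated brackets, so no proper intermediate subalgebra exists. For $m=2$, exactly one maximal proper $\mathfrak{h}$ survives, and I would identify it explicitly as the stabilizer in $\mathfrak{u}(\mc{H})$ of the rank-one operator $\ketbra{\zeta}$ under the diagonal action $X\mapsto X\otimes I + I\otimes X$ on $\mc{H}\otimes\mc{H}$.

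Given the classification, the conclusion is immediate. Since $V\notin U_{n,m}$ and $V$ normalizes the connected subgroup $\exp(\mathfrak{g})\subseteq G$, the subalgebra $\mathfrak{g}$ strictly exceeds $\mathfrak{u}_{n,m}$; hence either $\mathfrak{g} = \mathfrak{u}(\mc{H})$ (and we are done) or $m=2$ and $\mathfrak{g}$ equals the exceptional stabilizer. In the latter case $V$ itself would have to stabilize $\ketbra{\zeta}$, i.e.\ $[V\otimes V,\ketbra{\zeta}]=0$, directly contradicting the hypothesis in part (2). In every permitted case we conclude $\mathfrak{g} = \mathfrak{u}(\mc{H})$, so $G$ is dense in $U(\mc{H})$ and the generated set is universal. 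The main obstacle is the classification step: producing a clean list of the irreducible $U(m)$-components of $\mathrm{End}(\mc{H})$, verifying their bracket structure carefully enough to rule out proper extensions for $m\geq 3$, and for $m=2$ matching the unique exceptional subalgebra with the concrete commutation condition on $\ketbra{\zeta}$.
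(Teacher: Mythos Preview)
The paper does not supply its own proof of this theorem: it is quoted from Oszmaniec and Zimbor\'{a}s, and Appendix~\ref{app:zimboras} only adds the remark that $\ket{\zeta}$ spans the unique one-dimensional $U_{n,2}$-subrepresentation of $\mc{H}\otimes\mc{H}$. Your outline is in the same spirit as that remark and as the Lie-algebraic classification in the original reference.

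There is, however, one genuine gap. In your concluding paragraph you infer from $V\notin U_{n,m}$ (together with $V$ normalizing $\exp(\mathfrak{g})$) that $\mathfrak{g}\supsetneq\mathfrak{u}_{n,m}$. This does not follow, because $G$ need not be connected. If $V$ happened to normalize $U_{n,m}$, i.e.\ $\mathrm{Ad}(V)(\mathfrak{u}_{n,m})=\mathfrak{u}_{n,m}$, then the identity component $G^0$ could still equal $U_{n,m}$ with $V$ sitting in a nontrivial coset of $G/G^0$, and nothing in your argument rules this out. You therefore need a separate step showing $N_{U(\mc{H})}(U_{n,m})=U_{n,m}$: Schur's lemma handles the centralizer, and one must check that no outer automorphism of $U(m)$ preserves the isomorphism class of $\mathrm{Sym}^n(\mathbb{C}^m)$. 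The parallel issue in the $m=2$ exceptional branch (concluding that $V$ itself commutes with $\ketbra{\zeta}$ from $\mathfrak{g}$ being the stabilizer) can be patched more directly, since $(V\otimes V)\ketbra{\zeta}(V\otimes V)^{-1}$ is again invariant under the diagonal action of $G^0\supseteq U_{n,2}$ and $\ketbra{\zeta}$ is the unique such rank-one projector; but the normalizer computation for $U_{n,m}$ is genuinely missing from your sketch and without it the final step does not close.
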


In Appendix~\ref{app:snap}, we prove that the extension of linear optics by any single SNAP gate at a fixed (nontrivial) angle, on a fixed mode $s$, and acting upon a fixed photon number $k$, is sufficient to give universality: 

\begin{theorem}\label{thm:snap}
Let $n,m\geq 2$, $0\leq k\leq n$, $0\leq s\leq m-1$. 
Let $\theta\in\mathbb{R}$ with $\theta\not\equiv 0\mod 2\pi$. 
Then the group generated by $U_{n,m}$ and $S_{k,s}(\theta)$ is universal. 
\end{theorem}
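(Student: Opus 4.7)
The plan is to apply Theorem~\ref{thm:zimboras} with $V = S_{k,s}(\theta)$. This reduces the problem to verifying two conditions: (i) $S_{k,s}(\theta) \notin U_{n,m}$, which is needed in both parts of that theorem, and (ii) for the two-mode case $m = 2$, the commutator condition $[V\otimes V, \ketbra{\zeta}] \neq 0$.

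For (i), I would use a spectral argument. Any $U \in U(m)$ is unitarily diagonalizable with eigenphases $\phi_0,\ldots, \phi_{m-1}$, and the spectrum of $U^{\otimes n}\rvert_{\mathrm{sym}}$ consists of the numbers $\exp(i\sum_j n_j \phi_j)$ over multi-indices with $\sum_j n_j = n$. Since $S_{k,s}(\theta)$ is diagonal in the Fock basis with eigenvalue $e^{i\theta}$ exactly on states with $n_s = k$ (and $1$ elsewhere), the eigenspace decomposition forces a reduction to the case where $U$ is already diagonal in the standard basis of $\mathbb{C}^m$. Varying the input multi-indices then extracts a family of congruences modulo $2\pi$ on the $\phi_j$. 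Writing $\phi$ for the common value of $\phi_j$ for $j \neq s$ (which is itself forced by consistency across patterns with fixed $n_s$) and $\psi = \phi_s$, comparing the equations obtained at $n_s = 0, 1, 2$ and $n_s = k$ would yield the contradiction $\theta \equiv 0 \bmod 2\pi$.

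For (ii), since $V$ is diagonal in the Fock basis with $V\ket{a,n-a} = \lambda_a \ket{a,n-a}$ where $\lambda_a \in \{1, e^{i\theta}\}$ depending on whether the relevant mode occupancy equals $k$, a direct computation gives
\begin{align*}
V\otimes V \ket{\zeta} = \sum_{a=0}^n (-1)^a \lambda_a \lambda_{n-a} \ket{a,n-a}\ket{n-a,a}.
\end{align*}
This is a scalar multiple of $\ket{\zeta}$ precisely when the product $\lambda_a \lambda_{n-a}$ is independent of $a$. Comparing the value at $a = k$ (where exactly one factor is $e^{i\theta}$, unless $k = n-k$) against the value at $a = 0$ would demonstrate non-constancy whenever $\theta \not\equiv 0 \bmod 2\pi$, outside of a small set of degenerate parameter combinations.

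The main obstacle is the exceptional case $n = 2, k = 1, \theta = \pi$, available for any $m \geq 2$, in which $S_{1,s}(\pi)$ in fact coincides with the linear optical unitary $\mathrm{diag}(1,-1,\ldots,-1)^{\otimes 2}\rvert_{\mathrm{sym}}$; here both the spectral contradiction of step (i) and the commutator argument of step (ii) break down, because $V$ itself is linear optical. A careful proof must either interpret "nontrivial SNAP gate" so as to exclude such gates that happen to land inside $U_{n,m}$, or treat the small exceptional configurations by a separate argument. In all other cases the two-step approach above, combined directly with Theorem~\ref{thm:zimboras}, should close the proof.
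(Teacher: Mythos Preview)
Your approach is the same as the paper's: invoke Theorem~\ref{thm:zimboras}, using $V\notin U_{n,m}$ for $m>2$ and the $\ket{\zeta}$ commutator for $m=2$. The paper likewise computes $V^{\otimes 2}\ket{\zeta}$ directly, splitting into the cases $n\neq 2k$ and $n=2k$.

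You have in fact caught something the paper glosses over. The exception at $(n,k,\theta)=(2,1,\pi)$ is genuine: $S_{1,s}(\pi)$ is the image of $\mathrm{diag}(-1,1,\ldots,1)\in U(m)$, so the statement is false there for every $m\geq 2$. The paper asserts earlier that any nontrivial SNAP gate lies outside $U_{n,m}$ and, in the $n=2k$ branch, writes $V^{\otimes 2}\ket{\zeta}-\ket{\zeta}=(e^{2i\theta}-1)(-1)^k\ket{k,k}\otimes\ket{k,k}$ without noting that this vanishes at $\theta=\pi$.

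However, your bookkeeping of the degenerate set is incomplete, and your proposed fix does not cover it. For $m=2$ and $n=2k$ (any $k\geq 1$) one has $\lambda_a\lambda_{n-a}=1$ for $a\neq k$ and $\lambda_k^2=e^{2i\theta}$, so $V^{\otimes 2}\ket{\zeta}=\ket{\zeta}$ precisely when $\theta\equiv\pi\bmod 2\pi$. In that case the closed group $G=\overline{\langle U_{2k,2},\,V\rangle}$ still fixes the line $\mathbb{C}\ket{\zeta}\subset\mathrm{Sym}^2\mathcal{H}$; since $\dim\mathrm{Sym}^2\mathcal{H}=\binom{2k+2}{2}>1$, this line is a proper $G$-subrepresentation and $G$ cannot contain $SU(\mathcal{H})$. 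For $k\geq 2$ one checks (your own spectral argument at $a=0,1,k$ does it) that $S_{k,0}(\pi)\notin U_{2k,2}$, so here the failure is \emph{not} that $V$ accidentally lies in $U_{n,m}$. Thus ``exclude SNAP gates that land inside $U_{n,m}$'' does not repair the statement; the exceptional set one must carve out is $\{m=2,\ n=2k,\ \theta\equiv\pi\}\cup\{n=2,\ k=1,\ \theta\equiv\pi,\ m\geq 2\}$.
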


In Corollary 1 of Ref. \citenum{boulandGenerationUniversalLinear2014}, Bouland proved that if $m\geq 3$, one may densely generate $U_{n,m}$ using copies of a single beamsplitter. 
Combined with 
Theorem~\ref{thm:snap}, this implies that $U(\mathcal{H})$ may be generated using a single SNAP gate (with fixed mode, angle, and photon number) and a single beamsplitter (acting on different pairs of modes).

\section{Discussion}\label{sec:discussion}
In Sec.~\ref{sec:loss dist}, we discussed paradigms for the simulation of loss and distinguishability errors in photonics, and we applied these paradigms to investigate the impact of distinguishability errors on Type I fusion, generalized Type II fusion (the $n$-GHZ state analyzer), and $n$-GHZ state generation. 
For low-order errors, one may do the calculations by hand, as in Secs. \ref{sec:fusion_dist_statements} and \ref{sec:dist ghz}. 
To incorporate higher-order errors in Bell state generation in Sec.~\ref{sec:numerics}, we turned to numerical simulation. 
In the longer term, it will be essential to simulate circuits larger than the BSG circuit, an obvious example being $n$-GHZ state generation. More generally, as in fusion-based quantum computation, one often wants to generate complex linear optical states by starting from small entangled states, such as GHZ states, and applying entangling operations, such as fusion \cite{browne_resource-efficient_2005, bartolucci2021creation, bartolucci2023fusion}. 
In order to understand the impact of distinguishability and other errors on these more complex circuits, a variety of techniques may be employed. 
Of course, for sufficiently small circuits we may proceed directly as in Sec.~\ref{sec:numerics}. 
Our methods may be applied to the $3$-GHZ case, for example, with little change. 
In certain circumstances, we may achieve more efficient simulations by using the coherent state approximations of Sec.~\ref{sec:coherent}. 
This can be used to expand the range of simulable circuits. 
For sufficiently large circuits, however, none of these paradigms will be sufficient. 
Instead, we may use the low-order approximations of Theorem~\ref{thm:ghz gen}, and especially the ideas of Remark~\ref{rem:stabilizer sims}, to develop larger-scale low-order simulations. 
When the error rates are sufficiently small, these techniques would allow for a stabilizer-type simulation, where generated GHZ states are viewed as ideal states with sampled Pauli and distinguishability errors applied. 
Such calculations are especially clean when suitably simple distinguishability error models such as the OBB, SBB, and OBP models of Sec.~\ref{sec:numerics} are used.

In Sec.~\ref{sec:coherent} it was argued that the fact that linear optical (LO) unitaries only form a 1-design (recall Prop.~\ref{prop:1 design} and Cor.~\ref{cor:2 design}) has implications on the classical simulability of Fock basis measurements in LO. It was also shown in Th.~\ref{thm:snap} that including SNAP operations promotes the group to universality (or in other words, the augmented group forms a $t$-design for all $t$). However, this statement is a slight idealization, for in a practical setting one cannot apply an arbitrary number of SNAP gates. In fact, SNAP gates are quite expensive when implemented via LO and post-selection (required to enact effective non-linearity); for example, the SNAP gate $S_{n,s}(\pi)$, known as the non-linear sign (NS) gate \cite{kok_linear_2007}, is implemented only with probability $1/n^2$. Clearly one cannot expect to apply too many of these. One pertinent question is the number of SNAP (or NS) gates required to promote LO to an approximate $2$-design. (We refer the reader to Refs.  \cite{zhu_clifford_2016, mele_introduction_2023} for definitions of approximate $t$-designs and related notions.) 
In Fig.~\ref{fig:welch_ratio_LO}, we numerically study this via the 
\emph{Welch ratio}, the ratio of frame potentials 
$F^{(2)}_{\mc{E}}/F^{(2)}_{U(d)}$, 
where $\mc{E}$ is an ensemble of products of LO unitaries and $N_{NS}$ NS gates. 
This ratio must approach $1$, since $U_{n,m}$ with NS gates is universal and thus a $2$-design; the question is the rate of convergence in terms of $N_{NS}$.

\begin{figure}
    \centering
    \includegraphics[width=0.45\columnwidth]{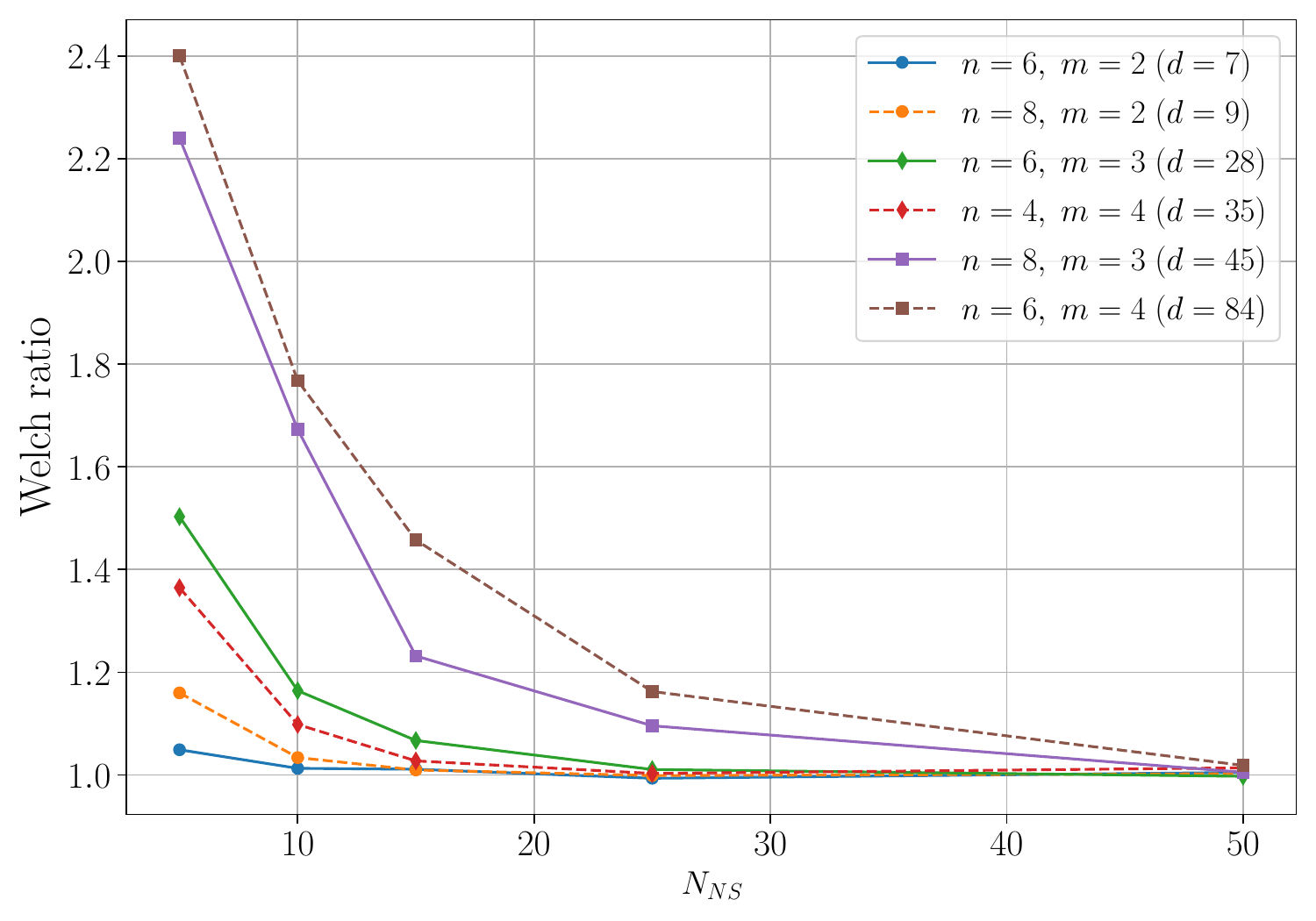}
    \caption{We plot the Welch ratio for ensembles of unitaries of the form $U_1 S_{n,s_1}(\pi)\cdots U_{N} S_{n,s_{N}}(\pi) U_{N+1}$; 
    the $U_i\in U_{n,m}$ are Haar-random, the modes $s_i$ of the NS gates are chosen uniformly at random, and $N=N_{NS}$ counts the number of NS gates. 
    We plot the ratio 
    for varying values of $n$ and $m$ shown in the legend, with $d=d_{n,m}$. 
    }
    \label{fig:welch_ratio_LO}
\end{figure}

\acknowledgments 
 
We are grateful for support from the NASA SCaN program, from DARPA under IAA 8839, Annex 130, and from  NASA Ames Research Center.
J.M. is thankful for support from NASA Academic Mission Services, Contract No. NNA16BD14C. N.A. is a KBR employee working under the Prime Contract No. 80ARC020D0010 with the NASA Ames Research Center. The United States Government retains, and by accepting the article for publication, the publisher acknowledges that the United States Government retains, a nonexclusive, paid-up, irrevocable, worldwide license to publish or reproduce the published form of this work, or allow others to do so, for United States Government purposes.

\bibliography{report} 

\appendix
\section{Resource state generation}\label{sec:rsg proofs}

\subsection{Proof sketches for Sec.~\ref{sec:fusion_dist_statements}}

We briefly sketch the proofs of the results in Sec.~\ref{sec:fusion_dist_statements}. 

Lemma~\ref{lemma:x measurement} is immediate from the fact that success is heralded if and only if the measurement patterns $(1,0), (0,1)$ are obtained. 

For Proposition~\ref{prop:x measurement}, we begin with 
the terms $c_{10}\ket{\phi_{10}}\cre{0}(\xi)\vac + c_{01}\ket{\phi_{01}}\cre{1}(\xi')\vac$ and apply the beamsplitter to obtain
\begin{align}
    ((c_{10}\ket{\phi_{10}}\cre{0}(\xi) + c_{01}\ket{\phi_{01}}\cre{0}(\xi')) + (c_{10}\ket{\phi_{10}}\cre{1}(\xi) - c_{01}\ket{\phi_{01}}\cre{1}(\xi')))\vac.
\end{align}
The measurement pattern $(1,0)$ projects onto the first term, while $(0,1)$ projects onto the second. 
We consider the first case; the second will be similar. 
Expressing the resulting state in terms of the occupation number and internal states simultaneously (referred to as the logical-internal basis in Ref. \citenum{sparrowQuantumInterferenceUniversal2017}), we obtain
\begin{align}
    \left(c_{10}\ket{\phi_{10}}\ket{1,0}\ket{\xi} + c_{01}\ket{\phi_{01}}\ket{1,0}\ket{\xi'}\right).
\end{align}
Taking the corresponding mixed state, then taking the partial trace over the internal states in the two input modes (and dropping the now-redundant $\ket{1,0}$ factor), we obtain
\begin{align*}
    &|c_{10}|^2\braket{\xi}{\xi}\ketbra{\phi_{10}} 
    + |c_{01}|^2\braket{\xi'}{\xi'}\ketbra{\phi_{01}}
    + c_{10} \overline{c_{01}}\braket{\xi'}{\xi}\ketbra{\phi_{10}}{\phi_{01}}  
    + c_{01} \overline{c_{10}}\braket{\xi}{\xi'}\ketbra{\phi_{01}}{\phi_{10}}  
    \\=& |c_{10}|^2\ketbra{\phi_{10}} 
    + |c_{01}|^2\ketbra{\phi_{01}}
    + \frac{1}{2}\textnormal{Re}(c_{10} \overline{c_{01}}\braket{\xi'}{\xi}\ketbra{\phi_{10}}{\phi_{01}}),
\end{align*}
giving \eqref{eq:dist x measurement general} as desired. 
The rest of the proposition follows from considering the cases $\braket{\xi'}{\xi}=1,0$. 

Proposition~\ref{prop:type i} is a direct application of Proposition~\ref{prop:x measurement} to the Type I fusion circuit in Figure~\ref{fig:fusion}(b). 

As with Lemma~\ref{lemma:x measurement}, Lemma~\ref{lemma:ghz filtering} is a direct consequence of the success conditions for the $n$-GHZ state analyzer, detailed in Figure~\ref{fig:ghz_analyzer}. 

For Theorem~\ref{thm:ghz analyzer}, we simply note that the $n$-GHZ state analyzer is a concatenation of $n$ generalized $X$ measurements. Iteratively applying Proposition~\ref{prop:x measurement} and carefully tracking the signs gives the result. 

\subsection{GHZ generation}

Here we will discuss the details of the $n$-GHZ state generation protocol. We begin with the ideal case, sketching the argument for the sake of exposition and later comparison. 
We then sketch how this argument changes for the two types of distinguishability errors studied in Theorem~\ref{thm:ghz gen}. 

\subsubsection{Proof sketch for Lemma~\ref{lemma:ghz gen}}
We now briefly outline the proof of Lemma~\ref{lemma:ghz gen}, showing how Protocol~\ref{protocol:ghz} works in the ideal case. Step~\ref{ghz:noon} performs the transformation $|1,1\rangle\mapsto \frac{1}{\sqrt{2}}\left(\ket{2,0}-\ket{0,2}\right)$
on each pair of modes $(2i, 2i+1)$. 
Then Step~\ref{ghz:splitter} distributes this state across the four modes $(2i, 2i+1, 2n+2i, 2n+2i+1)$, yielding the state
\begin{align}\label{ghz:bad bell}
    \ket{\beta} = \frac{1}{2}\left( \ket{1,0,1,0} - \ket{0,1,0,1}\right) + \frac{1}{2\sqrt{2}}\left(\ket{2,0,0,0}-\ket{0,2,0,0} + \ket{0,0,2,0} - \ket{0,0,0,2}\right),
\end{align}
which we recognize as a linear combination of a dual-rail Bell pair $\frac{1}{\sqrt{2}}\left(\ket{\mathbf{00}} - \ket{\mathbf{11}}\right)$ and ``junk'' terms that we hope to discard. 
After Step~\ref{ghz:splitter}, then, we have prepared $n$ copies of $\ket{\beta}$ from \eqref{ghz:bad bell}. 
We then feed the second half of each $\ket{\beta}$ into the $n$-GHZ state analyzer and post-select for success. 
Applying Lemma~\ref{lemma:ghz filtering}, we see that the multi-photon terms are dropped, and the operation may be viewed as taking the second half of each state in 
\begin{align}
    \frac{1}{2^{n/2}}\left( \ket{1,0,1,0} - \ket{0,1,0,1}\right)^{\otimes n},
\end{align}
and projecting onto a known $n$-GHZ state.  
It is easy to see that this results in the output state being a $n$-GHZ state, up to Pauli corrections from the projection, which are accounted for in Step~\ref{ghz:correct}. 

\subsubsection{Proof sketch for Theorem~\ref{thm:ghz gen}}

We now consider the first-order error case studied in Theorem~\ref{thm:ghz gen} above, in which 
there is a distinguishability error in the $i$th pair, modes $(2i, 2i+1)$. 
Without loss of generality, we take $i=0$, considering modes $(0,1)$. 
We have internal states $\eta_0=\ketbra{\psi_0}, \eta_1=\ketbra{\psi_1}$, where $\eta_0$ is the ideal state and $Tr(\eta_0 \eta_1) = 0$. 
We have as input to modes $(0,1)$ the state $\rho$ in which the non-ideal photon may be in either mode: thus, as in \eqref{eq:first order input}, $\rho$ is an even mixture of the two states $\cre{0}(\eta_0)\cre{1}(\eta_1)\vac$ and $\cre{0}(\eta_1)\cre{1}(\eta_0)\vac$. 
Calculating the image of each under Step \ref{ghz:noon} and simplifying, we obtain an even mixture of the states
\begin{align}
    (\creta{0}{0}\creta{0}{1} - \creta{1}{0}\creta{1}{1})\vac\textnormal{ and }(\creta{0}{0}\creta{1}{1} - \creta{0}{1}\creta{1}{0})\vac.
\end{align}
We then apply Step~\ref{ghz:splitter}, giving a state on modes $0,1,2n,2n+1$. 
Taking the dual-rail projection as discussed above Theorem~\ref{thm:ghz gen}, 
we drop all terms with $0$ or $2$ photons in modes $(0,1)$, giving a state $\rho'$ that is an even mixture of the two states
\begin{align}
    &(\creta{0}{0}\creta{2n}{1} - \creta{1}{0}\creta{2n+1}{1})\vac + (\creta{0}{1}\creta{2n}{0} - \creta{1}{1}\creta{2n+1}{0})\vac,\label{eq:dist mix 0}
    \\& (\creta{0}{0}\creta{2n+1}{1} - \creta{1}{0}\creta{2n}{1})\vac + (\creta{1}{1}\creta{2n}{0} - \creta{0}{1}\creta{2n+1}{0})\vac.\label{eq:dist mix 1}
\end{align}
(Note the dropped terms occur with equal weight in both cases, corresponding to terms in which an even number of photons travel from modes $(0,1)$ to modes $(2n,2n+1)$ via the 50:50 beamsplitters.) 
We consider \eqref{eq:dist mix 0} in detail. 
This state has been expressed as a sum of two terms: in one, the photon with fully distinguishable internal state $\eta_1$ occupies modes $(2n, 2n+1)$; in the other, it is the photon with ideal internal state $\eta_0$. 
Note that the later steps in Protocol~\ref{protocol:ghz} do not involve interactions between the mode pairs $(0,1)$ and $(2n, 2n+1)$. 
Thus, since $\Tr(\eta_0 \eta_1) = 0$, there can be no further interaction between the two terms of \eqref{eq:dist mix 0}. Reasoning similarly for \eqref{eq:dist mix 1}, we observe that it is equivalent to replace $\rho'$ with the state $\rho''$, an even mixture of the four terms 
\begin{align}
    (\creta{0}{0}\creta{2n}{1} - \creta{1}{0}\creta{2n+1}{1})\vac &= (\ket{1,0,1,0} - \ket{0,1,0,1})\otimes\eta_0\otimes\eta_1,\label{dist_ghz:0}
    \\(\creta{0}{1}\creta{2n}{0} - \creta{1}{1}\creta{2n+1}{0})\vac &= (\ket{1,0,1,0} - \ket{0,1,0,1})\otimes\eta_1\otimes\eta_0,\label{dist_ghz:1}
    \\ (\creta{0}{0}\creta{2n+1}{1} - \creta{1}{0}\creta{2n}{1})\vac &= (\ket{1,0,0,1} - \ket{0,1,1,0})\otimes\eta_0\otimes\eta_1, \label{dist_ghz:2}
    \\(\creta{1}{1}\creta{2n}{0} - \creta{0}{1}\creta{2n+1}{0})\vac &= -(\ket{1,0,0,1} - \ket{0,1,1,0})\otimes\eta_1\otimes\eta_0.\label{dist_ghz:3}
\end{align}
Here the right-hand size of each line expresses the state in terms of the occupation numbers and the internal states on modes $(0,1,2n,2n+1)$; the second factor corresponds to the internal state in modes $(0,1)$ and the third factor to the internal state in modes $(2n,2n+1)$. 
To understand the state generated by Protocol~\ref{protocol:ghz} in this case, it suffices to consider each of the terms above and apply the $n$-GHZ state analyzer. 
Each of these are in a form that is easily studied using Theorem~\ref{thm:ghz analyzer}. 
Note we are able to rule out patterns of the form $(m_{2j}, m_{2j+1}) \in\{(1,1),(0,0)\}$ for $j\neq i$, because we assume the only errors are in pair $i$. Then we can express the state in the form \eqref{eq:reduced form} and apply the theorem. 

For \eqref{dist_ghz:0} and \eqref{dist_ghz:2}, the input photon to the GHZ state analyzer in modes $(2n,2n+1)$ is fully distinguishable from the ideal. 
This has two consequences: first, the output state consists of only ideal photons, since the distinguishable one has been measured away. 
Second, by Theorem~\ref{thm:ghz analyzer} Part 4, the reported value of the measurement of the observable $X^{\otimes n}$ is fully randomized. 
Then the projection (after correction) is to the state $\ketbra{10}^{\otimes n} + \ketbra{01}^{\otimes n} = P_{Z_0}(\ketbra{B_+})$ in the case \eqref{dist_ghz:1}. 
In the case \eqref{dist_ghz:3}, the correlation between the photons in modes $(0,1)$ and $(2n,2n+1)$ is the opposite of the intended correlation, so we obtain
$X_0(P_{Z_0}(\ketbra{B_+}))X_0^\dagger$. 
Taken together, these two terms correspond to the mixed state $P_{X_0}(P_{Z_0}(\ketbra{B_+}))$. 

For \eqref{dist_ghz:1} and \eqref{dist_ghz:3}, the input photons to the GHZ state analyzer are all ideal, so the measurement is ideal, as in Theorem~\ref{thm:ghz analyzer} Part 3. 
As above, the two terms have opposite correlation, so the output state will be scrambled by $P_{X_0}$,
randomizing the output mode of the photon in modes $(0,1)$. 
Further, in both cases the photon in output modes $(0,1)$ has internal state $\eta_1$, fully distinguishable from the ideal. 
The terms \eqref{dist_ghz:1}, \eqref{dist_ghz:3} then correspond to $P_{X_0}(D_0(\ketbra{B_+}))$, where $D_i$ is as defined before Theorem~\ref{thm:ghz gen}.

In summary, replacing modes $(0,1,2n,2n+1)$ with $(2i, 2i+1, 2n+2i, 2n+2i+1)$, we 
may describe the output as 
\begin{align}
    \frac{1}{2}P_{X_i}\left( P_{Z_i}(\ketbra{B_+}) + D_i(\ketbra{B_+})  \right).
\end{align}

The pair error case is similar and simpler: after Step~\ref{ghz:splitter}, the $i$th pair is in a state like \eqref{ghz:bad bell} above, but with internal state $\eta_1$ for both photons. This then leads to scrambling of the $n$-GHZ analyzer's $X^{\otimes n}$ measurement but \emph{not} the $Z\otimes Z$ measurements; further, the output photon in modes $(2i,2i+1)$ is necessarily distinguishable, in state $\eta_1$. 

\subsection{Numerical calculations}\label{app:numerics}
Here we give the expected values of the random variables studied in Sec.~\ref{sec:numerics} above. 
We use the abbreviations $F$ for the standard fidelity, $PF$ for the post-selected fidelity, $FF$ for the post-fusion fidelity, and $NS$ for the number of stabilizer errors. The subscripts indicate the error model. 
We recall that in the case of the OBP (orthogonal bad pairs) error model, we take the probability of a pair error to be $1-(1-\epsilon)^2 = \epsilon(2-\epsilon)$, so that the probability of an ideal pair is the same as the other models, $(1-\epsilon)^2$. 

\allowdisplaybreaks

{\footnotesize

\begin{align}
    E(F_{OBB}) &= \frac{0.0625 \epsilon^{2} \left(1 - \epsilon\right)^{2} + 0.0625 \epsilon \left(1 - \epsilon\right)^{3} + 0.125 \left(1 - \epsilon\right)^{4}}{4.5 \epsilon^{4} + 4.5 \epsilon^{3} \cdot \left(1 - \epsilon\right) + 2.125 \epsilon^{2} \left(1 - \epsilon\right)^{2} + 0.625 \epsilon \left(1 - \epsilon\right)^{3} + 0.125 \left(1 - \epsilon\right)^{4}}
    \\ E(F_{SBB}) &= \frac{0.0625 \epsilon^{2} \left(1 - \epsilon\right)^{2} + 0.0625 \epsilon \left(1 - \epsilon\right)^{3} + 0.125 \left(1 - \epsilon\right)^{4}}{0.125 \epsilon^{4} + 0.625 \epsilon^{3} \cdot \left(1 - \epsilon\right) + 1.0 \epsilon^{2} \left(1 - \epsilon\right)^{2} + 0.625 \epsilon \left(1 - \epsilon\right)^{3} + 0.125 \left(1 - \epsilon\right)^{4}}
    \\ E(F_{OBP}) &= \frac{0.125 \left(- \epsilon \left(2 - \epsilon\right) + 1\right)^{2}}{0.25 \epsilon^{2} \left(2 - \epsilon\right)^{2} + 0.25 \epsilon \left(2 - \epsilon\right) \left(- \epsilon \left(2 - \epsilon\right) + 1\right) + 0.125 \left(- \epsilon \left(2 - \epsilon\right) + 1\right)^{2}}
    \\ E(PF_{OBB}) &= \frac{0.0625 \epsilon^{2} \left(1 - \epsilon\right)^{2} + 0.0625 \epsilon \left(1 - \epsilon\right)^{3} + 0.125 \left(1 - \epsilon\right)^{4}}{3.0 \epsilon^{4} + 3.0 \epsilon^{3} \cdot \left(1 - \epsilon\right) + 1.5 \epsilon^{2} \left(1 - \epsilon\right)^{2} + 0.5 \epsilon \left(1 - \epsilon\right)^{3} + 0.125 \left(1 - \epsilon\right)^{4}}
    \\ E(PF_{SBB}) &= \frac{0.0625 \epsilon^{2} \left(1 - \epsilon\right)^{2} + 0.0625 \epsilon \left(1 - \epsilon\right)^{3} + 0.125 \left(1 - \epsilon\right)^{4}}{0.125 \epsilon^{4} + 0.5 \epsilon^{3} \cdot \left(1 - \epsilon\right) + 0.75 \epsilon^{2} \left(1 - \epsilon\right)^{2} + 0.5 \epsilon \left(1 - \epsilon\right)^{3} + 0.125 \left(1 - \epsilon\right)^{4}}
    \\ E(PF_{OBP}) &= \frac{0.125 \left(- \epsilon \left(2 - \epsilon\right) + 1\right)^{2}}{0.25 \epsilon^{2} \left(2 - \epsilon\right)^{2} + 0.25 \epsilon \left(2 - \epsilon\right) \left(- \epsilon \left(2 - \epsilon\right) + 1\right) + 0.125 \left(- \epsilon \left(2 - \epsilon\right) + 1\right)^{2}}
    \\ E(FF_{OBB}) &= \frac{0.421875 \epsilon^{4} + 0.3515625 \epsilon^{3} \cdot \left(1 - \epsilon\right) + 0.14453125 \epsilon^{2} \left(1 - \epsilon\right)^{2} + 0.0390625 \epsilon \left(1 - \epsilon\right)^{3} + 0.03125 \left(1 - \epsilon\right)^{4}}{1.6875 \epsilon^{4} + 1.40625 \epsilon^{3} \cdot \left(1 - \epsilon\right) + 0.578125 \epsilon^{2} \left(1 - \epsilon\right)^{2} + 0.15625 \epsilon \left(1 - \epsilon\right)^{3} + 0.03125 \left(1 - \epsilon\right)^{4}}
    \\ E(FF_{SBB}) &= \frac{0.01953125 \epsilon^{4} + 0.05859375 \epsilon^{3} \cdot \left(1 - \epsilon\right) + 0.087890625 \epsilon^{2} \left(1 - \epsilon\right)^{2} + 0.0390625 \epsilon \left(1 - \epsilon\right)^{3} + 0.03125 \left(1 - \epsilon\right)^{4}}{0.0703125 \epsilon^{4} + 0.234375 \epsilon^{3} \cdot \left(1 - \epsilon\right) + 0.2890625 \epsilon^{2} \left(1 - \epsilon\right)^{2} + 0.15625 \epsilon \left(1 - \epsilon\right)^{3} + 0.03125 \left(1 - \epsilon\right)^{4}}
    \\ E(FF_{OBP}) &= \frac{0.0390625 \epsilon^{2} \left(2 - \epsilon\right)^{2} + 0.03125 \epsilon \left(2 - \epsilon\right) \left(- \epsilon \left(2 - \epsilon\right) + 1\right) + 0.03125 \left(- \epsilon \left(2 - \epsilon\right) + 1\right)^{2}}{0.140625 \epsilon^{2} \left(2 - \epsilon\right)^{2} + 0.09375 \epsilon \left(2 - \epsilon\right) \left(- \epsilon \left(2 - \epsilon\right) + 1\right) + 0.03125 \left(- \epsilon \left(2 - \epsilon\right) + 1\right)^{2}}
    \\ E(NS_{OBB}) &= \frac{1.6875 \epsilon^{4} + 1.40625 \epsilon^{3} \cdot \left(1 - \epsilon\right) + 0.578125 \epsilon^{2} \left(1 - \epsilon\right)^{2} + 0.15625 \epsilon \left(1 - \epsilon\right)^{3}}{1.6875 \epsilon^{4} + 1.40625 \epsilon^{3} \cdot \left(1 - \epsilon\right) + 0.578125 \epsilon^{2} \left(1 - \epsilon\right)^{2} + 0.15625 \epsilon \left(1 - \epsilon\right)^{3} + 0.03125 \left(1 - \epsilon\right)^{4}}
    \\ E(NS_{SBB}) &= \frac{0.06640625 \epsilon^{4} + 0.234375 \epsilon^{3} \cdot \left(1 - \epsilon\right) + 0.2578125 \epsilon^{2} \left(1 - \epsilon\right)^{2} + 0.15625 \epsilon \left(1 - \epsilon\right)^{3}}{0.0703125 \epsilon^{4} + 0.234375 \epsilon^{3} \cdot \left(1 - \epsilon\right) + 0.2890625 \epsilon^{2} \left(1 - \epsilon\right)^{2} + 0.15625 \epsilon \left(1 - \epsilon\right)^{3} + 0.03125 \left(1 - \epsilon\right)^{4}}
    \\ E(NS_{OBP}) &= \frac{0.1328125 \epsilon^{2} \left(2 - \epsilon\right)^{2} + 0.078125 \epsilon \left(2 - \epsilon\right) \left(- \epsilon \left(2 - \epsilon\right) + 1\right)}{0.140625 \epsilon^{2} \left(2 - \epsilon\right)^{2} + 0.09375 \epsilon \left(2 - \epsilon\right) \left(- \epsilon \left(2 - \epsilon\right) + 1\right) + 0.03125 \left(- \epsilon \left(2 - \epsilon\right) + 1\right)^{2}}
\end{align}

}

\section{Representation theory}

\subsection{Representation theory background}\label{app:rep theory}

A \emph{representation} of a group $G$ is a pair $(\rho, V)$ of a vector space $V$ and a group homomorphism $\rho: G\rightarrow GL(V)$. 
By abuse of notation, a representation is often referred to by the vector space $V$ alone, especially if the map $\rho$ is understood from context. 
For example, if $G\subseteq GL(d)$ is a matrix group, it has a representation on $\mathbb{C}^d$ by matrix-vector multiplication, formally given by $(\rho, V)$ where $V=\mathbb{C}^d$ and $\rho$ is the identity map on $G$. 
We will often simply refer to $\mathbb{C}^d$ as the \emph{natural representation} of $G$ and omit mention of the identity map $\rho$. 

Two representations $(\rho_1, V_1), (\rho_2, V_2)$ are \emph{equivalent} if there exists a map $\phi: V_1\rightarrow V_2$
that is an isomorphism as a linear map and also commutes with the group action: namely, for all $v_1\in V_1$ and $g\in G$, 
    $\phi(\rho_1(g)v_1) = \rho_2(g) \phi(v_2).$ 
This is the relevant notion of isomorphism for representations, and (when the maps $\rho_1, \rho_2$ are understood) we write $V_1\cong V_2$. 

Given a representation $(\rho, V)$ of $G$, a \emph{subrepresentation} is a vector space $W\subseteq V$ such that, for all $g\in G$ and $w\in W$, $\rho(g)w\in W$. 
This is a new representation of $G$, defined by $(\rho', W)$, where 
    $\rho'(g) = \rho(g)|_W.$  
Similarly, given two groups $G_1\subseteq G_2$ and a representation $(\rho, V)$ of the larger group $G_2$, we obtain the \emph{restricted representation} of the smaller group $G_1$ as $(\rho|_{G_1}, V)$, where 
    $\rho|_{G_1}(g) = \rho(g)$
is simply the restriction of $\rho$ to $G_1$. 

Given two representations $(\rho_1, V_1), (\rho_2, W_2)$ of $G$, we may form the \emph{direct sum representation} $(\rho, V)$ where $V = V_1\oplus V_2$ and, for $g\in G$, $v_1\in V_1$, and $v_2\in V_2$, 
    $\rho(g)(v_1 \oplus v_2) = \rho(g)v_1 \oplus \rho(g) v_2.$ 
We also have the \emph{tensor product representation}, with $V= V_1\otimes V_2$ and 
    $\rho(g)(v_1\otimes v_2) = (\rho_1(g)v_1)\otimes (\rho_2(g)v_2).$ 

A representation $(\rho, V)$ is \emph{irreducible} if it has no nonzero proper subrepresentations: 
in other words, if $W\subseteq V$ is a subrepresentation, then $W=0$ or $W=V$. 
The term "irreducible representation" is often shortened to \emph{irrep}. 
A representation $(\rho, V)$ of $G$ is \emph{completely reducible} if it decomposes into a direct sum of irreps: 
    $V = W_1\oplus W_2\oplus \cdots \oplus W_k,$ 
where each $W_i$ is an irrep for $G$. 
Typically we are only concerned with such decompositions up to equivalence, writing
    $V \cong \bigoplus_i V_i^{\oplus n_i},$ 
where each $V_i$ is irreducible, $V_i\cong V_j$ iff $i=j$, and $n_i$ is called the \emph{multiplicity} of the irrep $V_i$. 
For the groups of interest to us here, specifically finite and compact groups, every finite-dimensional representation $(\rho, V)$ is completely reducible.

\subsection{Representation theory and Proposition~\ref{prop:designs-equivalent-defns}}
\label{app:rep and design exposition}

We now briefly discuss the characterization of $t$-designs in terms of representations, studied in Proposition~\ref{prop:designs-equivalent-defns}. 

Let $\mathcal{V}$ be a $d$-dimensional vector space, and let $U(\mathcal{V})$ be the corresponding unitary group, the group of all unitary transformations of $\mathcal{V}$. 
By picking a basis, we may identify $\mathcal{V}$ with $\mathbb{C}^d$ and $U(\mathcal{V})$ with $U(d)$. 
As discussed above, we have the natural representation $(\tau, \mathcal{V})$ of $U(\mathcal{V})$ on $\mathcal{V}$, where $\tau$ is simply the identity map. 
This representation $\tau$ is in fact irreducible. 
We then take the tensor product of $t$ copies of the natural representation, determining the tensor product representation (or \emph{diagonal representation}) $(\tau^t, \mathcal{V}^{\otimes t})$, given by $\tau^t(U) = U^{\otimes t}$. 
This has a decomposition into irreps for $U(\mathcal{V})$, 
\begin{align}\label{eq:tensor irrep decomp}
    \mathcal{V}^{\otimes t} \cong \bigoplus_i V_{i}^{\oplus n_i}, 
\end{align}
where, as above, each $V_i$ is irreducible and $V_i\cong V_j$ iff $i=j$. 
Proposition~\ref{prop:designs-equivalent-defns} is concerned with the sum of the multiplicities, $\sum_i n_i$. 
For $t=1$, $\mathcal{V}$ is already irreducible, as mentioned above, so $\sum_i n_1 = 1$. 
For $t=2$, $\mathcal{V}^{\otimes 2}$ decomposes into a direct sum of two inequivalent irreps, namely the symmetric and anti-symmetric tensors, so $\sum_i n_i = 2$. 

Now let $G$ be a closed subgroup $G\subseteq U(\mathcal{V})$. 
The representation $\mathcal{V}^{\otimes t}$ restricts to a representation of $G$, which then has its own decomposition into irreps for $G$. 
In fact, it suffices to consider the restriction of each $V_i$ to a representation of $G$. 
The content of Proposition~\ref{prop:designs-equivalent-defns} is that $G$ is a $t$-design if and only if, for each $V_i$ appearing with nonzero multiplicity in \eqref{eq:tensor irrep decomp}, $V_i$ is also irreducible as a representation of $G$. 

\subsection{Proof of Theorem~\ref{thm:no 2 designs}}\label{app:continuous}

We now restate and prove Theorem~\ref{thm:no 2 designs}. 

\begin{theorem}
    Let $\mathcal{V}\cong\mathbb{C}^d$, with $d\geq 2$. 
    Let $G$ be an infinite closed subgroup of $U(d)=U(\mathcal{V})$. If $G$ is a unitary $2$-design, then $SU(d)\subseteq G$. In particular, any set of unitaries generating $G$ is a universal set. 
\end{theorem}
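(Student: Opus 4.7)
The plan is to execute the two-step strategy outlined in the main text: first translate the $2$-design hypothesis into a representation-theoretic statement about $\mc{V}\otimes \mc{V}^* \cong \mathfrak{gl}(\mc{V})$, and then use the Lie algebra of $G$ to force $SU(d)\subseteq G$.

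For the first step, I would invoke the commutant formulation of Proposition~\ref{prop:designs-equivalent-defns} together with the observation that for any subgroup $G\subseteq U(d)$ one has
\begin{align}
    \End_G(\mc{V}\otimes \mc{V}) \cong \bigl((\mc{V}^*)^{\otimes 2}\otimes \mc{V}^{\otimes 2}\bigr)^G \cong \End_G(\mc{V}\otimes \mc{V}^*),
\end{align}
so the two commutant dimensions agree. Since $G$ is a $2$-design, this common dimension equals $\dim \End_{U(d)}(\mc{V}\otimes \mc{V}) = 2$. Combined with the always-$G$-invariant decomposition $\mathfrak{gl}(\mc{V}) = \mathbb{C}\cdot I \oplus \mathfrak{sl}(\mc{V})$ and the complete reducibility of $G$-representations (valid because $G$ is compact), one concludes that both $\mathbb{C}\cdot I$ and $\mathfrak{sl}(\mc{V})$ are irreducible and inequivalent as $G$-representations, and are the only nonzero proper $G$-subrepresentations of $\mathfrak{gl}(\mc{V})$.

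For the second step, I would pass to the Lie algebra. Because $G$ is closed in $U(d)$ and infinite, it is a positive-dimensional compact Lie group with nonzero real Lie algebra $\mathfrak{g}\subseteq \mathfrak{u}(d)$. Under the identification $\mc{V}\otimes \mc{V}^*\cong \mathfrak{gl}(\mc{V})$ the natural $G$-action is conjugation, so the complexification $\mathfrak{g}_{\mathbb{C}} = \mathfrak{g} + i\mathfrak{g}$ is $\mathrm{Ad}(G)$-invariant, hence a $G$-subrepresentation of $\mathfrak{gl}(\mc{V})$. The previous step then forces $\mathfrak{g}_{\mathbb{C}}\in\{0,\ \mathbb{C}\cdot I,\ \mathfrak{sl}(\mc{V}),\ \mathfrak{gl}(\mc{V})\}$. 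The case $\mathfrak{g}_{\mathbb{C}}=0$ is immediately excluded by $G$ being infinite. Once one knows $\mathfrak{g}_{\mathbb{C}}\supseteq \mathfrak{sl}(\mc{V})$, intersecting with $\mathfrak{u}(d)$ gives $\mathfrak{g}\supseteq \mathfrak{su}(d)$, and since $SU(d)$ is the unique connected Lie subgroup with Lie algebra $\mathfrak{su}(d)$ and $G$ is closed, $SU(d)\subseteq G$ as desired. The final clause about universal sets then follows from the fact that $SU(d)$ is already dense-generated by any generating set of any group containing it.

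I expect the main obstacle to be eliminating the remaining case $\mathfrak{g}_{\mathbb{C}} = \mathbb{C}\cdot I$, in which the identity component of $G$ would consist only of scalars $\{e^{it}I\}$. This step is delicate because the commutant/complete-reducibility argument alone does not obviously exclude it: a priori one could imagine infinite closed groups with scalar identity component (e.g., products of $U(1)\cdot I$ with a finite $2$-design). Ruling this out requires the more refined use of the infinite-$2$-design hypothesis that is the substantive content imported from Katz--Larsen \cite{katz2004larsen}, and is the step I would devote most care to in filling in from Appendix~\ref{app:continuous}.
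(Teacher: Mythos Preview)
Your two-step plan is exactly the paper's proof. The only substantive difference is how the ``obstacle'' case is handled: instead of analyzing $\mathfrak{g}_{\mathbb{C}}$ itself as a $G$-subrepresentation of $\mathfrak{gl}(\mc{V})$, the paper passes to the derived subalgebra $[\mathfrak{g},\mathfrak{g}]$. Since commutators are traceless, $[\mathfrak{g},\mathfrak{g}]$ is automatically a $G$-subrepresentation of $\mathfrak{sl}(d)$, so once $\mathfrak{sl}(d)$ is known to be $G$-irreducible one gets $[\mathfrak{g},\mathfrak{g}]\in\{0,\mathfrak{sl}(d)\}$ with no $\mathbb{C}\cdot I$ option to eliminate. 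The paper then argues $[\mathfrak{g},\mathfrak{g}]\neq 0$ from the fact that $G$, acting irreducibly on $\mc{V}$ with $d>1$, is nonabelian.

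Your caution about this step is well placed, however. The inference ``$G$ nonabelian $\Rightarrow [\mathfrak{g},\mathfrak{g}]\neq 0$'' is only valid when $G$ is connected; for disconnected $G$ the Lie algebra sees only the identity component $G^0$. Your candidate $G=U(1)\cdot F$ with $F$ a finite group $2$-design (e.g.\ the Clifford group in the appropriate dimension) is in fact an infinite closed subgroup and a $2$-design with $\mathfrak{g}_{\mathbb{C}}=\mathbb{C}\cdot I$ and $[\mathfrak{g},\mathfrak{g}]=0$, so neither your argument nor the paper's closes this case as written. The derived-subalgebra trick does go through cleanly once one knows $G^0$ itself acts irreducibly on $\mc{V}$ (so that $G^0$, hence $\mathfrak{g}$, is nonabelian); this is the form in which the Katz--Larsen input is really needed, and it is automatic in the paper's intended application where $G\supseteq U_{n,m}$ already has nonabelian connected identity component.
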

\begin{proof}
    Since $G$ is an infinite closed subgroup of $U(d)$, the closed subgroup theorem implies that it is an infinite compact Lie group, with a nonzero complexified Lie algebra $\mathfrak{g}$. 
    By compactness, the Peter-Weyl theorem implies that $G$'s finite-dimensional representations are completely reducible. 
    Since we assume that $G$ is a $2$-design, it is also a $1$-design; which implies $\mathcal{V}$ is an irrep for $G$ (recall Proposition~\ref{prop:designs-equivalent-defns}). Since $\dim \mathcal{V} = d > 1$, this implies that $G$ is not abelian: equivalently, the $G$-subrepresentation $[\mathfrak{g}, \mathfrak{g}]$ is nonzero. 
    Further, since commutators are traceless, we have $0\neq [\mathfrak{g}, \mathfrak{g}]\subseteq\mathfrak{sl}(d)$. 
    These are the main facts we will need below. 
    
    By Proposition~\ref{prop:designs-equivalent-defns}, since $G$ is a $2$-design,  $\mathcal{V}\otimes \mathcal{V}$ decomposes into a direct sum of exactly two irreps for $G$. 
    Note that by Schur's lemma, this is equivalent to the following condition on the commutant: $\dim_{\mathbb{C}}\left(\End_G(\mathcal{V}\otimes \mathcal{V})\right) = 2.$ 
    We now apply a chain of standard vector space isomorphisms:
    \begin{align}
        \End_G(\mathcal{V}\otimes \mathcal{V}) \cong \left( (\mathcal{V}\otimes \mathcal{V}) \otimes (\mathcal{V}\otimes \mathcal{V})^*\right)^G
        \cong \left( (\mathcal{V}\otimes \mathcal{V}^*) \otimes (\mathcal{V} \otimes \mathcal{V}^*)^* \right)^G
        \cong \End_G(\mathcal{V}\otimes \mathcal{V}^*).
    \end{align}

    In particular, this implies $\dim_{\mathbb{C}}\End_G(\mathcal{V}\otimes \mathcal{V}^*) = 2.$ 
    Now we follow Ref. \citenum{katz2004larsen} to show that this condition gives universality. 
    Applying Schur's lemma in the same way as above, if 
        $\mathcal{V}\otimes \mathcal{V}^*\cong \bigoplus_{i} \mathcal{V}_i^{\oplus m_i}$ 
    is a decomposition of $\mathcal{V}\otimes \mathcal{V}^*$ into distinct irreps for $G$, then 
    \begin{align*}
        2 = \dim_{\mathbb{C}}\End_G(\mathcal{V}\otimes \mathcal{V}^*) = \sum_i m_i^2.
    \end{align*}
    Since the $m_i$ are positive integers, we conclude that $\mathcal{V}\otimes \mathcal{V}^*$ decomposes into exactly two irreducible representations for $G$, each of multiplicity $1$. Further, we can identify them exactly. 
    Note that $\mathcal{V}\otimes \mathcal{V}^*$ is equivalent to the adjoint representation of $U(d)$ on its complexified Lie algebra $\mathfrak{gl}(d)$. 
    As a $U(d)$-representation (and therefore also a $G$-representation), $\mathcal{V}\otimes \mathcal{V}^*$ decomposes as follows:
    \begin{align}
        \mathcal{V}\otimes \mathcal{V}^*\cong \mathfrak{gl}(d) = \mathfrak{sl}(d)\oplus \mathbb{C}I,
    \end{align}
    where $I$ is the $d\times d$ identity matrix. 
    Both $\mathfrak{sl}(d)$ and $\mathbb{C}I$ are nonzero representations of $G$ (since $d\geq 2$), so by the above they must be irreducible $G$-representations. 
    We proved above that the derived subalgebra $[\mathfrak{g}, \mathfrak{g}]$ is a nonzero $G$-subrepresentation of $\mathfrak{sl}(d)$; 
    since $\mathfrak{sl}(d)$ is irreducible, we must have
    $\mathfrak{sl}(d) = [\mathfrak{g}, \mathfrak{g}]\subseteq \mathfrak{g}.$ 
    In particular, taking the skew-Hermitian part, we see that the real Lie algebra of $G$ contains $\mathfrak{su}(d)$, and by applying the matrix exponential, $SU(d)\subseteq G$. 
\end{proof}

\subsection{Proof of Proposition~\ref{prop:1 design}}\label{app:1 design}

\begin{proposition}
    Let $n,m\geq 2$. 
    The group $U_{n,m}$ of linear optical unitaries is a $1$-design in $U(\mathcal{H})$. 
\end{proposition}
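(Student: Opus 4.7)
The plan is to apply Proposition~\ref{prop:designs-equivalent-defns}: since $U_{n,m}$ is a compact subgroup of $U(\mathcal{H})$, it is a $1$-design if and only if $\mathcal{H}$ is irreducible as a $U_{n,m}$-representation. The entire proof reduces to establishing this irreducibility claim.

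Next, I would use the identification $\mathcal{H}\cong \mathrm{Sym}^n(\mathbb{C}^m)$ from \eqref{eq:h symmetric} and Remark~\ref{rem:symm basis}, together with the homomorphism \eqref{eq:LOU map}, to reinterpret $U_{n,m}$'s action on $\mathcal{H}$ as $U(m)$ acting on $\mathrm{Sym}^n(\mathbb{C}^m)$ via $U\mapsto U^{\otimes n}|_{\mathrm{Sym}^n}$. Thus the claim reduces to the standard fact that the $n$th symmetric power of the defining representation of $U(m)$ is irreducible. The cleanest route is via Schur-Weyl duality: $(\mathbb{C}^m)^{\otimes n}$ decomposes under the commuting $U(m)\times S_n$ action as $\bigoplus_\lambda V_\lambda\otimes W_\lambda$, with distinct $U(m)$-irreps $V_\lambda$ paired with distinct $S_n$-irreps $W_\lambda$ indexed by partitions $\lambda$ of $n$ with at most $m$ parts. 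Taking $S_n$-invariants picks out only the trivial-$S_n$ summand $\lambda=(n)$, so $\mathcal{H}\cong V_{(n)}$, which is irreducible.

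For a self-contained argument that avoids citing Schur-Weyl, one can fix the maximal torus $T\subseteq U(m)$ of diagonal unitaries. Under the $U^{\otimes n}$ action, each Fock basis vector $\ket{n_0,\dots,n_{m-1}}$ is a weight vector with a distinct weight, so every $T$-weight space of $\mathcal{H}$ is one-dimensional. Any $U_{n,m}$-invariant subspace $W\subseteq \mathcal{H}$ is in particular $T$-invariant, hence spanned by some subset of Fock basis vectors. Applying a beamsplitter on any pair of modes $i,j$ to a Fock vector in $W$ produces a superposition of Fock vectors obtained by redistributing the $n_i+n_j$ photons between modes $i$ and $j$; projecting back onto weight spaces forces each of those Fock vectors to lie in $W$. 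Iterating across all pairs of modes shows that any nonzero $W$ contains every Fock basis vector, so $W=\mathcal{H}$.

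I do not anticipate any significant obstacle, since this is essentially a standard representation-theoretic fact. The only real bookkeeping is the identification of $U_{n,m}$'s action on $\mathcal{H}$ with that of $U(m)$ on $\mathrm{Sym}^n(\mathbb{C}^m)$, which the excerpt has already set up carefully. Either the Schur-Weyl route or the elementary weight-space argument produces a short, clean proof.
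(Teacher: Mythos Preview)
Your proposal is correct and follows the same overall strategy as the paper: reduce to irreducibility of $\mathcal{H}$ as a $U_{n,m}\cong U(m)$ representation via Proposition~\ref{prop:designs-equivalent-defns}, then identify $\mathcal{H}$ with $\mathrm{Sym}^n(\mathbb{C}^m)$ and invoke a standard representation-theoretic fact. The only difference is which standard fact is invoked. The paper's main proof observes that $\mathcal{H}$ has highest weight $(n,0,\dots,0)$ and then uses the Weyl dimension formula to check $\dim V_{(n,0,\dots,0)}=\binom{n+m-1}{n}=\dim\mathcal{H}$; you instead appeal to Schur--Weyl duality. Your self-contained alternative (one-dimensional weight spaces plus beamsplitters moving photons between modes) is essentially identical to the paper's own post-proof remark, which uses the Lie-algebra elements $a_i^\dagger a_j$ in place of beamsplitters to the same effect. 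All three routes are equally short and standard; none offers a real advantage over the others.
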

\begin{proof}
Recall from \eqref{eq:LOU map} the isomorphism $U(m)\rightarrow U_{n,m}$, 
and let $D = \textnormal{diag}(d_1, \dots, d_m)\in U(m)$. 
For any Fock basis vector $\ket{n_1, \dots, n_m}\in\mathcal{H}$, $D$ acts by 
$D: \ket{n_1, \dots, n_m}\mapsto d_1^{n_1}\cdots d_m^{n_m} \ket{n_1, \dots, n_m}.$ 
In the language of representation theory, this tells us that 
$\ket{n_1, \dots, n_m}$ is a \emph{weight vector} with weight $(n_1, \dots, n_m)$. 
In particular, 
$\mathcal{H}$ is a highest weight representation for $U(m)\cong U_{n,m}$ 
with \emph{highest weight} $(n, 0, \dots, 0)$. 
Therefore, $\mathcal{H}$ contains a copy of $V_{(n, 0, \dots, 0)}$, the irreducible representation of $U(m)$ with highest weight $(n, 0, \dots, 0)$. 
One may compute using the Weyl dimension formula (see e.g. Ref. \citenum{humphreys2012introduction} Corollary 24.3) that 
    $\dim V_{(n, 0, \dots, 0)} = \binom{n+m-1}{n} = d_{n,m} = \dim \mathcal{H},$ 
so the two spaces must be equal. 
\end{proof}

To avoid appealing to general character theory, one may instead show directly that, for any modes $i,j$, the operators $a_i^\dagger a_j$ are in $\mathfrak{u}_{n,m}$, the Lie algebra of the linear optical unitaries. 
These operators move a photon from mode $j$ to mode $i$ and are easily seen to act transitively on the Fock basis vectors. 
This fact, combined with the fact that $\mathcal{H}$ is a weight representation for $U_{n,m}$ with one-dimensional weight spaces spanned by the Fock basis vectors, shows that $\mathcal{H}$ is irreducible. 

\subsection{Discussion of Theorem~\ref{thm:zimboras}}\label{app:zimboras}

We note the following characterization of the vector $\ket{\zeta}$ in Theorem~\ref{thm:zimboras}. This was mentioned in the proof of Lemma 1 in the Supplementary Material to Ref. \citenum{oszmaniec2017universal}. 
One can show that the $U(2)$-module $\mathcal{H}\otimes\mathcal{H}$ has a unique irreducible subrepresentation with highest weight $0$ (necessarily $1$-dimensional), and $\ket{\zeta}$ is a nonzero vector in that subrepresentation. 
This gives a decomposition into subrepresentations, 
\begin{align}\label{eq:tensor decomp}
    \mathcal{H}\otimes\mathcal{H} = W\oplus \mathbb{C}\ket{\zeta}.
\end{align}
(Where $W$ is generally not irreducible.) 
In view of Corollary~\ref{cor:2 design}, we may view Theorem~\ref{thm:zimboras} in the $m=2$ case as saying that, for any operator $V$ that fails to preserve the decomposition \eqref{eq:tensor decomp} into subrepresentations, 
the addition of $V$ to $U_{n,m}$ turns $\mathcal{H}\otimes\mathcal{H}$ into an irreducible representation. 

Perhaps more intuitively, we may rephrase the above as saying that $\ket{\zeta}$ is (up to scalar) the unique element of $\mathcal{H}\otimes\mathcal{H}$ with the property that $\ket{\zeta}$ is an eigenvector of $U\otimes U$ for all $U\in U_{n,m}$. 
(Equivalently, $\ket{\zeta}$ is annihilated by the action of $\mathfrak{su}_{n,m}$, (the traceless subalgebra of) the Lie algebra of $U_{n,m}$.) 
This makes it clear that \eqref{eq:zeta comm} fails for $V\in U_{n,m}$. 
Thus, for the $m=2$ case, one does not even need to check that $V\not\in U_{n,m}$: this is implied by \eqref{eq:zeta comm}. 

\subsection{Proof of Theorem~\ref{thm:snap}}\label{app:snap}

\begin{theorem}
Let $n,m\geq 2$, $0\leq k\leq n$, $0\leq s\leq m-1$. 
Let $\theta\in\mathbb{R}$ with $\theta\not\equiv 0\mod 2\pi$. 
Then the group generated by $U_{n,m}$ and $S_{k,s}(\theta)$ is universal. 
\end{theorem}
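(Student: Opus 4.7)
The strategy I would take is to apply the Oszmaniec-Zimbor\'{a}s result (Theorem~\ref{thm:zimboras}) with $V = S_{k,s}(\theta)$. The argument splits into two cases according to the value of $m$: for $m > 2$ it suffices to verify $S_{k,s}(\theta)\not\in U_{n,m}$, while for $m = 2$ it suffices to verify $[S_{k,s}(\theta)\otimes S_{k,s}(\theta), \ketbra{\zeta}] \neq 0$.

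For $m = 2$, since $S_{k,s}(\theta)$ is diagonal in the Fock basis, so is its tensor square, and the commutator with $\ketbra{\zeta}$ vanishes exactly when $\ket{\zeta}$ is an eigenvector of $S_{k,s}(\theta)\otimes S_{k,s}(\theta)$. A direct calculation shows that the $a$th summand of $\ket{\zeta} = \sum_{a=0}^n (-1)^a \ket{a, n-a}\otimes\ket{n-a, a}$ picks up a phase $e^{i\theta f(a)}$, with $f(a) = \mathbf{1}[a=k] + \mathbf{1}[a = n-k]$. Since $n\geq 2$, there exists some $a$ with $f(a) = 0$, while $a = k$ itself gives $f(a)\geq 1$; combining this with $\theta\not\equiv 0\mod 2\pi$ shows the phases are not all equal, so $\ket{\zeta}$ is not an eigenvector and the commutator condition holds.

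For $m > 2$, I would argue by contradiction. Suppose $S_{k,s}(\theta) = U^{\otimes n}|_\mathcal{H}$ for some $U\in U(m)$. The Fock basis vector with all $n$ photons in mode $i$ corresponds to the symmetric tensor $v_i^{\otimes n}$, and since $S_{k,s}(\theta)$ preserves it up to a scalar, each standard basis vector $v_i$ must be an eigenvector of $U$. Hence $U = \textnormal{diag}(d_0, \dots, d_{m-1})$, and $U^{\otimes n}|_\mathcal{H}$ acts on $\ket{n_0, \dots, n_{m-1}}$ by the phase $\prod_i d_i^{n_i}$. Matching this against the prescribed phase pattern $e^{i\theta \delta_{n_s, k}}$ of $S_{k,s}(\theta)$ by evaluating on Fock basis vectors concentrated in a single mode $\neq s$, and on vectors distributing the $n$ photons across three distinct modes $s, s', s''$ (available because $m\geq 3$), should yield enough multiplicative relations among the $d_i$ to force $e^{i\theta} = 1$, contradicting the hypothesis.

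The main obstacle is the case analysis for $m > 2$. Different values of $k\in\{0, \dots, n\}$ require slightly different comparisons of Fock basis vectors, and the availability of a third mode $s''\neq s, s'$ is crucial for ruling out diagonal $U$ of the degenerate form "$d_s$ distinguished, all other $d_i$ equal," which otherwise arises naturally from the two-mode comparisons alone. Certain edge cases, notably $n = 2k$ together with $\theta \equiv \pi\mod 2\pi$, are particularly delicate: the phase structure of $S_{k,s}(\theta)\otimes S_{k,s}(\theta)$ on $\ket{\zeta}$ can collapse modulo $2\pi$, and similar collapses can occur in the $m>2$ analysis, so these cases may require a separate argument that does not reduce cleanly to Theorem~\ref{thm:zimboras}.
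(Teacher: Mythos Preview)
Your approach is essentially the paper's: invoke Theorem~\ref{thm:zimboras}, treating $m>2$ and $m=2$ separately. For $m>2$ the paper simply cites the assertion $S_{k,s}(\theta)\not\in U_{n,m}$ from Section~\ref{sec:snap} and is done, whereas you sketch an explicit diagonal-matching argument; for $m=2$ both you and the paper compute the phases that $S_{k,s}(\theta)^{\otimes 2}$ imposes on the summands of $\ket{\zeta}$ and argue they are not all equal. The paper splits the $m=2$ computation into $n\neq 2k$ and $n=2k$, obtaining $S_{k,0}(\theta)^{\otimes 2}\ket{\zeta}-\ket{\zeta}$ proportional to $(e^{i\theta}-1)$ and $(e^{2i\theta}-1)$ respectively, which is exactly your phase function $f(a)=\mathbf{1}[a=k]+\mathbf{1}[a=n-k]$ repackaged.

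The edge case you flag, $m=2$ with $n=2k$ and $\theta\equiv\pi\pmod{2\pi}$, is genuine and is not handled by the paper's proof either: there $f(a)\in\{0,2\}$, so every summand of $\ket{\zeta}$ receives phase $1$, the commutator in \eqref{eq:zeta comm} vanishes, and Theorem~\ref{thm:zimboras} gives no information. In fact for $n=m=2$, $k=1$, $\theta=\pi$ one checks directly that $S_{1,0}(\pi)=Z^{\otimes 2}|_{\mathcal{H}}\in U_{2,2}$, so the generated group is just $U_{2,2}$ and the statement as written fails at that point. Your caution about this case is therefore well placed, and no ``separate argument'' can rescue it without an added hypothesis.
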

\begin{proof}
If $m>2$, Theorem~\ref{thm:zimboras} and the fact that $S_{k,s}(\theta)\not\in U_{n,m}$ implies universality. 
Thus we consider the case $m=2$. Without loss of generality, we consider the SNAP gate on mode $0$, $S_{k,0}(\theta)$. We now directly compute $S_{k,0}(\theta)^{\otimes 2}\ket{\zeta}$ and show that it is not a multiple of $\ket{\zeta}$, implying \eqref{eq:zeta comm}. 
For technical reasons, we begin with the assumption $n\neq 2k$, so that $k\neq n-k$. We have
\begin{align}
    S_{k,0}(\theta)^{\otimes 2}\ket{\zeta} &= \left(\exp(i\theta \ketbra{k}_0)\otimes \exp(i\theta \ketbra{k}_0) \right)\sum_{a=0}^n (-1)^a \ket{a, n-a}\otimes\ket{n-a, a}
    \\&= \sum_{a\neq k,n-k} (-1)^a \ket{a, n-a}\otimes\ket{n-a, a} + e^{i\theta}\sum_{a= k,n-k} (-1)^a \ket{a, n-a}\otimes\ket{n-a, a}.
\end{align}
Then 
\begin{align}
    S_{k,0}(\theta)^{\otimes 2}\ket{\zeta} - \ket{\zeta} = (e^{i\theta}-1)\sum_{a= k,n-k} (-1)^a \ket{a, n-a}\otimes\ket{n-a, a},
\end{align}
which is not a multiple of $\ket{\zeta}$ as long as $\theta\not\equiv 0\mod 2\pi$. 
If $n=2k$, we similarly obtain 
\begin{align}
    S_{k,0}(\theta)^{\otimes 2}\ket{\zeta} - \ket{\zeta} = (e^{2i\theta}-1) (-1)^k \ket{k, k}\otimes\ket{k,k}. 
\end{align}
Theorem~\ref{thm:zimboras} then gives the result. 
\end{proof}

\end{document}